\newtheorem{theorem}{Theorem}
\newtheorem{corollary}[theorem]{Corollary}
\newtheorem{lemma}[theorem]{Lemma}
\newtheorem{remark}[theorem]{Remark}
\newenvironment{proof}[1][Proof]{\noindent\textbf{#1.} }{\ \rule{0.5em}{0.5em}}
\newcommand\tab[1][1cm]{\hspace*{#1}}
\begin{document}

\date{}

\title{A new and improved algorithm for online bin packing\footnote{Gy. D\'{o}sa was supported by VKSZ\_12-1-2013-0088 ``Development of cloud based
smart IT solutions by IBM Hungary in cooperation with the
University of Pannonia'' and by National Research, Development and
Innovation Office -- NKFIH under the grant SNN 116095. L. Epstein
and A. Levin were partially supported by a grant from GIF - the
German-Israeli Foundation for Scientific Research and Development
(grant number I-1366-407.6/2016).}}

\author{J\'anos Balogh \thanks{Department of Applied Informatics, Gyula Juh\'asz Faculty of Education,
     University of Szeged, Hungary. \texttt{balogh@jgypk.u-szeged.hu}} \and J\'ozsef B\'ek\'esi \thanks{Department of Applied Informatics, Gyula Juh\'asz Faculty of Education,
     University of Szeged, Hungary. \texttt{bekesi@jgypk.u-szeged.hu}} \and Gy\"{o}rgy
D\'{o}sa\thanks{Department of Mathematics, University of Pannonia,
Veszprem, Hungary, \texttt{dosagy@almos.vein.hu}. }   \and Leah
Epstein\thanks{ Department of Mathematics, University of Haifa,
Haifa, Israel. \texttt{lea@math.haifa.ac.il}. } \and Asaf
Levin\thanks{Faculty of Industrial Engineering and Management, The
Technion, Haifa, Israel. \texttt{levinas@ie.technion.ac.il.}}}

 \maketitle

\begin{abstract}We revisit the classic online bin packing problem.
In this problem, items of positive sizes no larger than $1$ are
presented one by one to be packed into subsets called {\it bins}
of total sizes no larger than $1$, such that every item is
assigned to a bin before the next item is presented. We use online
partitioning of items into classes based on sizes, as in previous
work, but we also apply a new method where items of one class can
be packed into more than two types of bins, where a bin type is
defined according to the number of such items grouped together.
Additionally, we allow the smallest class of items to be packed in
multiple kinds of bins, and not only into their own bins. We
combine this with the approach of packing of sufficiently big
items according to their exact sizes. Finally, we simplify the
analysis of such algorithms, allowing the analysis to be based on
the most standard weight functions. This simplified analysis
allows us to study the algorithm which we defined based on all
these ideas. This leads us to the design and analysis of the first
algorithm of asymptotic competitive ratio strictly below 1.58,
specifically, we break this barrier by providing an algorithm AH
(Advanced Harmonic) whose asymptotic competitive ratio does not
exceed 1.57829.\end{abstract}

\section{Introduction}
Bin packing \cite{CoGaJo97,CsiWoe98} is the problem of
partitioning or packing a set of items of rational sizes in
$(0,1]$ into subsets of items, which are called bins, of total
sizes no larger than $1$. In the offline variant the list of items
is given as a set, and in the online environment items are
presented one by one and each item has to be packed into a bin
irrevocably before the next item is presented.

For an algorithm $A$, we denote its cost, that is, the number of
used bins in its packing on an input $I$ by $A(I)$. The cost of an
optimal solution $OPT$, for the same input, is denoted by
$OPT(I)$. The {\it asymptotic approximation ratio} allows to
compare the costs for inputs for which the optimal cost is
sufficiently large. The asymptotic approximation ratio of $A$ is
defined as follows. $ R_{A} =
  \lim\limits_{N \rightarrow \infty} \left(
  \sup\limits_{I: OPT(I) \geq N}
  \frac{A(I)}{OPT(I)} \right)  \,$.
In this paper we only consider the asymptotic approximation ratio,
which is the common measure for bin packing algorithms. Thus we
use the term approximation ratio throughout the paper, with the
meaning of asymptotic approximation ratio. Moreover, the term {\it
competitive ratio} often replaces the term ``approximation ratio''
in cases where online algorithms are considered. We will use this
term for the asymptotic measure. When we discuss the absolute
measure $\sup_{I} \frac{A(I)}{OPT(I)}$ (the absolute approximation
ratio or the absolute competitive ratio), we will mention this
explicitly. A standard method for proving an upper bound for the
asymptotic approximation ratio or the asymptotic competitive ratio
for an algorithm $A$ is to show the existence of a constant $C
\geq 0$ independent of the input, such that for any input $I$,
$A(I) \leq R \cdot OPT(I) +C$ (and then the value of the
asymptotic measure is at most $R$). Most work on upper bounds on
the asymptotic competitive ratio provide in fact an upper bound
using this last method, and we will follow this approach as well.

For the offline problem, algorithms with an approximation ratio of
$1+\varepsilon$ can be designed \cite{FVL81,KK82} for any
$\varepsilon>0$. If the first definition is used, a
$1$-approximation can be designed \cite{KK82}, where the cost of
the solution computed by the algorithm is $OPT(I)+o(OPT(I))$ (see
also recent work on improving the sub-linear function of $OPT(I)$
by Rothvoss \cite{Rothv} and by Hoberg and Rothvoss \cite{HR17}).

The classic bin packing problem, which we study here, was
presented in the early 1970's \cite{Ullman71,J73,J74,JDUGG74}. It
was introduced as an offline problem, but many of the algorithms
initially proposed for it were in fact online. Johnson
\cite{J73,J74} defined and analyzed the simple algorithm Next Fit
(NF), which tries to pack the next item into the last bin that was
used for packing, if such a bin exists (in which case such a bin
is called ``active'') and the item can be packed there, and
otherwise it opens a new bin for the item. The competitive ratio
of this algorithm is $2$ \cite{J73,J74}. Any Fit (AF) algorithms,
as opposed to the behavior of NF which only tests at most one
active bin for feasibility of packing a new item there, pack a new
item into a nonempty bin unless this is impossible (in which case
a new bin is opened). Such algorithms have competitive ratios of
at most $2$. Next, consider a sub-class of algorithms where one
may not select a bin with smallest total size of currently packed
items for packing a new item, unless this minimum is not unique or
this is the only bin that can accommodate the new item except for
an empty bin. The last class of algorithms is called Almost Any
Fit (AAF), and they have competitive ratios of $1.7$
\cite{JDUGG74,J74}. A well-known algorithm, which is in fact a
special case of AAF is Best Fit (BF), which always chooses the
fullest bin where the new item can be packed. First Fit (FF) is
another important special case of AF (but not of AAF) which
selects a minimum index bin for each new item (where it can be
packed). The competitive ratio of FF is 1.7 \cite{JDUGG74,DS12}.

The pre-sorted versions of these algorithms, called NFD, FFD, BFD,
and AFD, were studied as well. In these versions, items are still
presented one by one, but they are sorted in a non increasing
order (according to sizes). For example, the approximation ratio
of NFD is (approximately) 1.69103 \cite{BC81} and that of FFD is
$\frac{11}{9}\approx 1.22222$ \cite{J73}. For AFD in general, the
approximation ratio is at most $1.25$ \cite{J73,J74,JDUGG74}.
These pre-sorted variants are not online algorithms.

We design and analyze a new algorithm  AH (Advanced Harmonic) for
online bin packing, and show that its competitive ratio does not
exceed $1.57828956$. This is the first algorithm whose asymptotic
competitive ratio is below 1.58. We use a new type of analysis of
algorithms which allows us to split the analysis into cases, while
for every case we define only three different values (and even
just one value in a large number of cases), and based on those we
calculate weights for items. The analysis is split into cases in
recent previous work as well, but the analysis of each case is
much more difficult. Items are partitioned into classes according
to sizes. As in previous work, we sometimes do not pack the
maximum number of items of some class into a bin, and leave space
for items of another class (possibly arriving later). One new
feature of AH is that in previous papers, in the algorithms there
were at most two options for every class. For any given class, one
option was a bin with the maximum number of items of a this class
fitting into a bin. For some of the classes there was a second
option consisted of a very small number of items from this class
(with reserved spaces for items of another class, possibly
arriving later). We allow intermediate values as well with more
than two options for some classes and not only two kinds of bins
for a given class.

We use simple weight functions for the analysis, rather than the
much more complicated tool called {\it weight systems}
\cite{Seiden02J}. Weight functions are an auxiliary tool used for
the analysis of bin packing (and other) algorithms (this technique
is also called dual fitting). In this method, a weight is defined
for each item (usually, based on its size, and sometimes it is
also based on its role in the packing). If there are multiple
kinds of outputs, it is possible to define a weight function for
each one of them. The total weight of items is then used to
compare the numbers of bins in the output of the algorithm and in
an optimal solution. The list of weights of one item for different
output types, also called scenarios, can be seen as a vector
associated with the item. Thus, the weights can be seen as one
function from the items to vectors whose dimension is the number
of scenarios. Briefly, a weight system is a generalization where
the weight function also maps items (or item sizes) to vectors,
but in order to compute the weight of some item for a given
scenario, another function, called a consolidation function, is
used. This last function is a piecewise linear function (mapping
real vectors to reals). The slightly simplified approach is to use
convex combinations of weights according to subsets of scenarios.
It has not been proved that weight systems are a stronger tool
than just weights defined for the different scenarios. However,
for simple weights every scenario can be analyzed independently
from other scenarios. We exploit the simplicity of weight
functions to obtain a clean and full analysis, which is easier to
implement and verify (compared to the analysis resulting from
weight systems). The main advantage is that every case is analyzed
in a separate calculation using a standard knapsack solver without
considering any other cases at that time. This simplicity allows
us to analyze the new features that we introduce. Obviously, as
these are cases for one algorithm, they have a common set of
parameters, but once the algorithm has been fixed, there is no
connection between the various cases.

The significance of our approach is that we combine many existing
methods, including that of Babel et al. \cite{BCKK04} (recently
used by Heydrich and van Stee \cite{HvSbparxiv,HvS16} for classic
bin packing), adding several new features, and applying a simple
analysis, which can be verified easily. We define the action of
our algorithm AH, we prove a number of invariants and properties
of AH in detail, and then we provide the specific parameters and
compact representations of the lists of weights. For every
possible output type and scenario, there is a small number of
values used for the calculation of weights for it. We also provide
explicit lists of weights calculated based on the values and the
parameters.

To explain the new features of our work, we start with a
discussion of the design of harmonic type algorithms. Already in
much of the previous work on online algorithms for bin packing,
items were partitioned into classes by size. The simplest such
classification is based on harmonic numbers, leading to the
Harmonic algorithm of Lee and Lee \cite{LeeLee85}. In the harmonic
algorithm of index $k$ (for an integer parameter $k \geq 2$),
subset $j$ is the intersection of the input and $(\frac
1{j+1},\frac1{j}]$ (where $1 \leq j \leq k-1$), and subset $k$ of
tiny items is the intersection of the input and  $(0,\frac 1k]$.

In these algorithms each subset is packed independently from other
subsets using NF (so for $j\leq k-1$, any bin for subset $j$,
except for possibly the last such bin, has $j$ items, but for
subset $k$, every bin except for the last bin for this subset has
a total size of items above $\frac{k-1}k$), and for $k$ growing to
infinity, the resulting competitive ratio is approximately
$1.69103$ \cite{LeeLee85}. The drawback of those algorithms is
that bins of subsets with small values of $j$ can be packed with
small sizes of items (for example, a bin of subset $2$ may have
total size just above $\frac 23$ and a bin of subset $1$ may have
just one item of size just above $\frac 12$).

The first idea which comes to mind is to try to combine items of
those two subset into common bins. However, if items of class $2$
arrive first, one cannot just pack them one per bin, as this
immediately leads to a competitive ratio of $2$ (if no items of
subset $1$ arrive afterwards). Lee and Lee \cite{LeeLee85}
proposed the following method to overcome this. A fixed fraction
of items of subset $2$ (up to rounding errors) is packed one per
bin and the remaining items are packed in pairs. Thus, there are
two kinds of bins for subset $2$. The items we refer to here can
only be sufficiently small items, so there is a threshold $\Delta
\in (\frac 12,\frac 23)$ such that items of sizes in $(\Delta,1]$
and $(1-\Delta,\frac 12]$ are packed as before, while the
algorithm tries to combine an item of size in $(\frac 13,
1-\Delta]$ with an item of size in $(\frac 12,\Delta]$. Even if
those two items (one item of each one of the two intervals) are
relatively small, still their total size is above $\frac 56
\approx 0.83333$. This last algorithm was called Refined-Harmonic,
and its competitive ratio is smaller than 1.636. Ramanan et al.
\cite{RaBrLL89} designed two algorithms called Modified Harmonic
and Modified Harmonic-2. The first one has a competitive ratio
below $1.61562$, and it allows to combine items of many subsets
with items of sizes above $\frac 12$ (and at most $\Delta$). The
second algorithm does not use only a single value of $\Delta$, but
splits the interval $(\frac 12,1]$ further, allowing additional
kinds of combinations. Its competitive ratio is approximately
$1.612$. For most subsets of items (where $k$ is chosen to be in
$[20,40]$ in all these algorithms), the last two algorithms pack
some proportion of the items in groups of smaller sizes, to allow
it to be combined with an item of size above $\frac 12$.
Intuitively, for an illustrative example, assume that
$\Delta=0.6$, and consider the items of sizes in $(\frac
1{11},\frac 1{10}]$. The items that are not packed into groups of
ten items should be packed into groups of four items (the
parameters of the algorithms are different from those of this
example). For some of the subsets the proportion is zero, and they
are still packed using NF.  The drawback of such algorithms (as it
is exhibited by Ramanan et al. \cite{RaBrLL89}) is that no matter
how many thresholds there are, there can be pairs of items that
can be combined into bins of optimal solutions while the algorithm
does not allow it as it has fixed thresholds. Specifically, such
algorithms allow to combine items of different intervals only in
the case that the largest items of the two intervals fit together
into a bin. This is the case with the next two harmonic type
algorithms as well.

The next two papers, that of Richey \cite{Ric} and that of Seiden
\cite{Seiden02J} deal with a more complicated algorithm where many
more subsets can be combined. The general structure is proposed in
\cite{Ric}, and a full and corrected algorithm with its analysis
is provided in \cite{Seiden02J}. For illustration, the items
packed into smaller groups are called red and those packed into
bins with maximum numbers of items of the subset are called blue.
The goal is to combine as many bins with blue items with bins
having red items as possible. Bins with red items always have
small numbers of items, to allow them to be combined with
relatively large items of sizes above $\frac 12$. The analysis is
far from being simple, though it leads to a competitive ratio of
at most 1.58889 (Heydrich and van Stee \cite{HvSbparxiv,HvS16}
mention that this last value can be decreased very slightly).

The carefully designed subset structure eliminates many worst-case
examples, but the drawback mentioned above still remains.
Recently, Heydrich and van Stee \cite{HvSbparxiv,HvS16} proposed
to use a method introduced by Babel et. al \cite{BCKK04}, where
some items are packed based on their exact size rather than by
their subset. The approach of \cite{HvSbparxiv,HvS16} which we
adopt is to apply the methods of Babel et. al \cite{BCKK04} on the
largest items, of sizes in $(\frac 13,1]$. This approach means to
combine items of sizes above $\frac 12$ with items of sizes in
$(\frac 13, \frac 12]$ based on their exact sizes. Moreover, the
approach involves combining pairs of items of subsets of sizes
contained in $(\frac 13, \frac 12]$ while keeping the smallest
items of such a subset
  to be matched with items of sizes
above $\frac 12$ (and larger items of such a subset are used to be
packed into pairs), as much as possible. Prior to the work of
\cite{HvSbparxiv,HvS16}, all previous algorithms for classic bin
packing that partition items into classes always assumed that an
item of a certain subset has the maximum size when its possible
packing was examined. This method simplifies the algorithm and its
analysis, but it is not always a good strategy as this excludes
the option of combining items that can fit together into a bin in
many cases. This approach is very different from that of AF
algorithms and even from NF. Moreover, an approach similar to that
of Babel et. al \cite{BCKK04} was used in an online algorithm
designed in \cite{ftsoda}. Heydrich and van Stee
\cite{HvSbparxiv,HvS16} claim a competitive ratio of 1.5815 (see a
discussion regarding this in Appendix \ref{rvssh}).

In algorithm AH, we do not just have red and blue items, but we
potentially allow several kinds of bins. For example, for the
subset of items of sizes in $(\frac 1{15},\frac 1{14}]$ we group
items into subsets of $14$ items or three items or just one item.
We also use bins of the smallest items (our value of $k$ is $43$)
where the total size of items is at most $\frac{17}{60}$, to allow
them to be combined (among others) with items of sizes in $(\frac
12,\frac{43}{60}]$. These two features are possible due to the
simple nature of our analysis, and they are crucial for getting
the improved bound. Note that all items of sizes in $(0,\frac
1{43}]$ are treated together (by the algorithm and its analysis).

In order to use just a small number of values (one or three) for
each scenario, we use the concept of {\it containers}. A container
is a set of items of one class (in the partition of potential
inputs into items of similar sizes, called classes), and it can be
complete if its planned number of items has arrived already or
incomplete otherwise (but it is treated in the same way in both
cases). Containers are of two types, where a container is either
positive or negative, and a bin may contain at most one of each of
them. The goal is to have as many bins as possible with both a
positive and a negative container. Roughly speaking, positive
containers have total sizes above $\frac 12$ and negative
containers have total sizes of at most $\frac 12$. This last
statement is imprecise as in most cases we consider volumes and
not exact sizes, where volumes are based on the maximum sizes for
the corresponding classes. There is one exception which is
containers with one item of size above $\frac 13$, where the exact
size is taken into account (both by the algorithm and the
analysis), and it is defined to be the volume. A positive
container and a negative one fit together if their total volumes
does not exceed $1$, and does not depend only on the classes. Our
positive containers and negative containers have some relation to
concepts used in \cite{Seiden02J}.

In our weight based analysis, we assign weights to containers,
where the number of different weights is small. Specifically, let
the minimum volume of any positive container not packed with a
negative container  be denoted by $a$. We have two cases. In the
simple case where all positive containers packed without negative
containers have volumes of at least $\frac 23$ (i.e., $a\geq \frac
23$), we define weights as follows. Assign weights of $1$ to
positive containers packed without negative containers and
negative containers packed without positive containers. Since we
later base our weights of items on sizes, we assign these weights
of $1$ to all positive containers of volume at least $a$ and all
negative containers of volumes above $1-a$. We have a variable $w$
($0 \leq w \leq 1$) such that other positive containers have
weights of $w$ and other negative containers have weights of
$1-w$. Those weights are called the required weights of containers
(the actual weights can be larger but not smaller). Given the
approximate proportions of items of each class packed in every
type of container, we compute a weighted average (based on the
containers of every item) to define weights of items using the
required weights of containers. The case where $a < \frac 23$ is
more interesting as a negative container with one item of size in
$(\frac 13,\frac 12]$ and a positive container with one item of
size above $\frac 12$ can be packed into one bin if the total size
of the two items does not exceed $1$ (i.e., the volumes of their
containers are the exact sizes of these two items). Thus, the
exact value $a$ is crucial and not only its class, and
additionally the class and even the exact value of $1-a$ play an
important role. Here, for other classes we do the same as in the
previous case, but for one class we perform a more careful
analysis. This is the class containing the value $1-a$. For this
class we define weights of items directly. We let the weight of an
item of this class of size at most $1-a$ be a variable $u$, and
otherwise it is a variable $v$, where $v \geq u$ (this class is
contained in $(\frac 13,\frac 12]$). For the analysis, we found
suitable values for the variables for all scenarios (this was done
separately for each scenario), that is, for all possible values of
$a$ (the number of scenarios is still finite, as they are based on
the dividing points of the algorithm, though not only on the
classes). For every scenario where $a<\frac 23$, there are
additional constraints on $u$, $v$, and $w$. As we do not use
weights of containers in this case (for the class containing
$1-a$), while the packing of pairs of items of classes contained
in $(\frac 13, \frac 12]$ is performed carefully for all such
classes. After selection suitable values for those variables, all
other item weights are also computed using the parameters of the
algorithm.

It should be noted that there are also improved algorithms based
on First Fit. Yao \cite{Yao80A} designed an algorithm where
certain size based subsets are packed separately, resulting in a
competitive ratio of $\frac 53$. Many years later, an algorithm of
absolute competitive ratio $\frac 53$ was designed \cite{ftsoda},
which is the best possible with respect to this last measure
\cite{Zhang}. The absolute competitive ratios and approximation
ratios of other bin packing algorithms were studied as well
\cite{SL94,DS12,DS14}. The (asymptotic) competitive ratios should
be compared to lower bounds on the competitive ratio. The current
best such lower bound is 1.5403 \cite{BBG} (see also
\cite{Vliet92}).

\section{Notation and definitions}
Similarly to previous algorithms' definitions, AH has a sequence
of boundary points that are used in its precise definition: $1=
t_0>t_1= \frac 12> t_2> \cdots
> t_b= \frac 13
> \cdots > t_M>t_{M+1}=0$. That is $1/2$ and $1/3$ are always
boundary points, and there is no boundary point in $(1/2,1)$.

For every $j$, all items of sizes in the interval $(t_j,t_{j-1}]$
are called items of class $j$.  We say that a class of items (and
every item of this class) is {\it huge} if $j=1$, it is {\it
large} if $1<j\leq b$ (these are all items of sizes above $1/3$
and at most $1/2$), small if $b<j\leq M$, and tiny if this is the
class of items of size at most $t_M$ (i.e., the last class which
is the class of tiny items is class $M+1$, and in general the
index of a class corresponds to the index $j$ such that $t_j$ is
the infimum size of any item of the class).

Our algorithm will pack items into containers and pack containers
into bins. As the algorithm is online, a container will be packed
into a bin immediately when it is created, even though it may
receive additional items later. In the last case, when we say that
an item is packed into a container, this means that the bin
containing the container receives that item. Any container will
contain items of a single class, and at most two different
containers can be combined (packed) into a bin. We provide
additional details on combining two containers into a bin later.
Every container of items that are not tiny has a cardinality
associated with it, and this is the (maximum) number of items that
it is supposed to receive.

Let $\gamma_j = \lfloor \frac{1}{t_{j-1}} \rfloor$ for $j \leq M$.
For class $j$ that is either large or small (but not huge or tiny,
i.e., for values of $j$ such that $2\leq j \leq M$ holds), and for
every $i$ (where $1 \leq i \leq \gamma_j$) there is a nonnegative
parameter $\alpha_{ij}$, where $0 \leq \alpha_{ij} \leq 1$. The
values $\alpha_{ij}$ will denote the proportions of container
numbers of cardinalities $i$ of class $j$ items among the number
of container of class $j$ (the term proportion corresponds to the
property of the sum of proportions satisfies $\sum_{i} \alpha_{ij}
=1$ for all $j$). Such containers that will eventually receive $i$
items of class $j$ (unless the input terminate before this becomes
possible) will be called {\it type $i$ containers of class $j$}.
That is, intuitively if we let $x$ denote the number of containers
for items of class $j$, we will have approximately $\alpha_{ij}
\cdot x$ type $i$ containers each of which having exactly $i$
items of class $j$.  For every $j$ such that $2\leq j\leq M$ and
every $i$, we let $A_{i,j}=i\cdot t_{j-1}$.  While the values
$\alpha_{ij}$ are defined so far only for large and small classes,
we see one huge item as a type 1 container. Note that the values
of $\alpha_{ij}$ are not proportions of item numbers but of
container numbers for class $j$, and the resulting proportions of
items can be computed from them (we will prove such bounds
accurately later).

For classes of {\it large} items the notion of the cardinality of
a container is slightly more delicate, and we will have exactly
four possible types of containers. The first type is a {\it
regular type 2} container (already) containing exactly two items
of this class. The second type is a {\it declared type 2
container}, where this type consists of containers for which the
algorithm already decided to pack two items of this class in the
container (so the planned cardinality of the container is $2$) but
so far only one such item was packed into the container (one of
the few next arriving items of this class, if they exist, will be
packed there, in which case the type will be changed into a
regular type 2 container). The third is a {\it regular type 1}
container, where such a container has one item of the class and
cannot ever have (in future steps) an additional item of this
class (such a container will be always already combined with a
container of another class that is packed into the same bin). The
fourth and last type of a container of large items is a {\it
temporary type 1 container}. A container of this last type
currently has one item of the class but sometimes it will get an
additional item of this class in future steps (and in this case
its type will be changed at that time to regular type 2, its type
can change to declared type 2 or regular type 1 as well, but in
those cases it does not happen as a result of receiving a new
item). Given a class of large items, the number of declared type 2
containers will be at most four throughout the execution of the
algorithm (as we will prove below) while the numbers of containers
of type 1 (of both kinds) and containers of regular type 2 can
grow unbounded as the length of the input grows, though we will
show certain properties on the relations between their numbers
maintained by the algorithm. The set of the union of containers of
regular type 2 and declared type 2 are called type 2 containers,
and the set of the union of containers of regular type 1 and
temporary type 1 are called type 1 containers. The parameters
$\alpha_{1j}$ and $\alpha_{2j}$ of a large class $j$ determine the
approximate proportions of type 1 containers and type 2
containers, respectively.

For class $M+1$ (of the tiny items), instead of the definitions
above, there is a sequence of $p$ possible upper bounds on the
total sizes of items packed into containers of this class: $1 \geq
A_{p,M+1}>A_{p-1,M+1}> \cdots > A_{1,M+1} \geq t_M$, and we let
the positive parameters $\alpha_{i,M+1}>0$ for $i=1,\ldots,p$
denote the proportion of numbers of containers of class $M+1$ with
items of total size in the interval $(A_{i,M+1}-t_M,A_{i,M+1}]$
(this is the planned total size of items for such a container).
Such containers will be called {\it type $i$ containers of class
$M+1$}.

The {\em volume} of a container of type $i$ of class $j$ is
defined as follows: If $i=1$ and $1 \leq j\leq b$ (that is, for
items of sizes above $1/3$), the volume of the container is the
size of its (unique) item, and otherwise ($i=2$ and $2 \leq j \leq
b$ or $i\geq 1$ and $j>b$) it is $A_{i,j}$. That is, the volume is
usually simply the largest total size that the container can
occupy, but for a container that contains a single large or huge
item, the volume is the {\em exact} size of the item (there is one
exception where the bin already contains one large item and it is
planned to contain another item of the same class). In most cases
we would like the volume of a container to be known when it is
created, which is possible for containers such that their planned
contents are known (in the sense that for example type $i$
containers of a non tiny class $j$ are planned to contain $i$
items finally). However, for large items such containers with a
single item may be temporary type 1 containers, in which case
there is still no planning of contents for them. In this last
case, the volume of the container is the size of its unique item.
However, the volume of such a container may change in the case the
algorithm will decide to pack another item of the same class (no
matter if it packs that other item immediately at the time of
decision or whether we decide to pack such an item later) into
this container and transform it into a type 2 container. The
volume of a declared type 2 container of class $j$ is
$A_{2,j}=2\cdot t_{j-1}$ (the volume is based on its complete
contents, no matter whether they are present already or not, as it
is the case for classes of small or tiny items).


We say that a container is {\em negative} if its volume is at most
$1/2$ and otherwise it is {\em positive}. Obviously, two positive
containers cannot be packed into one bin. We will also not pack
two or more negative containers into a bin together. Thus, a bin
containing two containers will contain one positive container and
one negative container, and no bin will contain more than two
containers.

\section{Algorithm AH}
The algorithm AH which we define next will pack items into
containers and pack containers into bins according to rules we
will define. Recall that the packing of containers into bins will
be such that every bin will have at most one positive container
and at most one negative container. Obviously, a bin is nonempty
if it has at least one container and at most two containers. We
say that a nonempty bin is negative if it has a negative container
and does not have a positive container, it is positive if it has a
positive container and does not have a negative container, and it
is neutral if it has both a negative container and a positive
container.

It is unknown whether a temporary type 1 container will eventually
be positive or negative. Therefore, such a container will not be
combined in a bin with another container as long as its type is
not changed. Moreover, it is seen as a negative container until it
changes its type (so its bin is negative as long as the container
is of temporary type 1). Specifically, it remains a negative
container if a positive container joins it (and its bin becomes
neutral), and in this case it becomes a regular type 1 container
(and remains negative), and it becomes a positive container if its
type changes to type 2. It can also happen that a temporary type 1
container will remain such till the termination of the input and
the action of AH (and its bin remains negative). It is important
to note that the difference between regular type 1 containers of a
large class and temporary type 1 containers of the same class is
that each of the former containers is already packed into a bin
with a positive container (of some class), while the latter are
not packed with other containers (in fact, the corresponding items
are placed into their own bins, one item per bin).

For every class $j$, we denote by $n_j$ the number of containers
of class $j$. Let $n_{ij}$ denote the number of containers of type
$i$ of class $j$. We also let $N_j$ denote the number of items of
class $j$ at that moment.  We often consider the values $n_j$ and
$n_{ij}$ just prior to the packing of a new item, when $N_j$ was
already increased but the new item not packed yet so the values
$n_j$ and $n_{ij}$ are not updated yet.

We say that two containers {\em fit together} if their total
volume is at most $1$. In what follows, when we refer to packing
an item $e$ - or more precisely, packing a container containing
$e$ (which was just created and therefore contains only $e$) into
existing bins using Best Fit -  we refer to packing $e$ (or the
container containing $e$) into the bin with a container of largest
volume where the existing container and $e$ (or the container
containing $e$) fit together. For the original version of Best
Fit, actual sizes are taken into account, but here we base this
rule on volumes (as for a container with a single large or huge
item the volume is equal to the size of the item, if we select one
such container among a set of this last kind of containers, our
action is equivalent to the standard application of Best Fit).

Next, we define the packing rules of the algorithm when a new item
of class $j$ arrives. The algorithm is defined for each step,
based on the class of the new item.

{\bf A huge item.} Recall that a huge item is immediately packed
into a positive container containing only this item. Use Best Fit
(applied on volumes, as explained above) to pack the created
container into an existing bin, out of existing negative bins,
such that the two containers (the new one with the huge item and
the negative one of the negative bin) fit together. The only case
where the new huge item joins a bin with a large item of some
class $j'$ is the case where the container of class $j'$ is a
temporary type 1 container, and in this case the type of this
container of class $j'$ is changed into regular type 1. If no bin
can accommodate the container of the new item according to those
packing rules, that is, for every negative bin, the total volume
together with the new item is too big (or there is no negative bin
at all), then use a new bin for the positive container of the new
item (this new bin becomes a positive bin).

{\bf An item of a class of small or tiny items.} For these classes
we define the concept of an open container. Informally, an open
container (of class $j$) can receive at least one additional item
of class $j$. As a new container is introduced in order to pack an
item, any container (of any type and class) already has at least
one item of the corresponding class. If $b<j\leq M$, an open type
$i$ container of class $j$ is one where the total number of the
items in the container is strictly smaller than $i$. Once such a
container receives $i$ items, it is closed. For $j=M+1$, a type
$i$ container of this class will be open starting the time it is
created and while the total size of items in it is positive and at
most $A_{i,M+1}-t_M$. Once it reaches a total size above
$A_{i,M+1}-t_M$, it will be closed. For all cases of packing a
small or tiny item, a new container of some class will be used
only if there is no open container of the same class, and thus, in
particular, there will be at most one open container for each $j$
(and the corresponding value of $i$ will always be one such that
$\alpha_{ij}>0$).

When a new item of class $j$ (such that $j>b$) arrives, if there
is an open container of some type $i$ of class $j$, then pack the
item there (there can be at most one such container, so there are
no ties in this case). Otherwise, open a new container for it (the
details of the type are given below). After packing the new item
into the container (and packing its container into a bin if it is
a new container), close the container if necessary, based on its
type and the rules above.

In the case that a new container is used for the item, we define
the process of packing the item in more detail. Prior to packing
the item, we define the type of the new open container. As the
item is not packed yet, $n_j$ is the number of containers of class
$j$ excluding the container opened for the new item. Find the
minimum value of $i$ such that $\alpha_{ij}>0$ and so far there
are at most $\lfloor \alpha_{ij} \cdot n_j \rfloor$ type $i$
containers of class $j$ (i.e., $n_{ij}\leq \lfloor \alpha_{ij}
\cdot n_j \rfloor$, where the values $n_{ij}$ do not include the
new container which will be opened). Such an index $i$ exists as
otherwise there are more than $n_j$ containers of class $j$. More
precisely, since $\sum_i \alpha_{i,j}=1$, there is always a value
of $i$ satisfying that $\alpha_{ij}>0$ such that so far we opened
at most $\lfloor \alpha_{i,j} \cdot n_j\rfloor$ type $i$
containers of class $j$. Open a new type $i$ container of class
$j$ containing the new item (increasing both $n_j$ and $n_{ij}$).
Observe that this opening of a new container defines its volume as
well as whether it is a positive container or a negative
container.

Next, we decide where to pack this new container.  First consider
the case where this container is a negative container. Then, if
there is a positive bin, such that the new container fits into the
bin according to its volume, then use that bin to pack the new
container. This last case includes the possibility that the
positive container is a type 2 container of a large class (regular
or declared). If there are multiple options for choosing a bin,
one of them is chosen arbitrarily.

Otherwise (there is no positive bin where the new negative
container can be added), the algorithm checks the option of using
a bin with a temporary type 1 container of some class of large
items. Assume that there is a negative bin $B$ such that the
following two conditions are satisfied. The first condition is
that the bin $B$ has a temporary type 1 container of class $j'$
such that a positive container of class $j'$ (with two items) will
fit together with the new (negative) container. The second
condition is that there are at most $\lfloor \alpha_{2j'}\cdot
n_{j'}\rfloor - 1$ type 2 containers of class $j'$ (before the
packing of the new item is performed). Then, pack the new negative
container into $B$, and define the container of class $j'$ packed
into $B$ as a declared type $2$ container. This last container of
class $j'$ will get one of the next items of class $j'$ that will
arrive, which will happen before any new container is opened for
any new class $j'$ item, see below. If there are multiple options
for choosing $B$, one of the classes of large items is chosen
arbitrarily (among those that can be used), and a temporary type 1
container of this class with maximum volume is selected, i.e., we
use Best Fit in this case. This last packing step is possible as a
temporary type 1 container is never packed with another container
into a bin (if another container joins it, its type is changed).

Otherwise (if there is no suitable positive bin and no class of
large items has a suitable temporary type 1 container that can be
used under the required conditions), pack the new negative
container into a new bin.

Finally, consider the case where the new container is a positive
container. Then, if there is a negative bin whose container is not
a temporary type 1 container, such that the new container fits
together with it, then use such a bin to pack the new container.
Otherwise, if there is a temporary type $1$ container with one
large item of a class $j'$ where the new container fits, then pack
the new positive container into this bin and define the container
of class $j'$ in this bin as a regular type $1$ container. The
class $j'$ can be chosen arbitrarily if there are multiple
options, and among the temporary type 1 containers of class $j'$,
one of maximum volume (out of those that can be used) is selected,
i.e., once again we use Best Fit. Otherwise, pack the new positive
container into a new bin.

{\bf A large item of a class $\boldsymbol{j}$.} If there is a
declared type 2 container of class $j$, pack the item there (as a
second item) and change it into a regular type 2 container
(breaking ties arbitrarily). This packing rule is checked first,
and we apply it whenever possible. We continue to the other cases
in the situation where there is no such declared type 2 container.

If the number of type 2 containers equals $\lfloor
\alpha_{2j}\cdot n_{j}\rfloor$ (that is, we should not increase
the number of type 2 containers at this stage), then pack the new
item into a new negative container.  To pack the container into a
bin, do as follows. If there is a positive bin where the new
negative container fits, then use Best Fit to pack it as a regular
type 1 container of class $j$ (its volume is defined accordingly
as the size of the new item) together with a positive container
(this positive container is not of large items, as three large
items cannot be packed into a bin together). Otherwise the new
container is packed into a new bin, in which case it is defined to
be a temporary type 1 container.

Otherwise (that is,  the number of type 2 containers is strictly
smaller than $\lfloor \alpha_{2j}\cdot n_{j}\rfloor$), we will
increase the number of regular type 2 containers or the number of
declared type 2 containers of this class in the current iteration
as follows. If there is a negative bin $B$ where a type 2
container of class $j$ fits, then pack the item into a new
declared type 2 container of class $j$ and pack this container
into this bin $B$. Otherwise, if there is a temporary type 1
container of class $j$, then we pack the new item using Best Fit
(considering only temporary type 1 containers of class $j$, and
selecting such a container of largest volume) and change the type
of this container into a regular type 2 container. Otherwise (all
containers of class $j$ are either regular type 1 or regular type
2, we should increase the number of type 2 containers, and a new
container with two items of this class cannot be packed into an
existing bin), we open a new declared type 2 container for the new
item and open a new bin for this declared type 2 container (and
pack it there).

The value $\alpha_{2j}$ is strictly positive for every large class
$j$ (as packing every item of a certain large class in its own bin
will lead to a competitive ratio of $2$ for inputs consisting only
of such items). However, there may be values of $j$ ($2 \leq j
\leq b$) for which $\alpha_{1j}=0$. In those cases, the algorithm
above is still applied. Moreover, in those cases it could happen
that there will be a constant number of type 1 containers for
class $j$, as we prove below (the general proof is valid in the
case $\alpha_{1j}=0$ too).

\begin{remark}\label{rmk-temp}
Note that the change of types of containers (of large classes) is
a unique and delicate feature of AH.  While the change of a
declared type 2 container into a regular type 2 container when a
new item is packed into this container, can be described also by
previous approaches, our rules for changing the type of temporary
type 1 containers are new and particularly important.  We
summarize those rules as follows.
\begin{enumerate} \item If a (new) positive container of another
class is packed into a bin containing a temporary type 1
container, then we change the type of the temporary type 1
container into a regular type 1 container. \item If a (new)
negative container of another class is packed into a bin
containing a temporary type 1 container, then we change the type
of the temporary type 1 container into a declared type 2
container.  \item If a (new) item of the same class of the
temporary type 1 container joins the same bin as the temporary
type 1 container, then we change the type of the temporary type 1
container into a regular type 2 container.
\end{enumerate}
Furthermore, in all these cases, we pack the new container or
large item using Best Fit.  That is, we pick the largest temporary
type 1 container where the new container or new item fits.
\end{remark}
Note that there are, however, previous papers where it was not
always decided in advance whether for a class of items for which a
bin can contain at most two items of this class, the bin will
contain one item or two items. In \cite{BCKK04}, in the studied
problem a bin can never contain more than two items, so there are
just two classes of items, larger items of sizes above $\frac 12$,
and the smaller items, which are all other items. In the algorithm
of \cite{BCKK04}, whenever a new smaller item is to be packed with
another smaller item, Best Fit is applied. However, in
\cite{BCKK04} there is no concept of packing other kinds of items
into such bins (as according to their model, those bins already
have the maximum number of items). In \cite{HvSbparxiv,HvS16} the
difficulty of deciding whether large items should be packed in
pairs or alone (in order to be packed with items of other classes)
is solved in a slightly different way; there is a provisional
decision (so for some large items it is decided that they will be
packed in pairs and for others that they will be packed alone).
The final decision is set after a sufficient number of items of
the class have arrived. If in the meantime some items were
combined with items of other classes, for those items the
decisions are final. After sufficiently many items arrive without
other items being combined with them, a decision is made for all
this large set at once.

\begin{remark}\label{rmk-type2}
Consider a large class $j$. Consider an iteration $\ell$ of the
algorithm, i.e., the arrival of the $\ell$th input item, which is
not necessarily of class $j$. Assume that as a result of packing
the $\ell$th item a given container of class $j$ becomes a type
$2$ container. That is, this last class $j$ type $2$ container
either did not exist before the current step, or it was a type 1
container prior to this iteration. Assume also that this container
is not packed with a negative container in a bin. Then, the
$\ell$th item is of class $j$.
\end{remark}

\section{Analysis}

\subsection{Properties of the packing of positive and negative containers}
In the analysis, we see a pair of a negative container and a
positive container, packed together in a bin, as matched to each
other, and each one of them is seen as matched (while every
container packed into a bin without another container is
unmatched).  Our next goal is to prove the properties of this
matching. Let $a'=1-s_{\min}/2$ where $s_{\min}$ is the smallest
item size in the examined input, and let $a$ be the smallest
volume of a positive container that is unmatched, if it exists. If
no unmatched positive container exists, let $a=a'$. If $a>a'$,
decrease the value of $a$ to be $a'$. A simple property of the
algorithm is that it tries to match a positive container and a
negative container whenever possible.

\begin{lemma}\label{bignono}
Consider some time during the execution of the algorithm, just
after an item has been packed. If there exists at least one
positive bin and at least one negative bin, let $\theta_{neg}$
denote the smallest volume of any container of a negative bin and
let $\theta_{pos}$ denote the smallest volume of any container of
a positive bin. Then, $\theta_{neg}+\theta_{pos}>1$.
\end{lemma}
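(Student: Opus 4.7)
The plan is to prove the invariant $\theta_{neg}+\theta_{pos}>1$ (conditioned on both sets of bins being nonempty) by induction on the packing steps of AH. The base case, when no item has been packed, is vacuous. For the inductive step I would classify each possible algorithmic action by its effect on the multisets of container volumes in positive and in negative bins. Several actions do not alter either multiset and preserve the invariant trivially: placing an item into an already-open container of a small or tiny class (its volume $A_{i,j}$ was fixed at creation) and promoting a declared type $2$ container of a large class to a regular type $2$ container upon the arrival of its second class $j$ item (the volume $A_{2,j}$, the positive status, and the bin classification are all unchanged).

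The nontrivial actions fall into three kinds. First, \emph{opening a new positive bin}; this happens for a new huge item, for a new positive small/tiny container, or for a new declared type $2$ container of a large class. In every such case the algorithm reaches the "open a new bin" branch only after verifying that no existing negative container $N$ satisfies $\mathrm{vol}(C_{\text{new}})+\mathrm{vol}(N)\le 1$, so if $\theta_{pos,\text{new}}$ is still realized by a pre-existing positive bin the inductive hypothesis already gives the invariant, while otherwise $\theta_{pos,\text{new}}=\mathrm{vol}(C_{\text{new}})$ and the inequality follows directly from the failed matching condition. Second, \emph{opening a new negative bin}; this happens for a new negative small/tiny container or for a new temporary type $1$ container of a large class, and is handled symmetrically, using that the algorithm opens a new bin only when no positive bin (and, where appropriate, no temporary type $1$ container) can accommodate $C_{\text{new}}$. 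Third, \emph{a bin changes classification without a new bin being opened}; this covers matching a new container with a bin of the opposite sign, promoting a temporary type $1$ container to regular type $1$ when a positive container joins its bin, and promoting a temporary type $1$ container to declared type $2$ when a negative container joins its bin. In all of these cases a bin leaves the positive or the negative set while no new bin is opened, so the corresponding $\theta$ can only increase and the inductive hypothesis survives.

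The delicate remaining subcase is the promotion of a temporary type $1$ container of a large class $j$ to a regular type $2$ container caused by a new class $j$ item being packed into it, in which the bin flips from negative to positive. The algorithm reaches this branch only after verifying that no negative bin can accommodate a type $2$ container of class $j$; hence for every surviving negative bin's container $N'$ we have $\mathrm{vol}(N')+A_{2,j}>1$, where $A_{2,j}=2t_{j-1}$ is the volume of the newly formed regular type $2$ container, and therefore $\theta_{neg,\text{new}}+A_{2,j}>1$ whenever a negative bin remains. Since $\theta_{pos,\text{new}}\le A_{2,j}$, either $\theta_{pos,\text{new}}=A_{2,j}$ and the invariant follows immediately, or $\theta_{pos,\text{new}}=\theta_{pos,\text{old}}$, in which case $\theta_{neg,\text{new}}\ge\theta_{neg,\text{old}}$ combined with the inductive hypothesis yields $\theta_{pos,\text{new}}+\theta_{neg,\text{new}}\ge\theta_{pos,\text{old}}+\theta_{neg,\text{old}}>1$. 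The main obstacle will be organizing the case analysis cleanly: one must read off, from the precise nested if/else structure of each packing rule in AH, exactly which pairing inequalities the algorithm has verified before opening a new bin or promoting a temporary type $1$ container, and check that the listed promotions of temporary type $1$ containers exhaust every way in which a bin's classification can change.
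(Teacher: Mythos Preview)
Your proof is correct. You organize the argument as an induction over packing steps, showing that every branch of AH preserves the invariant; the paper instead gives a direct argument that fixes the two containers realizing $\theta_{neg}$ and $\theta_{pos}$ at the moment in question and splits into cases according to whether each is of a large class and which one was created (or, for a type~$2$ container of a large class, which became type~$2$) first. Both routes rest on the same two observations---AH opens a fresh positive or negative bin only after a failed pairing test against every bin of the opposite sign, and a temporary type~$1$ container is promoted to regular type~$2$ only after verifying that no negative bin could take a type~$2$ container of that class. Your inductive case split is more exhaustive and mechanical, inspecting every algorithmic branch once, which makes it easy to certify that no action has been overlooked; the paper's direct argument is shorter because it reasons only about the two extremal containers and their relative creation order, without ever tracking the multisets of volumes step by step.
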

\begin{proof}
For a negative container of a large class, its volume is above
$\frac 13$, and for a positive container of a large class, its
volume is above $\frac 23$. Thus, if both containers are of large
classes, we are done. It is left to consider several cases, based
on whether one of the containers is of a large class (this can be
the negative container or the positive container, or none of
them), and on which of the two containers was created first (in
the case of a positive container of a large class, we also need to
consider the time when this container changes its type to type 2).

If none of the two containers is of a large class, by the rules of
packing a new container of a class that is tiny, small, or huge, a
new positive or negative container is packed into an empty bin
only if the total volume of the new container and the container in
any relevant bin (a relevant bin is a positive bin if the new
container is negative, and it is a negative bin if the new
container is negative) is above $1$.

If the negative container is of a large class, then since its bin
is negative until the current time, it is of temporary type 1
until the current time. If the positive container was created
after this negative container, then as it was not combined with
the negative container of the large item, their total volume is
above $1$. Otherwise, when the temporary type 1 container was
created, it was not possible to combine it with a positive
container, and therefore the total volume is above $1$ in this
case as well.

If the positive container is of a large class, it is of type 2,
and we consider three cases. If the container of the large class
becomes of type 2 before the time when the negative container is
created, then it is already a positive container when the negative
container is created, and therefore by the packing rules their
total volume is above $1$. Assume now that the negative container
was created before the time when the type 2 container was defined
as type 2 (either by changing its type from a temporary type 1
container to type 2, or by the creation of a new declared type 2
container). A temporary type 1 container becomes of type 2 without
being combined with a negative container only in the case where no
negative container that can be combined with a type 2 container of
this large class exists. A new declared type 2 container is packed
into a new bin only if there is no negative container in a
negative bin such that they could be packed together. Thus, in all
three cases the total volume of the two containers is above $1$.
\end{proof}

%

\begin{lemma}\label{sizet}
Every type 1 container of a large class $j$, where its unique item
has size no larger than $1-a$ is combined with a positive
container in the output, and in particular, all such containers
are of regular type 1 in the output.
\end{lemma}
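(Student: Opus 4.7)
The plan is a proof by contradiction. Suppose that in the final output there is a temporary type 1 container $C$ of a large class $j$ whose unique item has size $s\leq 1-a$. I would first dispose of the case $a=a'$: in that case $s\leq 1-a=s_{\min}/2<s_{\min}$, contradicting $s\geq s_{\min}$. Hence $a<a'$, so by the definition of $a$ there must exist an unmatched positive container $P$ whose volume $v_P$ equals $a$. Then $s+v_P=s+a\leq 1$, so $C$ and $P$ would fit together in one bin. Everything else is devoted to showing that the algorithm must have produced some matching step involving $C$ or $P$, so that they cannot both remain unmatched.

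The central tool is an invariant extracted from the moment $t_C$ at which $C$ was created: the algorithm opens a new temporary type 1 container only when no existing positive bin can accommodate the new large item. Thus every positive container that exists at time $t_C$ has volume strictly greater than $1-s\geq a=v_P$. Consequently $P$ cannot be in a positive bin at $t_C$: either $P$ has not been created yet, or $P$ is not yet positive. I would then split on the class of $P$. If $P$ belongs to a huge, small, or tiny class, then $P$ is positive from the instant it is opened, so the invariant forces $t_P>t_C$. At $t_P$, the rule for placing a new positive container (Best Fit on all negative bins for a huge $P$; non-temporary negative bins followed by temporary type 1 bins for a small or tiny positive $P$) first tries to match $P$ with a fitting negative container, and opens a new bin only as a last resort. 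Since $C$'s bin is negative at time $t_P$ and $s+v_P\leq 1$, some negative bin that accepts $P$ exists, so $P$ is matched, either with $C$ (in which case $C$ becomes a regular type 1 container, by the rules recalled in Remark \ref{rmk-temp}) or with some other negative container. Either conclusion contradicts the assumption that $P$ is unmatched in the output.

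The remaining case is that $P$ is a type 2 container of some large class $j_P$, so $v_P=2t_{j_P-1}$. The only two routes by which $P$ can become a positive container that ends up alone in its bin are (i) $P$ is created as a new declared type 2 container placed in a fresh bin, or (ii) $P$ is converted from a temporary type 1 container of class $j_P$ into a regular type 2 container by a later class-$j_P$ item; every other route to type 2 immediately packs $P$ with a negative container and thus matches it. Both (i) and (ii) are invoked by the algorithm only when no negative bin can accommodate a type 2 container of class $j_P$ at that step. However $C$'s bin is negative at every step strictly after $t_C$, and $s+2t_{j_P-1}=s+v_P\leq 1$, so such a negative bin does exist, and the precondition for branch (i) or (ii) fails; this is the desired contradiction. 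The main obstacle I anticipate is the bookkeeping: I must justify carefully that $P$ cannot have been in a neutral bin at any intermediate time (otherwise it would have remained matched), and I must enumerate all transitions that create or transform type 2 containers; once this is done, each case reduces to the single inequality $s+v_P\leq 1$ applied at the relevant moment.
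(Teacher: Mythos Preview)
Your proof is correct, but it takes a substantially different route from the paper. The paper's argument is two lines: it invokes Lemma~\ref{bignono} (already proved), which says that whenever a positive bin and a negative bin coexist their minimum volumes sum to more than~$1$. Since an unmatched positive container of volume~$a$ exists, every unmatched negative container has volume above $1-a$; temporary type~1 containers are exactly the unmatched negative containers of large classes, so none of them can have an item of size at most $1-a$.

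What you do instead is bypass Lemma~\ref{bignono} entirely and argue directly from the algorithm's branching rules: you fix the hypothetical unmatched pair $(C,P)$, pin down the creation time $t_C$, and then perform a case split on the class of $P$ (huge/small/tiny versus large type~2), in each case locating the single algorithmic step at which $P$ would have been forced into a neutral bin because $C$'s bin was an eligible partner. Your enumeration of the ways a large-class container can become type~2 without being matched (your cases (i) and (ii)) is complete, and your timing argument that $t^* > t_C$ is sound. In effect you are reproving the relevant instance of Lemma~\ref{bignono} inline.

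The trade-off is clear: the paper's route is much shorter because the case analysis has already been encapsulated in Lemma~\ref{bignono}, which is stated at the level of arbitrary positive and negative bins and is reused elsewhere. Your route is self-contained and does not depend on that lemma, but it duplicates its content for this particular pair. Both are valid; if Lemma~\ref{bignono} is available, citing it is the cleaner choice.
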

\begin{proof}
If $a=a'$, the claim is trivial. Otherwise, by the definition of
$a$, there is a positive container that is not packed with a
negative container, whose volume is $a$. Thus, at termination, by
Lemma \ref{bignono} there are no negative containers of volumes at
most $1-a$ that are not combined with positive containers. For
classes of large items, negative containers that are not combined
with another container into the same bin are only temporary type 1
containers (recall that a declared type 2 container is a positive
one). Thus, there may be regular type 1 containers for large
classes of all possible volumes (that are combined with positive
containers), and there can be temporary type 1 containers of
volumes strictly above $1-a$ (but not smaller) that are not
combined with positive containers.
\end{proof}

Then, by Lemmas \ref{bignono} and \ref{sizet}, every negative
container of volume at most $1-a$ is matched, and every positive
container of volume strictly smaller than $a$ is matched (by the
definition of $a$).

\subsection{The set of scenarios}
We define a {\em finite} set of scenarios according to the value
of $a$ so that in particular the index of the scenario will reveal
the class that contains the value $1-a$ and so that the index of
the scenario will determine for each container containing at least
two items (of the class of the container) the relation between the
volume of this container and the values of $a$ and $1-a$.

To do that we define a set of values $V$ as follows.  $V=\{
A_{i,j}, 1-A_{i,j} : j= 2,3,\ldots ,M+1, \alpha_{ij}>0\} \cup \{
t_1,t_2,\ldots , t_M,t_{M+1}\}$ and $V'=\{ x\in V: x\leq 1/2\}$
(in particular, $\frac 12 \in V'$). Note that the set $V'$
contains (among other) all boundary points $t_j$ (for all $j\geq
1$), even for values of $j$ for which $\alpha_{1j}=0$. The name of
a scenario is an interval $(x,y]$ between consecutive values in
$V'$. For a given value of $a$, we find the values $x,y$ such that
$1-a \in (x,y]$, and $y>x$ are two values in $V'$ such that $(x,y)
\cap V' = \emptyset$. The analysis is based on the value of
$(x,y]$, where the motivation is that given this value, it is
known exactly which positive bins we may have (excluding
containers with one item which is large or huge, for which the
volumes are not limited to those in $V$). We define the index $k$
of the {\it threshold class} to be the value of $k$ such that
$(x,y] \subseteq (t_k,t_{k-1}]$. As $V$ and $V'$ contain values
that are not in $\{ t_1,t_2,\ldots , t_M\}$, it is possible that
$x>t_k$ or $y<t_{k-1}$ or both.

\medskip

The analysis process in the remainder of this paper is performed
for every possible value of $(x,y]$, and the overall asymptotic
competitive ratio is the maximum value of $R$ resulting in the
next procedure for a given value of $(x,y]$. That is, we analyze
the algorithm with respect to all possible scenarios (where there
is a large number of scenarios, but they can be analyzed
independently given the parameters of the algorithm), and as it is
not known in advance which scenario will occur, a worst-case
assumption is applied.

\subsection{Defining the weight function via a linear program \label{weight_sec}}

The first step for analyzing each scenario is to obtain a good
weight function for the scenario, in the sense that the analysis
will be as tight as possible and can be done using a computer
assisted proof within a small running time. The weight function
defines size based weights for values in $(0,1]$. The goal is to
define weights such that the cost of the algorithm is roughly the
total weight of all input items (formally, the total weight should
be at least the number of bins minus a constant $c$ independent of
the input), and if the target competitive ratio is $R$, the cost
of an optimal solution is at least the total weight divided by $R$
(this can be proved by showing that no bin can contain items of
total weight
above $R$). 
Then, for an input $I$, letting $w(I)$ denote its total weight,
(and as defined above, letting $OPT(I)$ the optimal cost for $I$,
and $A(I)$ the number of bins used by $A$), we will have $A(I)
\leq w(I)+c$, $OPT(I) \geq \frac{w(I)}{R}$, which shows that $A(I)
\leq R \cdot OPT(I) +c$. This last argument is the standard
argument for weight functions based analysis. In \cite{Seiden02J}
generalizations of weight functions were used, but we just use the
approach of \cite{J73,J74,JDUGG74,LeeLee85,RaBrLL89}.

In order to define a suitable function, we will solve a linear
program defined below (this linear program has only four variables
$w$, $u$, $v$ and $R$, and in some cases it actually has only two
variables $w$ and $R$). More precisely, we will provide a feasible
solution for this linear program that is very close to the optimal
one (but we only use its feasibility and do not prove that it is
almost optimal). The weights of specific sizes will be based on
the values $w$, $u$, $v$ (or just on $w$, if the others are
undefined), and on some of the parameters of the algorithm (the
$\alpha_{ij}$ values for the given class). The variable $w$ will
be required to satisfy $0 \leq w \leq 1$. For $u$ and $w$, we also
require $0 \leq u,v \leq 1$, for the scenarios where these
variables are defined.

\subsubsection{The (minimum) required weight of a container}
We say that a class is basic if it is a small or tiny class, and
also if it is a large class that is not the threshold class. Thus,
if the threshold class is a large class it is not basic, and
otherwise it is basic. We define a quantity for each container.
This quantity will be  called the {\em required weight} of the
container, and its goal
 is to introduce a uniform value such that weights of items are
 defined based on these values, in order to satisfy all
 requirements. This quantity is defined for any basic class and for the class of huge items. If the threshold class $k$ is a large class, we keep this quantity undefined for that class.  In all other cases it is defined as
follows.


For a positive container of volume at least $a$, the required
weight of the container is $1$.  Note that this means that the
weight of a huge item of size at least $a$ will be $1$, and for
every other positive container of volume at least $1-x$, we will
ensure a weight of $1$ for the container. Recall that for any
positive container that is not a container of a huge item (for
which the volume is the exact size of the item), its volume is an
element of $V$. For a positive container of volume in the interval
$(1/2,a)$, the required weight of the container is denoted as $w$.
This will be a decision variable of the forthcoming linear
program. Thus, for a positive container of a class of items that
are not huge, the required weight of the container is $w$ if its
volume is at most $1-y$ and the required weight of the container
is $1$ if its volume is at least $1-x$ (and there are no
containers of volumes in $(1-y,1-x)$, except for possibly
containers of huge items). The last definition does not depend on
the exact value of $a$ but it depends on the scenario index.
However, for a container of one huge item, its required weight
depends on the exact value of $a$ (and the size of the huge item).
The reasoning is that a positive container of volume at least $a$
may be packed into a positive bin, while other positive containers
are packed in neutral bins.

Next, we consider negative containers.  The intuitive definition
of the required weight of a negative container is that it is $1$
if we cannot guarantee that it is matched to a positive container,
and otherwise it is $1-w$.

Formally, we partition the definition of required weight of a negative container to the following cases.

{\bf Assume that $\boldsymbol{a \geq 2/3}$.} Here, the threshold
class is of small or tiny items (it is a basic class). For a
negative container of volume in the interval $(1-a,1/2]$, the
required weight is $1$. Since $1-a \leq 1/3$, this also means that
the required weight of a negative container of volume larger than
$1/3$ is $1$. The required weight of the negative containers of
volume in the interval $(0,1-a]$ is $1-w$. All containers of
volume at most $1-a$ are of small or tiny items, and their volumes
are in $V$. Thus, no negative container has volume in the open
interval $(x,y)$ (but there might be negative containers with
volume $x$ or volume $y$), and thus we can refine the statement of
the required weight of a negative container in this case as
follows. For a negative container of volume at least $y$, the
required weight is $1$, while for a negative container of volume
at most $x$, the required weight is $1-w$.

{\bf Assume that $\boldsymbol{a < 2/3}$.} Here, the threshold
class $k$ is of large items (and it is not basic). Recall that the
required weight of a container of a large class is defined for all
large classes except for $k$.

For a negative container of volume in the interval $(1-a,1/2]$,
the required weight is $1$.  The required weight of the negative
containers of volume in the interval $(0,1-a]$ is $1-w$.  This
last rule can be stated in terms of $x$ and $y$ for a negative
container of a class that is not the threshold class.  For such
negative container, we define the required weight of the container
to be $1-w$ if its volume is at most $x$, and otherwise its
required weight is $1$. The threshold class (for this case where
it is not basic) is discussed later in more detail.


\subsubsection{The (amortized) weight of an item.}\label{amor}
In order to define weights of items, for each class separately, we
will define the weight of an item of this class in an amortized
way that ensures that the sum of the weights of items of this
class will be approximately the sum of the required weights of
(all) its containers (excluding a constant number of such
containers per class).  The weight of an item in the threshold
class (if it is not basic) will be defined without using the
required weight of the containers of this class.

For a huge item, since its container is always a positive
container, its weight is $1$ if its size is at least $a$ and $w$
if it is smaller than $a$.

Consider next classes of items that are not huge.  For a basic
class $j$, we let $r_{x,y}(i,j)$ be the required weight of a type
$i$ container of class $j$ (recall that we deal with a specific
scenario defined by $(x,y]$). Note that $r_{x,y}(i,j) \in \{
w,1-w, 1\}$, and we have already defined this value for every
possible values of $x,y,i,j$ such that $j\neq k$ or $j=k > b$. For
such an item of a non tiny class $j$, let the weight of the item
be $$\omega_j=\frac{\sum_{i} \alpha_{ij} \cdot
r_{x,y}(i,j)}{\sum_i i\cdot \alpha_{ij}} .$$  This term is the
ratio between the average required weight of a container of this
class and the average number of items in a container of this
class. As $r_{x,y}(i,j)\leq 1$ for all $x,y,i,j$ and $i\geq 1$, we
get $\omega_j \leq 1$ for all basic classes $j\leq M$. Similarly,
for $j=M+1$, we define the weight of an item of this class to be
its size multiplied by
$$\rho=\frac{\sum_{i} \alpha_{i,M+1} \cdot r_{x,y}(i,M+1)}{\sum_i
(A_{i,M+1}-t_M) \cdot \alpha_{i,M+1}} .$$ The parameters will
always be chosen such that $\rho\leq 2$, as otherwise the value of
the competitive ratio will exceed $2$.

In order to define the weight of items of the threshold class $k$
for scenarios where it is not basic (this means that $1/3 < 1-a <
1/2$), we introduce the last two decision variables $u$ and $v$
(these are decision variables of the linear program below, and
together with $w$ and the competitive ratio value $R$ for this
scenario this will conclude the introduction of the decision
variables). For such an item, we let its weight be $u$ if its size
is at most $1-a$ and otherwise its weight is $v$. We will impose
the constraint
\begin{equation}\label{cons1}u\leq v. \end{equation}

Intuitively, we can see the output as if for every type $1$
container of this class we have a collection of
$\frac{\alpha_{2k}}{\alpha_{1k}}$ (fractions of) containers of
type $2$ associated with it (as this is the ratio between the
fractions of containers of the two types out of the total number
of containers of class $k$), such that one of the following
conditions hold:

The first option is that the type $1$ container  is a regular type
1 container and it is matched to some positive container. For this
case we assume that its weight is $u$ and each item in the
associated type 2 containers has weight $u$, however the
additional weight of $w$ of the positive container (the one that
is matched to the type 1 container, where we do not know its
class) helps us to obtain a sufficient total weight, which is the
total number of bins. Our actual claim will be simpler and this
discussion is provided just to motivate the constraints (a full
set of properties and proofs is given later).

That is, we will have the constraint

\begin{equation}\label{cons2}u \cdot (\alpha_{1k}+2\alpha_{2k}) +
w \cdot \alpha_{1k} \geq \alpha_{1k}+\alpha_{2k}. \end{equation}

As $\alpha_{1k}+\alpha_{2k}=1$, this inequality is equivalent to
\begin{equation}\label{cons2pr} u \cdot (1+\alpha_{2k}) + w \cdot
\alpha_{1k} \geq 1. \end{equation}

Otherwise, that is, the type $1$ container is not matched to a
positive container (and thus it is a temporary type 1 container).
In this case, the item of this type 1 container is of size larger
than $1-a$, and moreover (as we formally justify below) every
container of type $2$ in its associated containers of type $2$
satisfies either that it has (at least) one item of size larger
than $1-a$ or it is matched to a negative container.  In our
constraints we consider (only) the two extreme cases where all
associated containers of type $2$ are of a common case (as the
constraint for every intermediate case is a convex combination of
the two extreme constraints that we explain now). For the first
case where all associated containers of type $2$ (as well as the
type $1$ container) have at least one item of size larger than
$1-a$, we have the constraint
\begin{equation}\label{cons3} u \cdot \alpha_{2k} + v \cdot (\alpha_{1k}+\alpha_{2k}) \geq \alpha_{1k}+\alpha_{2k}. \end{equation}

This inequality is equivalent to  \begin{equation}\label{cons3pr} u \cdot \alpha_{2k} + v \geq 1. \end{equation}

 For the other case where every type 2 container is matched to a negative container and the type 1 container has an item of size larger than $1-a$, we have the constraint
 \begin{equation}\label{cons4}v \cdot \alpha_{1k}+ 2\cdot u \cdot \alpha_{2k}  + (1-w) \cdot \alpha_{2k} \geq \alpha_{1k}+\alpha_{2k}.
 \end{equation}

The last inequality is equivalent to \begin{equation}\label{cons4pr} v \cdot \alpha_{1k}+ 2\cdot
u \cdot \alpha_{2k}  + (1-w) \cdot \alpha_{2k} \geq 1. \end{equation}

We impose all the constraints (\ref{cons1}), (\ref{cons2}),
(\ref{cons3}), and (\ref{cons4})  to ensure that we allocate
sufficient weight in all cases. While we presented these
constraints using a pictorial fractional allocation of type 2
containers to the type 1 containers, our proof is not based on
such arguments. We will prove that these constraints are
sufficient to guarantee that the resulting weight function
satisfies that the cost of the algorithm is at most the total
weight of the items plus a constant (which is independent of the
input) for each scenario.

\subsection{The linear program}

In addition to these four constraints (\ref{cons1}),
(\ref{cons2}), (\ref{cons3}), and (\ref{cons4})  (ensuring that
this is indeed a valid weight function, as we show below), we have
the knapsack constraints expressing the following properties.

For every subset of items that can fit into one bin, the total
weight of the items is at most $R$. Next, we elaborate further on
these knapsack constraints.  We intend to ensure that the total
weight of items in any bin in an optimal solution is at most $R$.
To do this, we consider every possible subset of items that may
fit into a bin, and for each non tiny item we replace its size by
the infimum size of an item of the same weight.  The resulting set
of non tiny items has total size strictly smaller than $1$. We may
consider only sets with total size strictly smaller than $1$ and
not sets with total size of exactly $1$ as the infimum size of an
item of the same weight is never attained (except for one special
case, see below), and thus for every nonempty set of such items we
can strictly decrease the total size of the items in the set and
obtain another set of the same weight and smaller size. There is
one case where the minimum is attained, which is a huge item of
size $a$. For decreasing the total size of items strictly below
$1$ it is sufficient for the bin to have at least one other non
tiny item, and if the huge item of size $a$ is the only non tiny
item of a bin, its size is already below $1$. We will consider
sets of non tiny items, and to find an upper bound on the total
weight that could result from this set, we add to the multiset of
items sand of (strictly) positive total size consisting of an
arbitrary set of tiny items of total size that equals $1$ minus
the total size of the multiset of items we consider.

Any set of non tiny items (of total size strictly below $1$)
belongs to one of the following cases:

\begin{enumerate}
\item\label{fam1} Assume that there is a huge item of size at
least $a$ in the set.  Such a set of items contains (beside the
huge item) only items smaller than $1-a$, of total size below
$1-a$. Thus, the remaining items have total size below $y$ (and in
the calculation of total weight we will obviously take into
account the huge item, whose weight is $1$). If class $k$ is
large, the set could possibly contain an item of size in
$(t_k,t_{k-1}]$, but such an item has size smaller than $1-a \leq
y$, so its size is in $(t_k,1-a]$. We define a set of sizes with
weights, where the set is called $\Delta$, such that for every set
of non tiny items of sizes in $(t_M,1-a]$, there is a multiset of
items of $\Delta$ of the same weight, such that its total size is
not larger. The set $\Delta$ consists of all sizes $t_j$ for $k
\leq j \leq M$. The weight of $t_j$ for $k+1 \leq j \leq M$ is
defined to be the weight of an item of size in $(t_j,t_{j-1}]$. If
$k$ is a basic class, the weight of $t_k$ is defined to be the
weight of an item of size in $(t_k,t_{k-1}]$, and otherwise ($k$
is a large class), the weight of $t_k$ is defined to be $u$. Given
a set of items of sizes in $(t_M,1-a]$ of total size below $y$,
replacing any item whose size is in $(t_j,t_{j-1}]$ with an item
of size $t_j$ results in a total size that is not larger than the
original total size, and it has the same weight. Given a set of
items of sizes in $(t_M,1-a]$ of total size $z < y$ and a set of
tiny items of total size $y-z$, we obtain a set of items that has
a weight that is no larger than the weight of the corresponding
multiset of items of $\Delta$, whose total size is $z'\leq z$ plus
tiny items of total size $y-z' \geq y-z \geq 1-a-z$. Thus, we can
consider multisets of items of $\Delta$ instead of arbitrary sets
of non tiny items, and an upper bound on the weights of such
multisets together with tiny items, such that their total size is
exactly $y$ (plus the weight $1$ of the huge item) is an upper
bound on the total weight of any packed bin with a huge item of
size at least $a$.

\item\label{fam2} Assume that there is no huge item of size at
least $a$ in a considered set of items. In this case, the total
sizes of sets of non tiny items should be below $1$, and no item
has size of $a$ or more. The set $\Delta$ consists of all values
$t_j$ for $j=1,2,\ldots,M$, and the value $x$ is included as well
if $k$ is a large class (unless $x=t_k$). For any other class $j
\neq 1,k$ and also for $j=k$ if $k$ is not a large class, the
weight of an item of size $t_j$ of $\Delta$ is equal to the weight
of an item of size in $(t_j,t_{j-1}]$. Assume that $k$ is a large
class. In this case another two elements of $\Delta$ is an item of
size $t_k$ and weight $u$, and an item of size $x$ and weight $v$.
If $t_k=x$, there is only one additional element whose size is
$t_k=x$ and whose weight is $v$ (since $u \leq v$). The choice of
the weight of $x$ is based on the property that an item of size in
$(t_k,t_{k-1}]$ either has weight of $v$ or of $u \leq v$. The
weight of an item of size $t_1=\frac 12$ in $\Delta$ is $w$ (as
every item of size above $\frac 12$ has at least this weight).

In this case, we consider all multisets of items of sizes $t_j$
for $1 \leq j \leq M$ and $x$, where a set of tiny items whose
size is the complement to $1$ is added for the purpose of weight
calculation of a multiset. Once again the resulting value for
$\Delta$ is an upper bound on the value we would like to compute
for the original items (as we may have increased the weight of
some items from $u$ to $v$, and the total size of tiny items may
have increased).
\end{enumerate}

This provides us with two knapsack problems for each scenario, one
resulting from case \ref{fam1} and the other from case \ref{fam2}.
In the first one, the target total size is $y$ (such that the
total size of items of $\Delta$ is strictly below $y$), and the
upper bound on $R$ is $1$ plus the upper bound on the weight of
the multiset of items of $\Delta$ plus tiny items complementing
the total size to $y$. In the second one, the target total size is
$1$ (such that the total size of items of $\Delta$ is strictly
below $1$), and the upper bound on $R$ is the upper bound on the
weight of the multiset of items of $\Delta$ plus tiny items
complementing the total size to $1$. Under a worst-case
assumption, we calculate the maximum of the two values for each
scenario (and then the maximum of the upper bounds for all
scenarios).


\section{Some properties of the algorithm}

We state and prove a number of properties, where some of these
properties were mentioned above.  The goal of this section is to
prove formally that the weight function we define (for the given
scenario $(x,y]$) is indeed a valid weight function.

\subsection{Bounding the number of containers of each type}
We will bound the number of containers of each type
 in terms of the number of items of the class for small and large items, and in terms of total size for tiny
 items.

Our first goal is to bound the values of $n_{2j}$ and $n_{1j}$ for
a large class $j$ in terms of the number of items of class $j$. To
do that, we use the fact that large items can be packed in type
$1$ or type $2$ containers, and thus for such a class $j$, there
are only two values of $\alpha$, namely, $\alpha_{1j}$ and
$\alpha_{2j}$ (whose sum is $1$). All values are analyzed just
after an item has been packed or just before any item arrived. All
properties proved in this section for a large class $j$ hold even
if $\alpha_{1j}=0$.

\begin{lemma}\label{minibnum}
For any large class $j$, at any time just after an item was packed, $n_{2j} \leq \lfloor
\alpha_{2j}\cdot n_{j}\rfloor$, and $n_{1j} \leq \lfloor
\alpha_{1j}\cdot n_{j}\rfloor +2$.
\end{lemma}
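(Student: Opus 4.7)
The plan is to prove both bounds simultaneously by induction on the number of packing steps, with the trivial base case being the state before any item has arrived (where $n_j = n_{1j} = n_{2j} = 0$). For the inductive step, I would carefully enumerate every operation that can modify $n_j$, $n_{1j}$, or $n_{2j}$. Using the algorithm description and Remark \ref{rmk-temp}, these operations are exactly: (i) the arrival of a large item of class $j$, splitting into case A1 (packed into an existing declared type 2 container, leaving all three counts unchanged), case A2 (a new type 1 container is opened, with precondition $n_{2j} = \lfloor \alpha_{2j} n_j \rfloor$), and the Else branch (a new declared type 2 container is opened, or a temporary type 1 container is converted to a regular type 2 container, each with precondition $n_{2j} < \lfloor \alpha_{2j} n_j \rfloor$); (ii) the arrival of a negative container of another class which is packed with a temporary type 1 container of class $j$, with precondition $n_{2j} \leq \lfloor \alpha_{2j} n_j \rfloor - 1$; (iii) the arrival of a positive container of another class joining a temporary type 1 container of class $j$ (no counts change). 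I would also note that $n_j$ never decreases.

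For the bound on $n_{2j}$, I would observe that $n_{2j}$ can increase (by exactly $1$) only in the Else sub-cases of (i) and in case (ii). In every such situation, the algorithm's precondition gives $n_{2j} < \lfloor \alpha_{2j} n_j \rfloor$ just before the update, so after the increase $n_{2j} \leq \lfloor \alpha_{2j} n_j \rfloor$; since $n_j$ never decreases, $\lfloor \alpha_{2j} n_j \rfloor$ is non-decreasing and the invariant is preserved by every subsequent operation.

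For the bound on $n_{1j}$, the key point is that $n_{1j}$ can only increase in case A2, and then by exactly $1$. Using the A2 precondition $n_{2j} = \lfloor \alpha_{2j} n_j \rfloor$ together with the identity $n_j = n_{1j} + n_{2j}$ and the relation $\alpha_{1j} + \alpha_{2j} = 1$, one obtains just before the A2 step
\[
n_{1j} \;=\; n_j - \lfloor \alpha_{2j} n_j \rfloor \;\leq\; \lceil \alpha_{1j} n_j \rceil \;\leq\; \lfloor \alpha_{1j} n_j \rfloor + 1.
\]
Adding $1$ to $n_{1j}$ and observing that simultaneously $n_j$ increases by $1$, and that $\lfloor \alpha_{1j}(n_j+1)\rfloor \geq \lfloor \alpha_{1j} n_j \rfloor$, yields $n_{1j} \leq \lfloor \alpha_{1j} n_j \rfloor + 2$ after the step; this is precisely where the additive constant $2$ originates. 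All other operations either decrease $n_{1j}$ or leave it unchanged while possibly increasing $n_j$, hence preserve this bound trivially.

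The main obstacle is purely organizational rather than mathematical: the update rules affecting these three counters are scattered across the algorithm description (large item arrivals, negative-container-induced conversions, positive-container-induced conversions, and the distinction between regular, declared, and temporary containers), so the proof requires confidence that no update path has been overlooked, for which Remark \ref{rmk-temp} is the essential catalogue. A minor but crucial algebraic ingredient is the rounding inequality $n - \lfloor \alpha_{2j} n \rfloor \leq \lfloor \alpha_{1j} n \rfloor + 1$ when $\alpha_{1j}+\alpha_{2j}=1$; I would note that the inductive hypothesis on $n_{2j}$ is not even needed to bound $n_{1j}$, because the A2 precondition supplies the exact equality $n_{2j} = \lfloor \alpha_{2j} n_j \rfloor$ at the relevant moment.
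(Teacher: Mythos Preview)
Your proof is correct and follows essentially the same inductive case analysis as the paper. Your derivation of the $n_{1j}$ bound via the identity $n - \lfloor \alpha_{2j} n \rfloor = \lceil \alpha_{1j} n \rceil \leq \lfloor \alpha_{1j} n \rfloor + 1$ is in fact slightly cleaner than the paper's longer chain of inequalities followed by an integrality argument, but the structure and the enumeration of cases are the same.
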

\begin{proof}
We use induction. Initially, $n_j=n_{1j}=n_{2j}=0$ and the two
properties hold.

Consider the kind of all possible modifications in the set of
containers of class $j$. In the proof we use $n_j$, $n_{1j}$, and
$n_{2j}$ for the values before a modification and $n'_j$,
$n'_{1j}$, and $n'_{2j}$ for the values after the modification.
Thus, we will show $n'_{2j} \leq \lfloor \alpha_{2j}\cdot
n'_{j}\rfloor$, and $n'_{1j} \leq \lfloor \alpha_{1j}\cdot
n'_{j}\rfloor + 2$ (assuming that $n_{2j} \leq \lfloor
\alpha_{2j}\cdot n_{j}\rfloor$, and $n_{1j} \leq \lfloor
\alpha_{1j}\cdot n_{j}\rfloor + 2$ hold before the modifications).
The first kind of modifications is the result of the arrival of a
new item of class $j$. In this case, the following actions of the
algorithm are possible. The first one is that the new item is
added to a declared type 2 container to become a regular type 2
container. In this case neither the number of type 2 containers of
class $j$ nor the total number of containers for this class change
(so $n'_j=n_j$, $n'_{1j}=n_{1j}$, and $n'_{2j}=n_{2j}$). The next
option is where $n_{2j}=\lfloor \alpha_{2j}\cdot n_{j}\rfloor$ and
we will not place the new item into a type 2 container. In this
case a new container  of type 1 is formed, and as $n'_j=n_j+1$ and
$n'_{2j}=n_{2j}$, we find $n'_{2j} = n_{2j} \leq \lfloor
\alpha_{2j}\cdot n_{j}\rfloor \leq \lfloor \alpha_{2j}\cdot
n'_{j}\rfloor$. Moreover, $n'_{1j}=n'_j-n'_{2j}=n'_j-\lfloor
\alpha_{2j} \cdot (n'_{j}-1) \rfloor \leq
n'_j(\alpha_{1j}+\alpha_{2j})-\alpha_{2j}\cdot (n'_{j}-1)+
1=\alpha_{1j}n'_j +2 -\alpha_{1j} = \alpha_{1j}n_j +2 < \lfloor
\alpha_{1j}n_j \rfloor + 3 $ (and therefore $n'_{1j} \leq \lfloor
\alpha_{1j}n_j \rfloor + 2 $, as both the first and the last
expressions in the sequence of inequalities are integral).

Finally, if $n_{2j} \leq \lfloor \alpha_{2j}\cdot n_{j}\rfloor - 1
$, there are two options where we might put the item into a type 2
container. In the first option, one new (declared) type 2
container is created, and we have $n'_{2j}=n_{2j}+1$,
$n'_{1j}=n_{1j}$, and $n'_j=n_j+1$. In this case, we have $n'_{2j}
= n_{2j}+1 \leq (\lfloor \alpha_{2j}\cdot n_{j}\rfloor -1)+1 =
\lfloor \alpha_{2j}\cdot n_{j}\rfloor \leq \lfloor
\alpha_{2j}\cdot n'_{j}\rfloor$, and $n'_{1j}= n_{1j} \leq \lfloor
\alpha_{1j}\cdot n_{j}\rfloor +2 \leq \lfloor \alpha_{1j}\cdot
n'_{j}\rfloor +2$. In the second option, a (temporary) type 1
container is transformed into a (regular) type 2 container, and in
this case, $n'_{2j}=n_{2j}+1$, $n'_{1j}=n_{1j}-1$, and $n'_j=n_j$.
In this case, we have $n'_{2j} = n_{2j}+1 \leq (\lfloor
\alpha_{2j}\cdot n_{j}\rfloor -1)+1 = \lfloor \alpha_{2j}\cdot
n_{j}\rfloor = \lfloor \alpha_{2j}\cdot n'_{j}\rfloor$, and
$n'_{1j}= n_{1j} -1 \leq \lfloor \alpha_{1j}\cdot n_{j}\rfloor +2
= \lfloor \alpha_{1j}\cdot n'_{j}\rfloor + 2$.

A possible second kind of modifications is a result of the arrival of a
small or tiny item. In this case, the change can be that a
temporary type 1 container of class $j$ becomes a declared type 2
container, or that a temporary type 1 container of class $j$ becomes
a regular type 1 container. In the latter case there is no change in
the numbers of containers of types 1 and 2. In the former case,
$n'_j=n_j$, and the change is performed only if $n_{2j} \leq
\lfloor \alpha_{2j}\cdot n_{j}\rfloor - 1 $ (see Remark \ref{rmk-temp} for the summary of the relevant steps of the algorithm). In this case
$n'_{2j}=n_{2j}+1$, $n'_{1j}=n_{1j}-1$, and such a situation was
already considered.
\end{proof}

\begin{corollary}\label{minibnum2}
For any large class $j$, at any time just after packing an item, $n_{2j} \geq \alpha_{2j}
\cdot n_j -2$. Additionally, $n_{1j} \geq \alpha_{1j}\cdot n_{j}$.
\end{corollary}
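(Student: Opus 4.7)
The plan is to derive this corollary directly from Lemma \ref{minibnum} by exploiting the basic accounting identity for a large class: at every point in time (in particular just after packing an item), every container of class $j$ is either a type 1 container or a type 2 container, so $n_j = n_{1j} + n_{2j}$. Combined with the fact that for a large class $\alpha_{1j} + \alpha_{2j} = 1$, the two inequalities become complements of each other.

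For the first inequality, I would write $n_{2j} = n_j - n_{1j}$, and then substitute the upper bound on $n_{1j}$ from Lemma \ref{minibnum}, namely $n_{1j} \leq \lfloor \alpha_{1j} \cdot n_j \rfloor + 2 \leq \alpha_{1j} \cdot n_j + 2$. This yields
\[
n_{2j} \geq n_j - \alpha_{1j} \cdot n_j - 2 = (1-\alpha_{1j}) \cdot n_j - 2 = \alpha_{2j} \cdot n_j - 2,
\]
as required. For the second inequality, I would similarly write $n_{1j} = n_j - n_{2j}$ and apply the bound $n_{2j} \leq \lfloor \alpha_{2j} \cdot n_j \rfloor \leq \alpha_{2j} \cdot n_j$ to obtain
\[
n_{1j} \geq n_j - \alpha_{2j} \cdot n_j = (1-\alpha_{2j}) \cdot n_j = \alpha_{1j} \cdot n_j.
\]

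There is essentially no obstacle here, since this is a one-line consequence of the already-established Lemma \ref{minibnum}. The only thing worth being careful about is ensuring the identity $n_j = n_{1j} + n_{2j}$ is used correctly: this is valid precisely because for a large class the algorithm only maintains containers of types 1 and 2 (regular or temporary/declared), and these two categories are disjoint and exhaustive by definition. One could also note that the bound $n_{1j} \geq \alpha_{1j} \cdot n_j$ does not require a $-2$ slack because the floor alone suffices (in contrast to the first inequality, where the additive slack of $2$ on $n_{1j}$ in Lemma \ref{minibnum} propagates to $n_{2j}$).
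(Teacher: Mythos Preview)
Your proof is correct and follows essentially the same approach as the paper: both use the identity $n_j = n_{1j} + n_{2j}$, substitute the upper bounds from Lemma \ref{minibnum}, and simplify via $\alpha_{1j} + \alpha_{2j} = 1$.
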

\begin{proof}
By Lemma \ref{minibnum}, $n_{2j} = n_j-n_{1j} \geq n_j - (\lfloor
\alpha_{1j}\cdot n_{j}\rfloor +2) \geq n_j - \alpha_{1j}\cdot
n_{j}-2 =\alpha_{2j}n_j-2$, and $n_{1j}=n_j-n_{2j} \geq n_j-
\lfloor \alpha_{2j}\cdot n_{j}\rfloor \geq \alpha_{1j}n_j$.
\end{proof}

Recall that $N_j$ denotes the number of items of class $j$ (that arrived so far).  We next bound the number of containers of a large class $j$ in terms of $N_j$.

\begin{lemma}\label{itmnum}
During the action of the algorithm, for any large class $j$, it
holds that $n_j \leq \frac{N_j}{1+\alpha_{2j}} +
2=\frac{N_j}{2-\alpha_{1j}} + 2$ and $n_j \geq
\frac{N_j+\beta}{1+\alpha_{2j}}=\frac{N_j+\beta}{2-\alpha_{1j}}$,
where $\beta$ is the number of declared type 2 containers at this
time.
\end{lemma}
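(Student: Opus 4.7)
The plan is to establish an item-counting identity that relates $N_j$, $n_{1j}$, $n_{2j}$, and $\beta$, and then to transform the bounds on $n_{2j}$ as a function of $n_j$ (already provided by Lemma~\ref{minibnum} and Corollary~\ref{minibnum2}) into bounds on $n_j$ as a function of $N_j$ and $\beta$.

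First I would classify the containers of the large class $j$ and the items they currently hold. Each of the $n_{1j}$ type~$1$ containers (regular or temporary) holds exactly one item of class $j$. Among the $n_{2j}$ type~$2$ containers, the $\beta$ declared ones each hold exactly one item (still awaiting a second), while the remaining $n_{2j}-\beta$ regular ones each hold exactly two items. Counting items in all containers yields the identity
$$
N_j \;=\; n_{1j} + 2(n_{2j}-\beta) + \beta \;=\; n_{1j} + 2n_{2j} - \beta.
$$
Using $n_{1j}=n_j-n_{2j}$ to eliminate $n_{1j}$ gives the key rearrangement
$$
n_{2j} \;=\; N_j + \beta - n_j,
$$
so that any upper (respectively lower) bound on $n_{2j}$ as a linear function of $n_j$ translates directly into a lower (respectively upper) bound on $n_j$ in terms of $N_j$ and $\beta$.

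For the lower bound on $n_j$, I would plug in $n_{2j} \le \lfloor \alpha_{2j} n_j\rfloor \le \alpha_{2j} n_j$ from Lemma~\ref{minibnum}. Substituting into the identity gives $N_j + \beta - n_j \le \alpha_{2j}\, n_j$, which rearranges to $n_j \ge \frac{N_j+\beta}{1+\alpha_{2j}}$, matching the claimed bound. For the upper bound on $n_j$, I would plug in $n_{2j} \ge \alpha_{2j} n_j - 2$ from Corollary~\ref{minibnum2}. Substitution gives $(1+\alpha_{2j})\, n_j \le N_j + \beta + 2$, hence
$$
n_j \;\le\; \frac{N_j + \beta + 2}{1+\alpha_{2j}}.
$$
Since the algorithm maintains $\beta$ bounded above by an absolute constant (at most $4$ as already observed after the description of the algorithm), the right-hand side is of the form $\frac{N_j}{1+\alpha_{2j}} + O(1)$, which yields the stated upper bound (with the additive constant absorbing $\beta$ and $2$). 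Finally, the alternate expression $\frac{1}{1+\alpha_{2j}}=\frac{1}{2-\alpha_{1j}}$ is immediate from $\alpha_{1j}+\alpha_{2j}=1$.

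There is no deep obstacle in the argument: everything reduces to careful item accounting once one remembers that declared type~$2$ containers are counted among $n_{2j}$ but currently contribute only one item each to $N_j$. This asymmetry is exactly what produces the extra $\beta$ term in the lower bound on $n_j$ (which is tight when no declared type~$2$ containers are pending) while leaving the upper bound essentially independent of $\beta$.
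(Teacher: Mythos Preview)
Your lower-bound argument is fine and matches the paper's direct proof of that inequality. The upper-bound argument, however, has a genuine gap: you invoke the bound $\beta\le 4$ on the number of declared type~2 containers, but in the paper that bound is Lemma~\ref{four_declared_lem}, which is proved \emph{after} Lemma~\ref{itmnum} and in fact \emph{uses} Lemma~\ref{itmnum} (via Corollary~\ref{relationss}). So your argument is circular. Moreover, even granting $\beta\le 4$, your inequality $n_j\le\frac{N_j+\beta+2}{1+\alpha_{2j}}$ only yields an additive constant of $\frac{6}{1+\alpha_{2j}}$, not the stated $+2$.

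The paper sidesteps both issues by an inductive argument that exploits a structural property you did not use: a new container of class $j$ is created only after the algorithm has checked for declared type~2 containers and found none. Hence at the very moment a new container is opened, $\beta=0$, and the counting identity collapses to $N_j=n_{1j}+2n_{2j}$ exactly. Combining this with $n_{1j}\le\alpha_{1j}n_j+2$ from Lemma~\ref{minibnum} gives $N_j\ge(1+\alpha_{2j})n_j-2$ at that moment, and the induction step follows with the precise constant $+2$, without any appeal to a global bound on $\beta$.
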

\begin{proof}
Initially $n_j=N_j=\beta = 0$ and the two inequalities hold. A new
container of class $j$ may be created when a new item of this
class arrives, that is, when $N_j$ increases (but a new item of
class $j$ does not always cause the creation of a new container,
as in some cases it is packed into an existing container for this
class). No existing containers can be destroyed, and therefore we
only consider an arrival of an item of class $j$. Assume that item
$i$ of class $j$ has just arrived and packed. Let $n_j$, $n_{2j}$,
$n_{1j}$, $N_j$, and $\beta$ be the values of these variables
prior to the arrival of $i$ and let $n'_j$, $n'_{2j}$, $n'_{1j}$,
$N'_j$, and $\beta '$ be their values after the arrival and
packing of $i$, and thus, $N'_j=N_j+1$. Furthermore, if a new
container is created, then $n'_j=n_j+1$ and otherwise $n'_j=n_j$.

Consider the first inequality. If $n'_j=n_j$, we are done. Since
the option of adding the new item into a declared type 2 container
is tested first, the creation of a new container means that there
were no such declared type 2 containers prior to the arrival of
item $i$. Thus, just before $i$ is presented to the algorithm,
every type 2 container has two items. We have $N_j=n_{1j}+2\cdot
n_{2j} = 2 n_j - n_{1j} $. By Lemma \ref{minibnum}, $n_{1j} \leq
\alpha_{1j}\cdot n_{j} + 2 $, and therefore $N_j \geq
(2-\alpha_{1j})\cdot n_j -2 = (1+\alpha_{2j})\cdot n_j  - 2$,
proving $N'_j-1 \geq (1+\alpha_{2j})\cdot (n'_j-1)-2$, or
alternatively, $N'_j \geq (1+\alpha_{2j})\cdot (n'_j-1) -1 \geq
(1+\alpha_{2j})\cdot (n'_j-2)$, as required.

Consider the second inequality. We prove this inequality directly
(i.e., without induction). We have $N_j= n_{1j}+2\cdot n_{2j}
-\beta$, due to the numbers of items in the different types of
containers. By Lemma \ref{minibnum}, $n_{2j} \leq \alpha_{2j}
n_j$, and we have $N_j= n_{j}+ n_{2j} -\beta \leq
n_j(1+\alpha_{2j})-\beta$. The inequality results from
rearranging.
\end{proof}

\begin{corollary}\label{relationss}
During the action of the algorithm, for any large class $j$, it
holds that $n_{2j} \leq \alpha_{2j}\cdot
(\frac{N_j}{1+\alpha_{2j}}) + 2$, $n_{1j} \leq \alpha_{1j}\cdot
(\frac{N_j}{1+\alpha_{2j}})+4$, $n_{2j} \geq \alpha_{2j} \cdot
(\frac{N_j}{1+\alpha_{2j}}) -2$, and $n_{1j} \geq \alpha_{1j}\cdot
(\frac{N_j}{1+\alpha_{2j}})$.
\end{corollary}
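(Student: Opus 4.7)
The plan is straightforward: this corollary is a direct combination of the bounds relating $n_{ij}$ to $n_j$ (from Lemma~\ref{minibnum} and Corollary~\ref{minibnum2}) with the bounds relating $n_j$ to $N_j$ (from Lemma~\ref{itmnum}). No further induction or structural argument about the algorithm is needed; the whole proof is a chain of substitutions with a bit of care about constants.

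First I would write down the four base inequalities explicitly. From Lemma~\ref{minibnum}, $n_{2j}\le \lfloor\alpha_{2j} n_j\rfloor\le \alpha_{2j} n_j$ and $n_{1j}\le \lfloor\alpha_{1j} n_j\rfloor+2\le \alpha_{1j} n_j+2$. From Corollary~\ref{minibnum2}, $n_{2j}\ge \alpha_{2j} n_j -2$ and $n_{1j}\ge \alpha_{1j} n_j$. From Lemma~\ref{itmnum}, $n_j\le \frac{N_j}{1+\alpha_{2j}}+2$ and $n_j\ge \frac{N_j+\beta}{1+\alpha_{2j}}\ge \frac{N_j}{1+\alpha_{2j}}$ (using $\beta\ge 0$).

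Next I would plug the appropriate bound on $n_j$ into each of the four inequalities. For the two upper bounds, I use the upper bound on $n_j$: for $n_{2j}$ this gives $n_{2j}\le \alpha_{2j}\bigl(\frac{N_j}{1+\alpha_{2j}}+2\bigr)=\alpha_{2j}\cdot\frac{N_j}{1+\alpha_{2j}}+2\alpha_{2j}$, and since $\alpha_{2j}\le 1$ this is at most $\alpha_{2j}\cdot\frac{N_j}{1+\alpha_{2j}}+2$. For $n_{1j}$, we get $n_{1j}\le \alpha_{1j}\bigl(\frac{N_j}{1+\alpha_{2j}}+2\bigr)+2\le \alpha_{1j}\cdot\frac{N_j}{1+\alpha_{2j}}+4$, again using $\alpha_{1j}\le 1$. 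For the two lower bounds, I use the lower bound on $n_j$: this immediately yields $n_{2j}\ge \alpha_{2j}\cdot\frac{N_j}{1+\alpha_{2j}}-2$ and $n_{1j}\ge \alpha_{1j}\cdot\frac{N_j}{1+\alpha_{2j}}$.

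There is no real obstacle; the only thing to be careful about is the bookkeeping of the additive constants ($+2$ versus $+4$) and the observation that $2\alpha_{2j}\le 2$ and $2\alpha_{1j}+2\le 4$, which is where the asymmetry between the bounds on $n_{2j}$ and on $n_{1j}$ comes from. Both the upper bound on $n_j$ from Lemma~\ref{itmnum} and the upper bound on $n_{1j}$ from Lemma~\ref{minibnum} contribute an additive $+2$, and when they compose for $n_{1j}$ they sum to $+4$, whereas for $n_{2j}$ only one such constant is picked up.
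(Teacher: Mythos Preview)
Your proof is correct and follows exactly the approach the paper takes: the paper's proof is the single sentence ``The inequalities follow from Lemma~\ref{minibnum}, Corollary~\ref{minibnum2}, and Lemma~\ref{itmnum} using $\beta \geq 0$,'' and you have spelled out precisely those substitutions, including the use of $\alpha_{1j},\alpha_{2j}\le 1$ to absorb the additive constants.
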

\begin{proof}
The inequalities follow from Lemma \ref{minibnum}, Corollary
\ref{minibnum2}, and Lemma \ref{itmnum} using $\beta \geq 0$.
\end{proof}

\begin{lemma}\label{four_declared_lem}
For every large class $j$, there are at most four declared type 2
containers at each time.
\end{lemma}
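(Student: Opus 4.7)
The plan is to combine a direct counting identity with the bounds on $n_j$ and $n_{2j}$ already established by Lemma \ref{minibnum} and Lemma \ref{itmnum} in this subsection. The crucial observation is that, unlike all other container types of a large class, a declared type $2$ container holds only one item so far (it is waiting for a second class $j$ item). This ``missing item'' gives a clean expression for $\beta_j$, the number of declared type $2$ containers of class $j$.

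First, I would count items of class $j$ by container type. Each regular or temporary type $1$ container holds one item, each regular type $2$ container (of which there are $n_{2j} - \beta_j$) holds two items, and each declared type $2$ container holds exactly one. Combining with $n_{1j} + n_{2j} = n_j$, this yields the identity
\[
\beta_j \;=\; n_j + n_{2j} - N_j.
\]

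Next, I would plug in the inequalities already proved. From Lemma \ref{minibnum}, $n_{2j} \leq \lfloor \alpha_{2j} n_j \rfloor \leq \alpha_{2j} n_j$, which gives $\beta_j \leq (1 + \alpha_{2j}) n_j - N_j$. From Lemma \ref{itmnum}, $n_j \leq \frac{N_j}{1 + \alpha_{2j}} + 2$, so $(1 + \alpha_{2j}) n_j \leq N_j + 2(1 + \alpha_{2j})$. Substituting,
\[
\beta_j \;\leq\; 2(1 + \alpha_{2j}) \;\leq\; 4,
\]
since $\alpha_{2j} \leq 1$. This gives the desired bound.

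There is essentially no hard step here: the only thing to get right is the counting identity, where one must be careful to treat the declared type $2$ containers as contributing $1$ item rather than $2$. The resulting slack of exactly $4$ traces back to the additive constant $2$ in Lemma \ref{itmnum} (the bookkeeping tolerance of the floor function) multiplied by the factor $1 + \alpha_{2j} \leq 2$, so the bound cannot be improved without strengthening one of the earlier lemmas.
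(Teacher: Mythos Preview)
Your proof is correct and takes essentially the same approach as the paper: both rely on the counting identity $N_j = n_j + n_{2j} - \beta_j$ combined with the bounds from Lemma~\ref{minibnum} and Lemma~\ref{itmnum}. The paper phrases it as a proof by contradiction (assume $\beta_j \geq 5$, derive $N_j < N_j$) and routes the bound on $n_{2j}$ through Corollary~\ref{relationss}, whereas you argue directly and use $n_{2j}\leq \alpha_{2j}n_j$ straight from Lemma~\ref{minibnum}; the arithmetic is the same either way.
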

\begin{proof}
Assume by contradiction that at a given time there are at least
five declared type 2 containers of class $j$. Then, using the
numbers of large items of class $j$ in all four types of containers
for this class, $N_j \leq 2\cdot n_{2j}+n_{1j}-5 = n_j +n_{2j}-5
$. Using Lemma \ref{itmnum} and Corollary \ref{relationss}, we
have $N_j \leq n_j +n_{2j}-5\leq
(\frac{N_j}{1+\alpha_{2j}})(1+\alpha_{2j})-1 < N_j$, a
contradiction.
\end{proof}

Next, we consider the case where $j$ is a small or a tiny class.

\begin{lemma}\label{one_open_container_lem}
For any class $j$ of small  items, there is at most one value $i$
(such that $\alpha_{ij}>0$), for which there is a container of
class $j$ and type $i$ with less than $i$ items, and there is at
most one such container (and for any $i' \neq i$, every container
of class $j$ and type $i'$ has exactly $i'$ items). For the class
of tiny items,  there is at most one value of $i$ for which there
is a container of class $j$ and type $i$ with total size at most
$A_{i,M+1} -t_M$, and there is at most one such container (and for
any $i' \neq i$, every container of class $j$ and type $i'$ has a
total size larger than $A_{i',M+1} -t_M$).
\end{lemma}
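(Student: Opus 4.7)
The plan is a straightforward induction on the number of items that have arrived, using the invariant that at each moment (just after packing some item) there is at most one open container of class $j$, where ``open'' means the defining condition from the statement: having strictly fewer than $i$ items for a type $i$ container of a small class, or total size at most $A_{i,M+1}-t_M$ for a type $i$ container of the tiny class. Once this single-open-container invariant is established, the two claims of the lemma follow immediately: the open container (if it exists) determines the unique value $i$, and any other container of class $j$ is necessarily closed, hence has exactly $i'$ items (respectively, total size strictly above $A_{i',M+1}-t_M$) for its type $i'$.

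First I would note that the invariant holds vacuously before any item of class $j$ has arrived. For the induction step, I would trace through the packing rules stated for an item of a small or tiny class. Only the arrival of a class $j$ item (or the creation of a new class $j$ container) can affect the open-container count for class $j$, so modifications prompted by items of other classes are irrelevant. When a class $j$ item arrives and an open container of class $j$ already exists, the algorithm packs the item into that container. By induction this open container is unique, so no second open container is introduced; after packing, the container either becomes closed (by exceeding the relevant threshold, namely reaching $i$ items for a small class type $i$, or passing total size $A_{i,M+1}-t_M$ for tiny) and then there are zero open containers, or it remains open and is still the unique one.

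If instead no open container of class $j$ exists at the arrival time, the algorithm opens a new container. Here I would invoke the algorithm's rule that the type $i$ chosen must satisfy $\alpha_{ij}>0$, and emphasize that the new container starts with exactly one item (for small $j$) or with a single item of size at most $t_{j-1}\leq A_{i,M+1}$ (for $j=M+1$). In all cases this new container satisfies the open condition for its chosen type if and only if the threshold has not yet been crossed, in which case it becomes the unique open container; otherwise it is immediately closed and again there is no open container. Thus the invariant is preserved, completing the induction.

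The main obstacle, such as it is, lies in handling the tiny case cleanly, since ``open'' there is defined via a size threshold rather than a fixed item count, and one must be careful that the first (and only) item packed into a newly opened container of type $i$ of class $M+1$ cannot by itself push the total size above $A_{i,M+1}-t_M$ unless the container is then closed. This is immediate from $t_M \leq A_{i,M+1}$ together with the definition of the closing rule, so no additional work beyond a careful case split is required.
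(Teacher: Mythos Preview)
Your proposal is correct and takes essentially the same approach as the paper: the paper's proof is a one-sentence observation that the algorithm never opens a new container of class $j$ while an open one exists, and you have simply written this out as an explicit induction on arrivals. The extra discussion of the tiny case is harmless but unnecessary---whether the newly created container is open or immediately closed, the invariant is preserved either way, and nothing about $t_M \leq A_{i,M+1}$ is actually needed for that.
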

\begin{proof}
The lemma holds because for $j\geq b$ there is at most one open
container of class $j$ and (if $j \leq M$) such a container is of
some type $i$ for which $\alpha_{ij}>0$ (as the algorithm does not
open a new container of class $j$ until the previous open
container of this class is closed).
\end{proof}

\subsection{Analysis of the total weight of bins of the algorithm}
Our analysis is partitioned into two cases.  We first analyze the
total weight of items that are packed in bins containing
containers of the threshold class (assuming it is not a basic
class, otherwise there is no special analysis for this class). We
consider all other bins afterwards.

\subsubsection{Bins containing containers of the threshold class}
We start with the analysis of bins containing containers of the
threshold class, for the case where the threshold class is a large
class.

Consider the threshold class $k$ and assume that it is a large
class. We say that a temporary type 1 container of this class is
{\it smaller} if its volume is at most $1-a$, and that it is {\it
bigger} if its volume is above $1-a$. We note that smaller
temporary type 1 containers do not exist at termination, but we
will analyze arbitrary times during the execution of the
algorithm.

For this class, we use the following notation for the analysis.
Let $\nu(i)$ denote the number of containers of class $k$ after
$i$ items have arrived (and have been packed, this time is called
time $i$). Out of those containers, let $\nu_1(i)$ and $\nu_2(i)$
denote the numbers of containers of types 1 and 2, respectively
(so that the assignment of item $i$ is based on $n_{\ell
k}=\nu_{\ell}(i-1)$ for $\ell=1,2$, and
$\nu(i)=\nu_1(i)+\nu_2(i)$). Furthermore, let $\nu^r_1(i)$,
$\nu^{ts}_1(i)$, and $\nu^{tb}_1(i)$, denote the numbers of
regular type 1 containers of class $k$, smaller temporary type 1
containers of class $k$, and bigger temporary type 1 containers of
class $k$, respectively, after item $i$ has been packed (so
$\nu_1(i)=\nu^r_1(i)+\nu^{ts}_1(i)+\nu^{tb}_1(i)$), and let
$\nu^d_2(i)$ and $\nu^r_2(i)$ denote the numbers of declared and
regular type 2 containers of class $k$, respectively, after item
$i$ has been packed (so $\nu_2(i)=\nu^d_2(i)+\nu^r_2(i)$). Let
$\tau$ be the minimum index of an item such that for any $i>
\tau$, $\nu^{tb}_1(i)>0$ (note that the bigger temporary type 1
container created at time $\tau+1$ may change its type later on,
we only guarantee that there will always be a bigger temporary
type 1 container at all times after $\tau$). Letting $\mu$ denote
the total number of items in the input, and we use $\tau=\mu$ if
at termination there are no bigger temporary type 1 containers of
class $k$ (i.e., if $\nu^{tb}_1(\mu)=0$). Since (as argued above)
at termination there are no smaller temporary type 1 containers,
$\nu^{tb}_1(\mu)=0$ means that all type 1 containers of the output
are regular type 1 containers (this special case can be analyzed
more easily, but it will be included in the general analysis). The
case $\tau=0$ is possible, and in this case there is always a
bigger temporary type 1 container of class $k$.

Consider the case $\tau<\mu$, that is, there is at least one
additional input item after item $\tau$. By the definition of
$\tau$, item $\tau+1$ is of class $k$, its size is above $1-a$,
and a temporary type 1 container is created for it. Assume that
there exists at least one smaller temporary type 1 container of
class $k$ after item $\tau$ has been packed. All the smaller
temporary type 1 containers existing after item $\tau$ is packed
will exist also after item $\tau+1$ has been packed. In the next
lemma we show that all these containers will become regular type 1
containers.

\begin{lemma}\label{stillmb}
Consider a smaller temporary type 1 container of class $k$
existing at time $\tau$. This container will become a regular type
1 container before termination (and in particular it will not
become a type 2 container).
\end{lemma}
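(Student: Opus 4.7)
The plan is to combine Remark~\ref{rmk-temp}, which enumerates the only three mechanisms by which the type of a temporary type 1 container can change, with Lemma~\ref{sizet}, which asserts that every type 1 container of a large class with item size at most $1-a$ is a regular type 1 container in the output. Let $C$ denote the smaller temporary type 1 container of class $k$ under consideration. The key observation is that after time $\tau$ a bigger temporary type 1 container of class $k$ is present in the packing after every step, so the algorithm's Best Fit rule, whenever it must choose among temporary type 1 containers of class $k$, always prefers such a bigger container over $C$ in the two situations that would otherwise transform $C$ into a type 2 container.

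Concretely, I will rule out mechanisms (2) and (3) of Remark~\ref{rmk-temp}, namely the transitions of $C$ from temporary type 1 to declared type 2 (triggered by a new negative container of another class joining $C$'s bin) and to regular type 2 (triggered by a new class $k$ item being packed with $C$). In both cases the resulting class $k$ type 2 container would have volume $A_{2,k}=2t_{k-1}$, so the eligibility condition is class level and independent of the current volume of the chosen temporary type 1 container; hence if $C$ is eligible, then every bigger temporary type 1 container of class $k$ is also eligible, and Best Fit, which selects the container of largest volume, picks a bigger one rather than $C$. Thus $C$ cannot be converted into a type 2 container at any step $i>\tau$, and therefore remains a type 1 container with a unique item of size at most $1-a$ up to termination (recall that the volume of a type 1 container of a large class equals the size of its item). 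Lemma~\ref{sizet} at termination then forces $C$ to be a regular type 1 container, which means that at some intermediate step the first mechanism of Remark~\ref{rmk-temp} occurred and converted $C$ from temporary type 1 to regular type 1, as required.

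The main obstacle I anticipate is the boundary step $i=\tau+1$, because at the start of this step the bigger-container assumption need not hold; indeed, by the minimality of $\tau$ we may have $\nu^{tb}_1(\tau)=0$. Here I will rely on the description given immediately above the lemma: item $\tau+1$ is of class $k$, has size strictly above $1-a$, and is packed by opening a brand new bin containing a new (bigger) temporary type 1 container, via the branch of the algorithm that places a fresh negative container alone in a new bin. This branch does not touch any existing temporary type 1 container, so $C$ is unaffected during step $\tau+1$. For every later step $i\geq \tau+2$, one has $\nu^{tb}_1(i-1)\geq 1$ by the definition of $\tau$, and the Best Fit comparison above applies without exception.
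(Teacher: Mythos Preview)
Your proof is correct and follows essentially the same route as the paper's: both arguments hinge on the observation that whenever a temporary type 1 container of class $k$ is converted to type 2, Best Fit selects the one of largest volume (and eligibility is class-level, so all temporary type 1 containers of class $k$ qualify simultaneously), so after time $\tau$ a bigger container is always preferred to $C$. The paper phrases the conclusion by contradiction (if $C$ were chosen then no bigger one exists, contradicting the definition of $\tau$) and invokes the absence of smaller temporary type 1 containers at termination as a previously noted consequence of Lemma~\ref{sizet}; you argue directly and cite Lemma~\ref{sizet} explicitly, and you are more careful about the boundary step $\tau+1$, but the substance is the same.
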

\begin{proof}
As there are no smaller temporary type 1 containers at
termination, this container changes its type some time during the
arrival of items $\tau+2,\ldots,\mu$. We will show that it does
not become a type 2 container. In all cases where a temporary type
1 container becomes a type 2 container (no matter whether it
becomes a regular type 2 container or a declared type 2
container), the largest available temporary type 1 container of
the class is selected (see Remark \ref{rmk-temp}). Here, any
temporary type 1 container of class $k$ can be used, in the sense
that all type 2 containers of class $k$ have the same volume, so
the chosen temporary type 1 container of class $k$ is always the
largest one. Thus, if a smaller temporary type 1 container is
chosen, this means that there is no bigger temporary type 1
container, contradicting the choice of $\tau$.
\end{proof}

Let $N_k(i)$ denote the number of class $k$ items out of the first
$i$ arriving items (the class $k$ items existing at time $i$).

We say that a type 2 container of a large class $k$ is {\it convenient} if
it is combined in a bin with a negative container or if it has at
least one item of size above $1-a$.

In what follows, we will say that a type 2 container is {\it
created} at a certain time if this container was just defined at
this time and it was defined as a declared type 2 container
immediately (of class $k$) or if it was a temporary type 1
container and its type was just changed to type 2 (regular or
declared). That is, whenever the number of type 2 containers of
class $k$ increases, the container responsible for this change is
considered to be created. Note that type 2 containers remain type
2 containers (of the same class) till termination. For a type 1
container it is created simply when an item is packed into it.

\begin{lemma}\label{convi}
Every type 2 container of class $k$ created at time $\tau+1$ or later is convenient.
\end{lemma}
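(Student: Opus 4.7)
The plan is to perform a case analysis on the mechanism by which the type $2$ container of class $k$ is created, and show that in every case either this container is immediately combined with a negative container in its bin, or it contains a class $k$ item of size above $1-a$. First, at time $\tau+1$, the setup guarantees that item $\tau+1$ is of class $k$, has size above $1-a$, and opens a new (bigger) temporary type $1$ container; since each iteration of the algorithm creates at most one new container and modifies the type of at most one existing container, and this specific step produces a temporary type $1$ container rather than a type $2$ container, no type $2$ container of class $k$ is created at time $\tau+1$, so the lemma holds vacuously at that time. (The degenerate case $\tau = \mu$ is vacuous for the same reason.)

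For $t \geq \tau+2$, we have $t-1 > \tau$, so by the definition of $\tau$, $\nu^{tb}_1(t-1) > 0$: there is a bigger temporary type $1$ container of class $k$ just before item $t$ is packed. Following the algorithm description and Remark \ref{rmk-temp}, a new type $2$ container of class $k$ in a single step can appear only in one of the following ways. (i) Item $t$ is a new negative container of some class other than $k$ that is packed into a negative bin whose content is a temporary type $1$ container of class $k$, turning it into a declared type $2$; the new negative container shares the bin with it, so the container is convenient. (ii) Item $t$ is of class $k$ and is packed into a temporary type $1$ container of class $k$ via Best Fit, turning it into a regular type $2$; since all bigger temporary type $1$ containers of class $k$ have volume strictly larger than all smaller ones, and at least one bigger container exists by the definition of $\tau$, Best Fit selects a bigger one, so the resulting type $2$ container retains its original item of size above $1-a$, and hence is convenient. (iii) Item $t$ is of class $k$ and opens a new declared type $2$ container, placed either into an existing negative bin, in which case the container is combined with a negative one and thus convenient, or into a fresh bin; the latter sub-case is triggered by the algorithm only when no temporary type $1$ container of class $k$ exists at that moment, contradicting $\nu^{tb}_1(t-1) > 0$, and so it cannot occur.

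The main obstacle is verifying that the list (i)--(iii) really exhausts every mechanism by which a class $k$ container acquires type $2$. In particular, I would need to check, using Remark \ref{rmk-temp} together with the algorithm's rules for items of all other classes, that every action on an existing temporary type $1$ container of class $k$ triggered by an item of some class $j' \neq k$ either produces a regular type $1$ container (when a positive container of class $j'$ joins the bin) or falls into case (i) above, and that packing rules for huge or other large items never place a new positive class-$j'$ container together with a class $k$ container in a way that promotes it to type $2$. Once this exhaustiveness is established, the three cases above directly yield the claim.
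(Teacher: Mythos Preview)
Your proposal is correct and follows essentially the same approach as the paper's proof. The exhaustiveness concern you raise in your final paragraph is precisely what Remark~\ref{rmk-type2} supplies: it states directly that any type~2 container of class $k$ created without being immediately combined with a negative container must arise from the arrival of a class~$k$ item, which reduces the analysis to your cases (ii) and (iii), while case (i) is convenient by definition. One minor difference is that the paper argues case (ii) using the state \emph{after} packing (since $\nu^{tb}_1(t)>0$ for $t>\tau$, the remaining bigger container witnesses that the Best-Fit choice was also bigger), whereas you use the state \emph{before} packing ($\nu^{tb}_1(t-1)>0$ for $t\geq\tau+2$); this forces you to treat time $\tau+1$ separately, which you do correctly by noting that item $\tau+1$ opens a temporary type~1 container rather than a type~2 container.
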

\begin{proof}
The only case where a type 2 container of class $k$ is created and
it is not combined with a negative container in a bin immediately
is the case where an item of class $k$ just arrived (see Remark
\ref{rmk-type2}). Starting time $\tau+1$ there is always a
temporary type 1 container (of class $k$), so a declared type 2
container of class $k$ cannot be created unless it is combined
with a negative container in a bin immediately. Thus, it remains
to consider the case where the new item of class $k$ is added to a
temporary type 1 container to create a regular type 2 container
(of class $k$). Since after this is done there is still a bigger
temporary type 1 container of class $k$ (by the choice of $\tau$)
and such a container of maximum volume was selected to become a
regular type 2 container, the created type 2 container also has an
item of size above $1-a$ (its first item is such).
\end{proof}

\begin{lemma}\label{thosenicebins}
The total number of temporary type 1 containers of class $k$ at
termination of the algorithm
 is below
$$\alpha_{1k}\frac{N_k(\mu)-N_k(\tau)}{1+\alpha_{2k}}+5 \ . $$

The total number of convenient type 2 containers of class $k$ is
larger than

$$\alpha_{2k}\frac{N_k(\mu)-N_k(\tau)}{1+\alpha_{2k}}-5 \ . $$
\end{lemma}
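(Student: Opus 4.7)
The plan is to reduce both bounds to an application of Corollary~\ref{relationss} at two specific times, $\tau$ and $\mu$, by first identifying precisely which containers at termination are counted by the two quantities in the statement.

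I would begin by establishing the key structural fact that $\nu_1^{tb}(\tau) = 0$. Indeed, either $\tau = 0$, in which case no containers exist yet, or $\tau \geq 1$, in which case the minimality of $\tau$ in its defining property forces some $i > \tau - 1$ with $\nu_1^{tb}(i) = 0$; since $\nu_1^{tb}(j) > 0$ for all $j > \tau$, this value can only be $\tau$ itself. Thus every type 1 container of class $k$ present at time $\tau$ is either a regular type 1 container or a smaller temporary type 1 container. Regular type 1 containers never change type, and by Lemma~\ref{stillmb} smaller temporary type 1 containers existing at time $\tau$ all become regular type 1 by termination; in particular, each such container is still a type 1 container at time $\mu$. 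Meanwhile, by Lemma~\ref{sizet} every temporary type 1 container of class $k$ present at termination is bigger. Consequently every container counted by $\nu_1^{tb}(\mu)$ must have been created at some step strictly after $\tau$ and must still be of type 1 at $\mu$, which gives
\[
\nu_1^{tb}(\mu) \;\leq\; \nu_1(\mu) - \nu_1(\tau).
\]

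Next I would apply Corollary~\ref{relationss} to class $k$ at the two times $\mu$ and $\tau$. The upper bound on $n_{1k}$ at time $\mu$ and the lower bound on $n_{1k}$ at time $\tau$ together give
\[
\nu_1(\mu) - \nu_1(\tau) \;\leq\; \alpha_{1k}\frac{N_k(\mu)-N_k(\tau)}{1+\alpha_{2k}} + 4 \;<\; \alpha_{1k}\frac{N_k(\mu)-N_k(\tau)}{1+\alpha_{2k}} + 5,
\]
proving the first bound. For the second statement, Lemma~\ref{convi} guarantees that every type 2 container of class $k$ created at a step in $(\tau,\mu]$ is convenient, and since type 2 containers are never destroyed and never revert to type 1, the number of such containers equals $\nu_2(\mu) - \nu_2(\tau)$. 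Applying the lower bound on $n_{2k}$ at time $\mu$ and the upper bound at time $\tau$ from Corollary~\ref{relationss} gives
\[
\nu_2(\mu) - \nu_2(\tau) \;\geq\; \alpha_{2k}\frac{N_k(\mu)-N_k(\tau)}{1+\alpha_{2k}} - 4 \;>\; \alpha_{2k}\frac{N_k(\mu)-N_k(\tau)}{1+\alpha_{2k}} - 5,
\]
which is a lower bound on the number of convenient type 2 containers of class $k$.

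The main obstacle I would anticipate is convincing oneself that the inequality $\nu_1^{tb}(\mu) \leq \nu_1(\mu) - \nu_1(\tau)$ really does capture every bigger temporary type 1 container at termination. This step is the only part that is not just bookkeeping on numerical bounds; it requires combining the minimality of $\tau$, Lemma~\ref{stillmb} (to rule out that a smaller temporary type 1 container at time $\tau$ ever becomes type 2), and the monotonicity facts that regular type 1 containers never change type and that type 2 containers never revert to type 1. Once this structural claim is in place, both inequalities of the lemma follow by direct substitution of the bounds from Corollary~\ref{relationss}.
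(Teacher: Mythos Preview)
Your proof is correct and follows essentially the same approach as the paper's, using the same ingredients (Corollary~\ref{relationss} at times $\tau$ and $\mu$, Lemma~\ref{stillmb}, and Lemma~\ref{convi}); the only difference is presentational, as the paper frames both inequalities as arguments by contradiction while you establish them directly via $\nu_1^{tb}(\mu) \leq \nu_1(\mu) - \nu_1(\tau)$ and the analogous bound for $\nu_2$.
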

\begin{proof}
Consider a temporary type 1 container of class $k$ that is present
at termination. As all temporary type 1 containers of class $k$
existing at termination are bigger, it was created no earlier than
time $\tau+1$ (i.e., not prior to the packing of the $\tau+1$-th
item). Indeed, after it is created, there will be such bigger
temporary type 1 container present at all times (as this container
is present at termination), and moreover, there is at least one
bigger temporary type 1 container of class $k$ at all times after
the creation of such a container when packing the $\tau+1$-th item
(but that specific container created at time $\tau+1$ does not
necessarily remain a temporary type 1 container until
termination).

We will use Corollary \ref{relationss} for times $\tau$ and $\mu$.
After $N_k(\tau)$ items of class $k$ have arrived, there are at
most $\alpha_{2k} \cdot (\frac{N_k(\tau)}{1+\alpha_{2k}})+2$
containers of type 2 and at least $ \alpha_{1k}\cdot
(\frac{N_k(\tau)}{1+\alpha_{2k}}) $ containers of type 1 of class
$k$. At termination, using the same corollary, there are at least
$ \alpha_{2k}\cdot (\frac{N_k(\mu)}{1+\alpha_{2k}}) -2 $
containers of type 2 and at most $ \alpha_{1k}\cdot
(\frac{N_k(\mu)}{1+\alpha_{2k}}) +4$ containers of type 1 of class
$k$.

Assume that there are at least
$\alpha_{1k}\frac{N_k(\mu)-N_k(\tau)}{1+\alpha_{2k}}+5$ temporary
type 1 containers of class $k$ at termination. Every type 1
container of class $k$ existing at time $\tau$ (all of them are
either regular type 1 containers or smaller temporary type 1
containers at time $\tau$) will still be a type 1 container at
termination (regular type 1 containers remain such, and by Lemma
\ref{stillmb} temporary type 1 containers become regular type 1
containers). Thus, the total number of type 1 containers of class
$k$ at termination is at least their number after packing the
$\tau$-th item plus the number of temporary type 1 containers of
this class (at termination), as those were created at time
$\tau+1$ or later. In total, we get at least
$$\alpha_{1k}\frac{N_k(\mu)-N_k(\tau)}{1+\alpha_{2k}}+5 +
\alpha_{1k}\cdot \frac{N_k(\tau)}{1+\alpha_{2k}}= \alpha_{1k}\cdot
\frac{N_k(\mu)}{1+\alpha_{2k}}+5$$
 containers of type 1
of class $k$ at termination, a contradiction.

Next, assume that at most
$\alpha_{2k}\frac{N_k(\mu)-N_k(\tau)}{1+\alpha_{2k}}-5$ type 2
containers of class $k$ are created starting time $\tau+1$ (by
Lemma \ref{convi} they are all convenient). All type 2 containers
remain such, so the total number of type 2 containers of class $k$
is at least the number of these containers at time $\tau$ plus the
number of such containers created starting time $\tau+1$. This
number is at most
$$\alpha_{2k} \cdot \frac{N_k(\tau)}{1+\alpha_{2k}} +2  +
\alpha_{2k}\frac{N_k(\mu)-N_k(\tau)}{1+\alpha_{2k}}-5 =
\alpha_{2k} \cdot \frac{N_k(\mu)}{1+\alpha_{2k}}-3 , $$ a
contradiction.
\end{proof}

Recall that our parameters are selected such that $\alpha_{2j}>0$
for every large class $j$.

\begin{corollary}\label{cor-conv}
Let $C$ be the number of convenient type $2$ containers of class $k$ at termination.  Then, the number of temporary type 1 containers of class $k$ at termination is at most $\frac{\alpha_{1k}}{\alpha_{2k}}(C+5)+5$.
\end{corollary}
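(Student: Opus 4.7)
The plan is to derive this corollary as a direct algebraic consequence of the two bounds in Lemma \ref{thosenicebins}. First I would apply the second part of Lemma \ref{thosenicebins}, which says that the number $C$ of convenient type 2 containers of class $k$ (at termination) strictly exceeds $\alpha_{2k}\frac{N_k(\mu)-N_k(\tau)}{1+\alpha_{2k}}-5$. Rearranging, this gives $\frac{N_k(\mu)-N_k(\tau)}{1+\alpha_{2k}} < \frac{C+5}{\alpha_{2k}}$, where the division is legal because the algorithm's parameters satisfy $\alpha_{2k} > 0$ for every large class, as noted just before the corollary.

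Next I would plug this upper bound on $\frac{N_k(\mu)-N_k(\tau)}{1+\alpha_{2k}}$ into the first part of Lemma \ref{thosenicebins}, which bounds the number of temporary type 1 containers of class $k$ at termination by $\alpha_{1k}\frac{N_k(\mu)-N_k(\tau)}{1+\alpha_{2k}}+5$. Since $\alpha_{1k} \geq 0$, multiplying through preserves the inequality, yielding that the number of temporary type 1 containers of class $k$ at termination is strictly less than $\alpha_{1k} \cdot \frac{C+5}{\alpha_{2k}} + 5 = \frac{\alpha_{1k}}{\alpha_{2k}}(C+5) + 5$, which is exactly the claimed bound (the strict inequality clearly implies the claimed non-strict one).

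There is essentially no obstacle here, since Lemma \ref{thosenicebins} already did the hard combinatorial work of relating the counts of temporary type 1 and convenient type 2 containers to $N_k(\mu)-N_k(\tau)$. The only thing to be careful about is to verify that $\alpha_{2k}>0$ (so the division is valid) and that $\alpha_{1k}\geq 0$ (so multiplying the inequality by it preserves its direction); both are built into the algorithm's parameters. The corollary simply eliminates the implicit quantity $\frac{N_k(\mu)-N_k(\tau)}{1+\alpha_{2k}}$ between the two bounds to produce a direct relationship between the two container counts that is useful for the subsequent weight-function analysis.
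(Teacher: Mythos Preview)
Your proposal is correct and is exactly the intended derivation: the paper states Corollary~\ref{cor-conv} without an explicit proof, so it is meant to follow immediately from Lemma~\ref{thosenicebins} by eliminating the common quantity $\frac{N_k(\mu)-N_k(\tau)}{1+\alpha_{2k}}$ between the two bounds, using $\alpha_{2k}>0$ (noted just before the corollary) to justify the division.
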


Next, we consider the weight function we defined using the values
of $u,v,w$ that have real values in $[0,1]$ and satisfy the
constraints (\ref{cons1}), (\ref{cons2}), (\ref{cons3}), and
(\ref{cons4}). We consider the total weight of the items of class
$k$ together with the required weight of the containers (of other
classes) that are packed together (i.e., in common bins) with the
containers of class $k$. We let $\phi_k$ denote the total weight
of the items of class $k$ together with the required weight of the
containers (of other classes) that are packed together with the
containers of class $k$. Recall that $\nu(i)$ is the number of
bins containing containers of class $k$ after $i$ items are
packed.

\begin{lemma}\label{consts1}
If $\alpha_{1k}=0$, then $\phi_k \geq \nu(\mu)-3$.
\end{lemma}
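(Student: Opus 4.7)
The plan is to partition the $\nu(\mu)$ bins containing a class $k$ container by the terminal type of that container, show that every bin whose class $k$ container is a regular type $2$ container contributes at least $1$ to $\phi_k$, and bound the number of remaining ``exceptional'' bins by $3$. Throughout, I use $\alpha_{1k}=0$ (whence $\alpha_{2k}=1$), so the constraints (\ref{cons2pr}), (\ref{cons3pr}), and (\ref{cons4pr}) specialize respectively to $2u \geq 1$, $u + v \geq 1$, and $2u + (1-w) \geq 1$.

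For a bin whose class $k$ container is a regular type $2$ container, that container holds exactly two items of class $k$ and has volume $2 t_{k-1} > 2/3$, so any negative container of another class co-packed with it has volume at most $1 - 2 t_{k-1} < 1/3 \leq 1-a$, giving required weight $1-w$. Hence a regular type $2$ bin of class $k$ matched with a negative container contributes at least $2u + (1-w) \geq 1$ to $\phi_k$; an unmatched such bin whose two items are both of size at most $1-a$ contributes at least $2u \geq 1$; and an unmatched such bin containing at least one item of size greater than $1-a$ contributes at least $u + v \geq 1$.

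The remaining exceptional bins have a class $k$ container that is either of type $1$ (regular or temporary) or a declared type $2$ container; each such bin contributes non-negatively to $\phi_k$, so it suffices to bound their number. By Lemma \ref{minibnum} with $\alpha_{1k}=0$ we obtain $n_{1k} \leq 2$, and by Lemma \ref{four_declared_lem} the number $n^d_{2k}$ of declared type $2$ containers is at most $4$; summing directly gives only the crude bound $6$.

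The main obstacle, and the bulk of the work, is tightening this to the joint bound $n_{1k} + n^d_{2k} \leq 3$. I plan to establish this by induction on the arrival sequence, adapting the counting argument of Lemma \ref{four_declared_lem} to the combined quantity. The key observations are that with $\alpha_{1k}=0$ the large-item rule forces $\lfloor \alpha_{2k} n_k \rfloor = n_k$, so a new declared type $2$ container of class $k$ is created only when no temporary type $1$ of class $k$ is available to absorb the new item, and that every step that converts a temporary type $1$ into a declared type $2$ (via a small or tiny negative container joining the bin), fills a declared type $2$ (via the priority rule when a class $k$ item arrives), or consumes a temporary type $1$ via option~(2) Best Fit, leaves $n_{1k} + n^d_{2k}$ non-increasing. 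A careful case analysis of the remaining events, analogous to the bookkeeping in Lemma \ref{four_declared_lem}, yields $n_{1k} + n^d_{2k} \leq 3$; combined with the contribution analysis of the previous paragraph this gives $\phi_k \geq \nu(\mu) - 3$.
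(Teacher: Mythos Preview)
Your bin-by-bin decomposition is sound through the first paragraph: each regular type 2 bin does contribute at least $1$ to $\phi_k$ (in fact $2u \geq 1$ alone already covers all three cases, since $u \leq v$ and $w \leq 1$ make the matched and the ``one item large'' contributions at least $2u$). The gap is in the last paragraph. You assert $n_{1k} + n^d_{2k} \leq 3$ but do not prove it, and the outlined justification contains a mistake: when a class $k$ item arrives with $n_{2k} < n_k$ and some negative bin of another class can host a type 2 container, the algorithm creates a new declared type 2 container \emph{even if} a temporary type 1 of class $k$ is present (the negative-bin check precedes the temporary-type-1 check in the large-item rule). So your key observation that ``a new declared type 2 container of class $k$ is created only when no temporary type 1 of class $k$ is available'' is false, and the sketched induction does not go through as written. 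The bound itself is true (indeed $n_{1k}+n^d_{2k}\leq 2$, because every increase of the sum is triggered by a class $k$ arrival with $n^d_{2k}=0$, and $n_{1k}\leq 1$ always), but you have not supplied that analysis.

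More to the point, the paper avoids this detour entirely. Rather than bounding exceptional \emph{bins}, it counts \emph{items}: with at most two type 1 containers (Lemma \ref{minibnum}) and at most four declared type 2 containers (Lemma \ref{four_declared_lem}), at most six containers of class $k$ hold a single item, so the $\nu(\mu)$ containers hold at least $2\nu(\mu)-6$ class $k$ items. Each has weight at least $u$ by (\ref{cons1}), and constraint (\ref{cons2pr}) with $\alpha_{1k}=0$ gives $2u\geq 1$, whence $\phi_k \geq 2u(\nu(\mu)-3) \geq \nu(\mu)-3$ in one line. In your own framework you could recover this simply by observing that each exceptional bin still contains a class $k$ item and so contributes at least $u\geq \tfrac12$ rather than merely $0$; with six exceptional bins this already gives $(\nu(\mu)-6)\cdot 1 + 6\cdot\tfrac12 = \nu(\mu)-3$, with no need to tighten the count.
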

\begin{proof}
Since $\alpha_{1k}=0$, by Lemma \ref{minibnum}, every container of
class $k$ is a type 2 container, except for at most two containers
(and thus there are at most six containers of this class with
exactly one item). By Lemma \ref{four_declared_lem}, the number of
items in these $\nu(\mu)$ containers of class $k$ is at least
$2\nu(\mu)-6$, and each such item has weight of at least $u$
(using constraint (\ref{cons1})). Thus, $\phi_k \geq 2u \cdot
(\nu(\mu)-3) \geq \nu(\mu)-3$ where the last inequality holds by
constraint (\ref{cons2}) which is equivalent for this case to the
constraint $2u \geq 1$.
\end{proof}

Thus, we next assume that $\alpha_{1k} >0$.  Thus, in the next
lemma we assume that $\alpha_{1k}, \alpha_{2k} >0$. Let $\lambda =
5\cdot \frac{\alpha_{1k}}{\alpha_{2k}} +5$.

\begin{lemma}\label{consts2}
Assume that $\alpha_{1k}>0$ holds.   Then $\phi_k \geq
\nu(\mu)-2\lambda -8 -\frac{\lambda}{\alpha_{1k}}$.
\end{lemma}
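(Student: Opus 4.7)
The plan is to decompose $\phi_k$ into contributions from each bin containing a class-$k$ container, and then invoke the LP constraints (\ref{cons2})--(\ref{cons4}) on a fractional pairing of type 1 bins with type 2 bins in the ratio $\alpha_{1k} : \alpha_{2k}$.

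Let $R$, $T$, $C$, $N$ denote the numbers of regular type 1, temporary type 1, convenient type 2, and non-convenient type 2 containers of class $k$ at termination, so $\nu(\mu) = R + T + C + N$; further partition $C = C_A + C_B$ according to whether the convenient type 2 bin is matched with a negative container or instead holds an item of size above $1-a$. Using Lemma \ref{stillmb} (every temporary type 1 container present at termination is bigger, so its sole class-$k$ item has weight $v$), invoking the required weights of the partner containers of other classes, and absorbing up to $4$ units of slack for the at most four declared type 2 containers of class $k$ (Lemma \ref{four_declared_lem}), each of which holds only one class-$k$ item rather than two, I would derive
\begin{equation*}
\phi_k \geq R(u+w) + Tv + C_A(2u+1-w) + C_B(u+v) + 2uN - 4.
\end{equation*}

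Next, I would interpret the constraints as pairing identities. Multiplying (\ref{cons2}) by $R/\alpha_{1k}$ shows that $R$ regular type 1 bins paired with $R\alpha_{2k}/\alpha_{1k}$ type 2 bins (of any subtype, since each type 2 container contributes weight at least $2u$) have combined weight at least $R + R\alpha_{2k}/\alpha_{1k}$, exactly the number of paired bins. Similarly, for any $\theta \in [0,1]$, a convex combination of (\ref{cons3}) and (\ref{cons4}) with weights $1-\theta$ and $\theta$, scaled by $P_2/\alpha_{1k}$, shows that $P_2$ temporary type 1 bins paired with $P_2\alpha_{2k}/\alpha_{1k}$ convenient type 2 bins (with matched fraction $\theta$) contribute weight at least $P_2 + P_2\alpha_{2k}/\alpha_{1k}$. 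Such a fractional pairing is feasible whenever $P_2\alpha_{2k}/\alpha_{1k} \leq C$ and $(P_1+P_2)\alpha_{2k}/\alpha_{1k} \leq C + N$.

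I would then carry out a case analysis based on whether $T\alpha_{2k}/\alpha_{1k} \leq C$ and whether $R\alpha_{2k}/\alpha_{1k} \leq N$. In each case, choosing $P_1$ and $P_2$ as large as the two feasibility bounds permit, the pairing identities cover the paired bins by their own count in weight, and the remaining unpaired bins contribute non-negative weight. Corollary \ref{cor-conv} bounds the number of unpaired temporary type 1 bins by at most $\lambda$; one sub-case leaves at most $\lambda\alpha_{2k}/\alpha_{1k}$ non-convenient type 2 bins unpaired, while Corollary \ref{relationss} bounds the number of unpaired regular type 1 bins in the remaining sub-case.

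The main obstacle is the case where both shortages hit simultaneously, $T\alpha_{2k}/\alpha_{1k} > C$ and $R\alpha_{2k}/\alpha_{1k} > N$. Then all $C$ convenient bins go to temporary type 1 and all $N$ non-convenient bins to regular type 1, leaving unpaired counts $T - C\alpha_{1k}/\alpha_{2k}$ and $R - N\alpha_{1k}/\alpha_{2k}$, summing to $\nu_1(\mu) - \nu_2(\mu)\alpha_{1k}/\alpha_{2k}$, which Corollary \ref{relationss} bounds by a constant depending on $\alpha_{1k}$ and $\alpha_{2k}$. Bundling the constant-size deficits conservatively — the $-4$ from declared type 2 containers, at most $\lambda$ unpaired temporary bins, at most $\lambda\alpha_{2k}/\alpha_{1k}$ unpaired non-convenient bins, and the floor-rounding errors propagated through Corollary \ref{relationss} — and using $\alpha_{1k}+\alpha_{2k}=1$ to simplify, yields the claimed bound $\phi_k \geq \nu(\mu) - 2\lambda - 8 - \lambda/\alpha_{1k}$.
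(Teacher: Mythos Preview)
Your approach and the paper's share the same underlying idea—interpreting constraints (\ref{cons2})--(\ref{cons4}) as saying that a type~1 container together with $\alpha_{2k}/\alpha_{1k}$ type~2 containers of the appropriate flavour carry combined weight at least their bin count—but the executions diverge. The paper splits only on whether $C_3 < \lambda$ or $C_3 \geq \lambda$; in the second case it introduces auxiliary amounts $C'_1,C'_2$ with $C'_1+C'_2 = \frac{\alpha_{2k}}{\alpha_{1k}}(C_3-\lambda)$ (this is exactly your temporary--convenient pairing, with the $\lambda$ slack built in), and then applies (\ref{cons2pr}) \emph{globally} to the remaining mass $(\nu(\mu)-C_3/\alpha_{1k})$ rather than pairing regular type~1 containers explicitly. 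This sidesteps your four-way case split entirely and leads to a short chain of inequalities.

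Your plan is sound and, if completed, would prove the lemma (indeed with a tighter additive constant). The gap is that the decisive step—``bundling the constant-size deficits conservatively''—is precisely where all the work lies, and you have not done it. In particular, the assertion that at most $\lambda\alpha_{2k}/\alpha_{1k}$ non-convenient type~2 bins remain unpaired is not immediate: in the sub-case $T\alpha_{2k}/\alpha_{1k}>C$ and $R\alpha_{2k}/\alpha_{1k}\le N$, the leftover non-convenient count is $N-R\alpha_{2k}/\alpha_{1k}$, and bounding this by $\lambda\alpha_{2k}/\alpha_{1k}$ requires combining $\nu_2(\mu)\le \nu_1(\mu)\alpha_{2k}/\alpha_{1k}$ (from Corollary~\ref{minibnum2}) with Corollary~\ref{cor-conv} in a short but non-obvious calculation that you omit. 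You also never verify that the four listed deficit terms cannot all be near-maximal simultaneously, so simply summing them is over-conservative (though still sufficient for the stated bound). The paper's two-case algebraic route avoids these bookkeeping hazards.
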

\begin{proof}
The number of containers of class $k$ at termination is
$\nu(\mu)$, and this is also the number of bins containing such
containers.  To ease the description of the proof below, we let
the {\it weight} of a container of a class not equal to $k$ but
packed with an item of class $k$ in a bin to be its required
weight.  Thus, the total weight of a bin $B$ is the total weight
of items of class $k$ packed into $B$ together with the required
weight of a container packed into $B$ of a class not equal to $k$
(if there is such a container).

For a declared type 2 container of class $k$, the total weight of
the container is at least $u$. For a regular type 2 container of
class $k$, there are three cases. If this container is convenient
in the sense that it has an item of size above $1-a$, then the
weight of the bin is at least $u+v$. If it is convenient in the
sense that this bin also contains a negative container, then the
weight of the bin is at least $2u+(1-w)$. In any case, the weight
of a bin containing a regular type 2 container is no smaller than
$2u$ (using constraint (\ref{cons1}) and $0\leq w \leq 1$).

Let $C_1$ and $C_2$ denote the numbers of the two kinds of
convenient type 2 containers, respectively (where $C_1$ is the
number of convenient containers with an item of size above $1-a$
and $C_2$ is the number of all other convenient containers). Then,
as there are at most four declared type $2$ containers, the total
weight of bins containing type 2 containers of class $k$ is at
least \begin{equation}\label{lem16-1}2u\cdot
(\nu_2(\mu)-4-C_1-C_2)+4u+C_1\cdot(u+v)+C_2\cdot(2u+1-w)
.\end{equation} Here, the first expression in the sum (i.e.,
$2u\cdot (\nu_2(\mu)-4-C_1-C_2)+4u=2u\cdot
(\nu_2(\mu)-2-C_1-C_2)$) is a lower bound on the total weight of
bins containing containers of class $k$ that are not convenient.

For a temporary type 1 container of class $k$, as all bins
containing such a containers have (at termination) exactly one
item, and its size is above $1-a$, the weight of such a bin is
$v$. Recall that a type 1 container that is matched to a positive
container is always a regular type 1 container.  Thus, for a
regular type 1 container, it is combined with a positive container
in its bin, and therefore the total weight of such a bin is at
least $u+w$ (once again using constraint (\ref{cons1}) and $w \leq
1$). Letting $C_3$ denote the final number of temporary type 1
containers of class $k$, the total weight of all bins containing a
type 1 container of class $k$ is at least
\begin{equation}\label{lem16-1a} (w+u) \cdot (\nu_1(\mu)-C_3) + vC_3
\end{equation} because $\nu_1(\mu)-C_3$ is the number of regular type 1
containers and $C_3$ is the number of temporary type 1 containers.

Thus, we have \begin{equation}\label{lem16-2}\phi_k \geq 2u\cdot
(\nu_2(\mu)-2-C_1-C_2)+C_1 \cdot (u+v)+C_2\cdot(2u+1-w) + (w+u)
\cdot (\nu_1(\mu)-C_3) + v\cdot C_3 .\end{equation}

If $C_3-\lambda<0$, we use simpler properties as follows. The
total weight of bins with type 2 containers is at least $2u\cdot
(\nu_2(\mu)-4)+4u$. In total, we get a lower estimation on the
total weight of bins containing items of class $k$ of $\phi_k \geq
2u\cdot (\nu_2(\mu)-2)+(w+u) \cdot (\nu_1(\mu)-C_3)$, similarly to
(\ref{lem16-1}) and (\ref{lem16-1a}). As $(w+u)C_3 \leq 2C_3 <
2\lambda$ because of our assumption and by $4u\leq 4$, we have
using Corollary \ref{minibnum2} that
$$\phi_k \geq 2u\cdot \nu_2(\mu) +(w+u) \cdot \nu_1(\mu) -2\lambda -4 \geq 2u\cdot
(\alpha_{2k}\nu(\mu)-2) +(w+u) \cdot (\alpha_{1k}\nu(\mu)) -2\lambda -4 $$
$$\geq \nu(\mu)(u(1+\alpha_{2k})+w\alpha_{1k})-2\lambda -8 \geq \nu(\mu)-2\lambda -8 \ ,
$$ where in the second inequality we used
$\nu_1(\mu) \geq \alpha_{1k}\nu(\mu)$ and $\nu_2(\mu) \geq
\alpha_{2k}\nu(\mu)-2$ (which holds by Lemma \ref{relationss}),
and the last inequality follows by constraint (\ref{cons2pr}), and
the lemma follows.

Thus, in the remaining part of the proof, we assume that $C_3-\lambda \geq 0$.

By Corollary \ref{cor-conv}, $C_3 -\lambda \leq
\frac{\alpha_{1k}}{\alpha_{2k}}(C_1 + C_2)$. Let
$C'_1=\frac{\alpha_{2k}}{\alpha_{1k}} \cdot (C_3 -\lambda) -C_2 \leq C_1$
and $C'_2=C_2$. If $C'_1<0$, instead of these values we let
$C'_1=0$ and $C'_2=\frac{\alpha_{2k}}{\alpha_{1k}}(C_3-\lambda) \leq C_2$
(and $C'_2 \geq 0$, as the case where this last value is negative
was considered earlier in the case $C_3-\lambda <0$).

Based on (\ref{lem16-1}), the total weight of bins with type 2
containers is at least $2u\cdot
(\nu_2(\mu)-2-C_1-C_2)+C_1(u+v)+C_2(2u+1-w) \geq 2u\cdot
(\nu_2(\mu)-2-C_1-C_2)+C'_1(u+v)+(C_1-C'_1)2u
+C'_2(2u+1-w)+(C_2-C'_2)\cdot(2u)=2u\cdot
(\nu_2(\mu)-2-C'_1-C'_2)+C'_1(u+v)+C'_2(2u+1-w)$.

In total, we get similarly to (\ref{lem16-2}) that $\phi_k \geq
2u\cdot (\nu_2(\mu)-2-C'_1-C'_2)+C'_1(u+v)+C'_2(2u+1-w)+(w+u)
\cdot (\nu_1(\mu)-C_3) + vC_3$.

Since in both possible definitions of $C'_1$ and $C'_2$,
\begin{equation}\label{lem16-3}
C_3=\frac{\alpha_{1k}}{\alpha_{2k}}\cdot (C'_1+C'_2) +\lambda \ .
\end{equation}  We finally get

\begin{eqnarray}
\phi_k &\geq& 2u\cdot (\nu_2(\mu)-2-C'_1-C'_2)+ C'_1(u+v)
\nonumber \\ && +C'_2(2u+1-w)+(w+u) \cdot
(\nu_1(\mu)-C_3) + vC_3 \nonumber \\
&\geq & 2u\cdot (\alpha_{2k}\nu(\mu)-C'_1-C'_2)+(w+u) \cdot
(\alpha_{1k}\nu(\mu)-C_3) \nonumber \\ && + C'_1(u+v)+C'_2(2u+1-w) + vC_3 -8u \label{ineql1}\\
&=& 2u\cdot
(\alpha_{2k}\nu(\mu)-\frac{\alpha_{2k}}{\alpha_{1k}}(C_3-\lambda))+(w+u)
\cdot (\alpha_{1k}\nu(\mu)-C_3) \nonumber \\ && + C'_1(u+v)+C'_2(2u+1-w) + vC_3 -8u \label{ineql2}\\
&=& (\nu(\mu) - \frac{C_3}{\alpha_{1k}} ) \cdot (u(1+\alpha_{2k})+\alpha_{1k}w)  \nonumber \\ && + C'_1(u+v)+C'_2(2u+1-w) + vC_3 -8u + 2u \cdot \frac{\alpha_{2k}}{\alpha_{1k}} \lambda \label{ineql3}\\
&\geq& \nu(\mu) - \frac{C_3}{\alpha_{1k}} + C'_1(u+v)+C'_2(2u+1-w) + vC_3 -8u + 2u \cdot \frac{\alpha_{2k}}{\alpha_{1k}} \lambda \label{ineql4}\\
&\geq& \nu(\mu) - \frac{C_3}{\alpha_{1k}} + C'_1(u+v)+ \frac{C'_2}{\alpha_{2k}}-\frac{\alpha_{1k}}{\alpha_{2k}} \cdot vC'_2 + vC_3 -8u + 2u \cdot \frac{\alpha_{2k}}{\alpha_{1k}} \lambda \label{ineql5}\\
&=& \nu(\mu) - \frac{C_3}{\alpha_{1k}} + C'_1(u+v)+ \frac{C'_2}{\alpha_{2k}}-\frac{\alpha_{1k}}{\alpha_{2k}} \cdot vC'_2 \nonumber\\&& + v\frac{\alpha_{1k}}{\alpha_{2k}} (C'_1+C'_2) -8u + 2u \cdot \frac{\alpha_{2k}}{\alpha_{1k}} \lambda +v\lambda \label{ineql6} \\
&=& \nu(\mu) - \frac{C_3}{\alpha_{1k}} + C'_1(u+v+v\frac{\alpha_{1k}}{\alpha_{2k}})+ \frac{C'_2}{\alpha_{2k}}-8u + 2u \cdot \frac{\alpha_{2k}}{\alpha_{1k}} \lambda +v\lambda \label{ineql7}\\
&\geq& \nu(\mu) - \frac{C_3}{\alpha_{1k}} + \frac{C'_1}{\alpha_{2k}}+ \frac{C'_2}{\alpha_{2k}}-8u + 2u \cdot \frac{\alpha_{2k}}{\alpha_{1k}} \lambda +v\lambda \label{ineql8}\\
&=& \nu(\mu) -8u + 2u \cdot \frac{\alpha_{2k}}{\alpha_{1k}} \lambda +v\lambda - \frac{\lambda}{\alpha_{1k}} , \label{ineql9}
\end{eqnarray}
where (\ref{ineql1}) follows by $\nu_1(\mu) \geq
\alpha_{1k}\nu(\mu)$ and $\nu_2(\mu) \geq \alpha_{2k}\nu(\mu)-2$
(by Lemma \ref{relationss}), (\ref{ineql2}) follows by
(\ref{lem16-3}), (\ref{ineql3}) holds by simple algebraic
transformation and by substituting $\alpha_{1k}+\alpha_{2k}=1$,
(\ref{ineql4}) holds by constraint (\ref{cons2pr}), (\ref{ineql5})
follows by constraint (\ref{cons4pr}) and by applying
$\alpha_{1k}+\alpha_{2k}=1$ again,  (\ref{ineql6}) follows by
(\ref{lem16-3}), (\ref{ineql7}) holds by simple algebraic
transformations, (\ref{ineql8}) holds because
$u+v+v(\frac{\alpha_{1k}}{\alpha_{2k}})=\frac{1}{\alpha_{2k}}(\alpha_{2k}u+v)
\geq \frac{1}{\alpha_{2k}}$ where the last inequality holds using
constraint (\ref{cons3pr}), and (\ref{ineql9}) holds by
(\ref{lem16-3}).  The claim follows using $u,v \geq 0$.
\end{proof}


\subsubsection{The total weight of items of basic classes and the class of huge items}

\begin{lemma}
For any bin (excluding bins containing at least one item of class
$k$, if $k$ is not a basic class), the total required weight of
the containers that are packed in this bin is at least $1$.
\end{lemma}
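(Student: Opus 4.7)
The plan is a case analysis on the type of the bin (positive, negative, or neutral), showing that in each case the sum of the required weights of its containers is at least $1$. The two key prior facts I will invoke are: (i) by the definition of $a$, every unmatched positive container has volume at least $a$, and (ii) by the discussion following Lemma \ref{sizet} (which combines Lemma \ref{bignono} and Lemma \ref{sizet}), every unmatched negative container has volume strictly greater than $1-a$.

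For a positive bin, its unique (positive) container is unmatched, hence of volume at least $a$, and the corresponding rule assigns it required weight $1$. For a negative bin, its unique (negative) container is unmatched, hence of volume strictly greater than $1-a$; when $a\geq 2/3$ (so $k$ is basic and no exclusion applies), the rule directly assigns required weight $1$ to any negative container of volume in $(1-a,1/2]$; when $a<2/3$ (so $k$ is large and class-$k$ bins are excluded by hypothesis), the container is of a non-threshold class, for which the rule gives required weight $1$ whenever the volume exceeds $x$, and this holds because $1-a>x$. In particular, a temporary type $1$ container of class $k$, which would otherwise appear as a negative bin with no defined required weight, is excluded by the lemma's hypothesis.

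For a neutral bin, the exclusion removes all bins containing any class-$k$ container when $k$ is non-basic, so both of its two containers have defined required weights. The positive container's required weight lies in $\{w,1\}$, hence is at least $w$, while the negative container's required weight lies in $\{1-w,1\}$, hence is at least $1-w$; the sum is therefore at least $w+(1-w)=1$. The only real obstacle is the bookkeeping: one must track the two scenarios $a\geq 2/3$ and $a<2/3$ together with the class-$k$ exclusion so that the appropriate required-weight rule is invoked for each (non-excluded) container, after which all three cases are immediate.
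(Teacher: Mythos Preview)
Your proposal is correct and follows essentially the same approach as the paper: a case split into neutral bins (where $w+(1-w)=1$) versus positive or negative bins (where the unmatched container has volume $\geq a$ or $>1-a$ respectively, hence required weight $1$). The paper's proof is a terse three sentences; your version is more careful in explicitly tracking the $a\geq 2/3$ versus $a<2/3$ scenarios and in noting why the class-$k$ exclusion guarantees that required weights are defined for every container under consideration, but the underlying argument is identical.
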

\begin{proof}
If the bin contains both a positive container and a negative
container, we are done, as their required weights are at least $w$
and $1-w$, respectively (using $0\leq w \leq 1$). A positive
container that was not combined with a negative one has volume of
at least $a$, and a negative container that was not combined with
a positive one has volume above $1-a$. Such containers have
required weights of $1$.
\end{proof}

Next, we show that for a basic class and the class of huge items,
we have that the total required weight of the containers of class
$j$ is at most the total weight of the items of class $j$ plus a
constant.

\begin{lemma}
If $j=1$, then the total required weight of the containers of class $j$ equals the total weight of the items of class $j$.
\end{lemma}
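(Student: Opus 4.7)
The plan is to observe that for class $j=1$ (the huge items), there is a one-to-one correspondence between containers and items, and the required weight of each such container matches the weight of its unique item by the respective definitions. This makes the claim essentially a matter of unfolding two definitions.

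First I would recall that every huge item, upon arrival, is immediately placed in its own positive container (a type 1 container with exactly one item); no second huge item is ever added to such a container, since any two huge items together exceed total size $1$. Thus the containers of class $1$ are in bijection with the huge items, and it suffices to show that for each huge item $e$ of size $s(e)$, the weight of $e$ equals the required weight of its container $C_e$.

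Second, I would recall that for a container holding one huge or large item, its volume is defined to be the exact size of that item, so $\mathrm{vol}(C_e) = s(e)$. By the definition of required weight for positive containers, this is $1$ when $\mathrm{vol}(C_e) \ge a$ and $w$ when $\mathrm{vol}(C_e) \in (1/2, a)$. On the item-weight side, the weight of a huge item was defined (in Section~\ref{amor}) to be $1$ if $s(e) \ge a$ and $w$ if $s(e) < a$. Since $s(e) > 1/2$ for every huge item, these two rules agree case by case.

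Summing the equality over all huge items (equivalently, over all class-$1$ containers) gives the total weight of items of class $1$ equals the total required weight of containers of class $1$, with no error term at all. There is no real obstacle here; the only thing to be careful about is the slight asymmetry between the "volume equals exact size" convention for huge/large single-item containers and the "volume equals $A_{i,j}$" convention elsewhere, which is precisely what makes the identity exact rather than merely approximate for the huge class.
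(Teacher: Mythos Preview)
Your proof is correct and takes essentially the same approach as the paper: unfolding the definitions of the weight of a huge item and the required weight of its container and observing they coincide because the volume of a single-huge-item container is exactly the item's size. The paper's own proof is a single sentence to this effect, whereas you have spelled out the bijection and the case distinction explicitly.
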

\begin{proof}
The lemma follows by our definition of a weight of a huge item (it
is $w$ if its size is smaller than $a$ and $1$ otherwise).
\end{proof}

For a basic class $2 \leq j\leq M+1$, let $\zeta_j$ be the number
of strictly positive values of $\alpha_{ij}$. For $j=M+1$, we let
$\gamma_{M+1}=\zeta_{M+1}=p$. For $j \leq M$, we have $\zeta_j
\leq \gamma_j$, and for our parameters we actually have $\zeta_j
\leq 3$ for all $j$, and $\zeta_j=2$ for most values of $j$ (but
we sometimes have $\zeta_j=3$ and this is an important new feature
of our algorithm). Let ${\cal{R}}_j$ be the total required weight
of all containers of class $j$, and let $W_j$ be the total weight
of items of class $j$. By definition, letting $n_{ij}$ denote the
number of containers of class $j$ and type $i$ (for class $M+1$ it
is denoted by $n_{i,M+1}$), ${\cal{R}}_j=\sum_{i=1}^{\gamma_j}
r_{x,y}(i,j) \cdot n_{ij}$ and for $j \leq M$, $W_j = \omega_j
\cdot N_j$.


\begin{lemma}
For any small or tiny class $j$, ${\cal{R}}_j \leq n_j\cdot
\sum_{i=1}^{\gamma_j} \alpha_{ij}\cdot r_{x,y}(i,j)+\zeta_j$. For
any basic large class $j'$, ${\cal{R}}_{j'} \leq n_{j'}\cdot
\sum_{i=1}^{2} \alpha_{ij'}\cdot r_{x,y}(i,j')+2$.
\end{lemma}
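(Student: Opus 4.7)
The plan is to decompose ${\cal{R}}_j$ using its definition and bound each $n_{ij}$ separately in terms of $\alpha_{ij} \cdot n_j$, with a small additive constant.

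For a small or tiny class $j$, the crucial invariant to establish is that $n_{ij} \leq \alpha_{ij} \cdot n_j + 1$ for every $i$ with $\alpha_{ij} > 0$ (and $n_{ij} = 0$ otherwise, since the algorithm only opens type $i$ containers when $\alpha_{ij} > 0$). I would prove this by induction over the time steps, examining when $n_{ij}$ can change. The only event that changes $n_{ij}$ for $j > b$ is the opening of a new type $i$ container of class $j$; the algorithm's selection rule requires that, just before opening, $n_{ij} \leq \lfloor \alpha_{ij} \cdot n_j \rfloor$ (with the old $n_j$). After the opening, the new values satisfy $n'_{ij} = n_{ij}+1 \leq \alpha_{ij} n_j + 1 = \alpha_{ij}(n'_j - 1) + 1 \leq \alpha_{ij} n'_j + 1$, as desired. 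Summing gives
\[
{\cal{R}}_j = \sum_{i : \alpha_{ij} > 0} r_{x,y}(i,j) \cdot n_{ij} \leq \sum_{i : \alpha_{ij} > 0} r_{x,y}(i,j)(\alpha_{ij} n_j + 1) = n_j \sum_{i=1}^{\gamma_j} \alpha_{ij} r_{x,y}(i,j) + \sum_{i : \alpha_{ij} > 0} r_{x,y}(i,j),
\]
and since $r_{x,y}(i,j) \leq 1$ and there are exactly $\zeta_j$ positive $\alpha_{ij}$ values, the second term is bounded by $\zeta_j$, yielding the first inequality.

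For a basic large class $j'$, there are only two possible container types ($i = 1, 2$), and the bounds on the numbers of containers of each type follow immediately from Lemma~\ref{minibnum}: $n_{1j'} \leq \lfloor \alpha_{1j'} n_{j'} \rfloor + 2 \leq \alpha_{1j'} n_{j'} + 2$ and $n_{2j'} \leq \lfloor \alpha_{2j'} n_{j'} \rfloor \leq \alpha_{2j'} n_{j'}$. Hence
\[
{\cal{R}}_{j'} = r_{x,y}(1,j') n_{1j'} + r_{x,y}(2,j') n_{2j'} \leq n_{j'} \sum_{i=1}^{2} \alpha_{ij'} r_{x,y}(i,j') + 2 r_{x,y}(1,j'),
\]
and the claim follows from $r_{x,y}(1,j') \leq 1$.

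There is no genuine obstacle here: both bounds are essentially bookkeeping consequences of the algorithm's proportion-matching rule and of Lemma~\ref{minibnum}. The only subtle point is making sure the additive constants are tracked correctly, in particular that the ``+2'' in the large-class case comes specifically from the asymmetry between the type 1 and type 2 bounds in Lemma~\ref{minibnum}, and that in the small/tiny case each of the (at most $\zeta_j$) positive-$\alpha_{ij}$ terms contributes at most $1$ to the slack.
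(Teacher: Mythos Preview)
Your proposal is correct and follows essentially the same approach as the paper. The paper argues the bound $n_{ij} \leq \alpha_{ij} n_j + 1$ for small/tiny classes by looking at the last time a type $i$ container was opened (rather than your explicit induction over time steps), and handles the basic large class identically via Lemma~\ref{minibnum}; the two arguments are equivalent.
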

\begin{proof}
For any small or tiny class $j$, a container of type $i$ is opened
only in the case where there are at most $\lfloor \alpha_{ij}\cdot
\hat{n}_j \rfloor$ such containers, where $\hat{n}_j$ is the
number of containers of class $j$ before the new container is
opened. Moreover, it is never opened if $\alpha_{ij}=0$. Thus, the
last container of class $j$ and type $i$ was opened when there
were at most $\lfloor\alpha_{ij}\cdot (n_j-1)\rfloor$ such
containers and finally there are at most $\lfloor\alpha_{ij}\cdot
(n_j-1)\rfloor+1 < \alpha_{ij}n_j+1$ such containers. By
$r_{x,y}(i,j) \leq 1$ we have
$${\cal{R}}_j=\sum_{i=1}^{\gamma_j} n_{ij}\cdot r_{x,y}(i,j) \leq
\sum_{i: \alpha_{ij}>0} ( \alpha_{ij}n_j+1)\cdot r_{x,y}(i,j) \leq
n_j\cdot \sum_{i=1}^{\gamma_j} \alpha_{ij}\cdot
r_{x,y}(i,j)+\zeta_j \ . $$

For any basic large class $j'$, $n_{1j'} \leq \alpha_{1j'}\cdot
n_{j'}+2$ and $n_{2j'} \leq \alpha_{2j'}\cdot n_{j'}$ by Lemma
\ref{minibnum}. Thus, ${\cal{R}}_{j'}=\sum_{i=1}^{2} n_{ij'}\cdot
r_{x,y}(i,j') \leq \sum_{i=1}^{2} \alpha_{ij'} n_{j'}\cdot
r_{x,y}(i,j')+2r_{x,y}(1,j') \leq n_{j'} \sum_{i=1}^2 \alpha_{ij'}
\cdot r_{x,y}(i,j') + 2$ (by $r_{x,y}(i,j') \leq 1$).
\end{proof}

\begin{lemma}
For the tiny class $M+1$, $W_{M+1} \geq  n_{M+1}\cdot
\sum_{i=1}^{p} \alpha_{i,M+1}\cdot r_{x,y}(i,M+1)-2p$.
\end{lemma}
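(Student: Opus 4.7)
The plan is to bound $W_{M+1}$ from below by relating it to the total size $S_{M+1}$ of tiny items through the identity $W_{M+1}=\rho\cdot S_{M+1}$ (from the definition of the weight of a tiny item), then to lower-bound $S_{M+1}$ using the individual container volumes, and finally to convert container counts $n_{i,M+1}$ into a function of $n_{M+1}$ using the opening rule. The overall slack will be absorbed into the constant $2p$ by carefully accounting for two sources of error: the single open container, and the integrality gap in the opening rule.

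First, by Lemma \ref{one_open_container_lem} there is at most one open container of class $M+1$, and each closed type-$i$ container has total size strictly greater than $A_{i,M+1}-t_M$. Since $A_{i,M+1}-t_M\le 1$, this gives
\[
S_{M+1}\;\ge\;\sum_{i=1}^{p} n_{i,M+1}\,(A_{i,M+1}-t_M)\;-\;1.
\]
Multiplying by $\rho$ will then contribute a slack of $-\rho\le -2$ to $W_{M+1}$.

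Next, I would establish, exactly as in the proof of the preceding lemma for small or tiny classes, that the opening rule forces $n_{i,M+1}\le \alpha_{i,M+1}\,n_{M+1}+1$ for each $i$. Writing $\delta_i=n_{i,M+1}-\alpha_{i,M+1}n_{M+1}$, we have $\delta_i\le 1$ and $\sum_i \delta_i=0$ (since $\sum_i n_{i,M+1}=n_{M+1}$ and $\sum_i \alpha_{i,M+1}=1$). Hence $\sum_{\delta_i>0}\delta_i\le p-1$, because at least one $\delta_i$ is non-positive, and therefore also $\sum_{\delta_i<0}|\delta_i|\le p-1$. Using $0<A_{i,M+1}-t_M\le 1$ on the negative terms gives
\[
\sum_{i=1}^{p} \delta_i\,(A_{i,M+1}-t_M)\;\ge\;-(p-1).
\]

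Finally, substituting $n_{i,M+1}=\alpha_{i,M+1}n_{M+1}+\delta_i$ into the bound on $S_{M+1}$, multiplying by $\rho$, and invoking $\rho\sum_i\alpha_{i,M+1}(A_{i,M+1}-t_M)=\sum_i\alpha_{i,M+1}\,r_{x,y}(i,M+1)$ (the definition of $\rho$) yields
\[
W_{M+1}\;\ge\;n_{M+1}\sum_{i=1}^{p}\alpha_{i,M+1}\,r_{x,y}(i,M+1)\;-\;\rho(p-1)\;-\;\rho \;=\;n_{M+1}\sum_{i=1}^{p}\alpha_{i,M+1}\,r_{x,y}(i,M+1)\;-\;\rho p,
\]
and the claim follows from the standing assumption $\rho\le 2$. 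The only delicate point is the bookkeeping of the constants: one has to notice that $\sum_i\delta_i=0$ sharpens the crude bound $p$ on the negative part of the $\delta_i$'s to $p-1$, so that the loss from the open container ($-\rho$) combined with the loss from the $\delta_i$ imbalance ($-\rho(p-1)$) matches the target constant $\rho p\le 2p$ exactly.
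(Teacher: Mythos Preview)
Your proof is correct and follows essentially the same route as the paper: both start from $W_{M+1}=\rho\,S_{M+1}\ge\rho\bigl(\sum_i n_{i,M+1}(A_{i,M+1}-t_M)-1\bigr)$, write $\delta_i=n_{i,M+1}-\alpha_{i,M+1}n_{M+1}$, use $\sum_i\delta_i=0$ together with $\delta_i\le 1$ to bound the negative part of the $\delta_i$'s by $p-1$, and then invoke $\rho\le 2$ to absorb the combined slack $\rho(p-1)+\rho=\rho p$ into $2p$. The only cosmetic difference is that the paper phrases the $\delta_i$ bound via $\min\{0,\delta_i\}$ and $\max\{0,\delta_i\}$, but the arithmetic and the final constant are identical.
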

\begin{proof}
For class $M+1$, we have $W_{M+1} \geq \rho \cdot (\sum_{i=1}^{p}
(A_{i,M+1}-t_M) \cdot n_{i,M+1} - 1) $, as a class $M+1$ type $i$
container has items of total size of at least $A_{i,M+1}-t_M$,
except for at most one container of class $M+1$ and some type (and
$A_{i,M+1}-t_M\leq 1$ for all $i$). Let
$n_{i,M+1}=\alpha_{i,M+1}n_{M+1}+\delta_{i,M+1}$. For every $i$ we
have $\delta_{i,M+1}\leq 1$. As $$n_{M+1}=\sum_{i=1}^{p}
n_{i,M+1}=\sum_{i=1}^{p} (\alpha_{i,M+1}n_{M+1}+\delta_{i,M+1})$$
$$=n_{M+1}\sum_{i=1}^{p}
\alpha_{i,M+1}+\sum_{i=1}^{p}\delta_{M+1}=n_{M+1}+\sum_{i=1}^{p}\delta_{i,M+1}
\ ,$$ we get $\sum_{i=1}^{p}\delta_{i,M+1}=0$, which implies
$\sum_{i=1}^{p} \min\{0,\delta_{i,M+1}\}+\sum_{i=1}^{p}
\max\{0,\delta_{i,M+1}\}=0$. If for all $i$ we have
$\delta_{i,M+1}\geq 0$, then $\delta_{i,M+1}=0$ for all $i$.
Otherwise, if there is a value $i$ such that $\delta_{i,M+1}>0$,
there is also at least one negative value. Thus, $ -
\sum_{i=1}^{p} \min\{0,\delta_{i,M+1}\} = \sum_{i=1}^{p}
\max\{0,\delta_{i,M+1}\} \leq p-1$, and $$ \sum_{i=1}^{p}
(A_{i,M+1}-t_M) \cdot n_{i,M+1} = \sum_{i=1}^{p} (A_{i,M+1}-t_M)
\cdot( \alpha_{i,M+1} n_{M+1} +\delta_{i,M+1}) $$ $$= n_{M+1}
\sum_{i=1}^{p} (A_{i,M+1}-t_M) \cdot \alpha_{i,M+1}
+\sum_{i=1}^{p} (A_{i,M+1}-t_M) \cdot \delta_{i,M+1}
$$ $$\geq n_{M+1} \sum_{i=1}^{p} (A_{i,M+1}-t_M) \cdot \alpha_{i,M+1}
+\sum_{i=1}^{p}  \min\{0,\delta_{i,M+1}\} \ . $$ Therefore,
$W_{M+1} \geq \rho (\sum_{i=1}^{p} (A_{i,M+1}-t_M) \cdot n_{i,M+1}
- 1) \geq n_{M+1} \sum_{i=1}^{p} \alpha_{i,M+1} \cdot
r_{x,y}(i,M+1) -2p$, as $ - \sum_{i=1}^{p} (A_{i,M+1}-t_M) \cdot
\min\{0,\delta_{i,M+1}\} \leq p-1)$ and $\rho\leq 2$.
\end{proof}

\begin{lemma}
For any basic large class $j$, $W_{j} \geq n_{j}\cdot
\sum_{i=1}^{2} \alpha_{ij} \cdot r_{x,y}(i,j)-4$.
\end{lemma}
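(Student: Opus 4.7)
The plan is to directly unwind the definitions and invoke Lemma \ref{itmnum}. By definition, $W_j=\omega_j\cdot N_j$ where
$$\omega_j=\frac{\sum_i \alpha_{ij}\cdot r_{x,y}(i,j)}{\sum_i i\cdot \alpha_{ij}}.$$
For a (basic) large class $j$ only the indices $i=1,2$ appear, so the denominator equals $\alpha_{1j}+2\alpha_{2j}=1+\alpha_{2j}$ (using $\alpha_{1j}+\alpha_{2j}=1$). Hence
$$\omega_j\cdot (1+\alpha_{2j})=\sum_{i=1}^{2}\alpha_{ij}\cdot r_{x,y}(i,j).$$

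Next, I would apply the lower bound on $N_j$ from Lemma \ref{itmnum}: the inequality $n_j\leq \frac{N_j}{1+\alpha_{2j}}+2$ rearranges to $N_j\geq (1+\alpha_{2j})(n_j-2)$. Multiplying by $\omega_j\geq 0$ and substituting the identity above gives
$$W_j=\omega_j\cdot N_j\;\geq\; \omega_j(1+\alpha_{2j})\,n_j\;-\;2\,\omega_j(1+\alpha_{2j})\;=\;n_j\sum_{i=1}^{2}\alpha_{ij}\cdot r_{x,y}(i,j)-2\sum_{i=1}^{2}\alpha_{ij}\cdot r_{x,y}(i,j).$$

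To finish, I would bound the additive term. Since $r_{x,y}(i,j)\in\{w,1-w,1\}\subseteq[0,1]$ and $\alpha_{1j}+\alpha_{2j}=1$, we have $\sum_{i=1}^{2}\alpha_{ij}\cdot r_{x,y}(i,j)\leq 1$, so the subtracted term is at most $2$; being slightly more generous (and using only $1+\alpha_{2j}\leq 2$ and $\omega_j\leq 1$), the constant $2\omega_j(1+\alpha_{2j})$ is bounded by $4$, which gives exactly the claimed form
$$W_j\;\geq\; n_j\cdot\sum_{i=1}^{2}\alpha_{ij}\cdot r_{x,y}(i,j)-4.$$

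There is no real obstacle here; the proof is a one-line algebraic manipulation once one notices that the denominator $\sum_i i\,\alpha_{ij}$ in the definition of $\omega_j$ is precisely the factor $1+\alpha_{2j}$ appearing in Lemma \ref{itmnum}. The only thing to be careful about is remembering that the statement is for a \emph{basic} large class, so $j\neq k$ and the required weights $r_{x,y}(1,j),r_{x,y}(2,j)$ are well defined and lie in $[0,1]$; apart from that, the computation is purely bookkeeping.
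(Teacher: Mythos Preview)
Your proof is correct and follows essentially the same approach as the paper: both express $\omega_j(1+\alpha_{2j})=\sum_i\alpha_{ij}r_{x,y}(i,j)$ and combine it with the bound relating $n_j$ and $N_j$. The only minor difference is that you invoke Lemma~\ref{itmnum} directly (which already gives $n_j\le \frac{N_j}{1+\alpha_{2j}}+2$, hence the sharper constant $-2$ before you relax to $-4$), whereas the paper routes through Corollary~\ref{relationss} to obtain $n_j\le \frac{N_j}{1+\alpha_{2j}}+4$; your route is slightly more direct and in fact yields a better constant than stated.
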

\begin{proof}
For a (basic) large class $j$, we have ${\sum_i i\cdot
\alpha_{ij}}=\alpha_{1j}+2\alpha_{2j}=1+\alpha_{2j}$. Using
Corollary \ref{relationss} we find $n_{j}=n_{1j}+n_{2j} \leq
\alpha_{1j} \cdot (\frac{N_{j}}{1+\alpha_{2j}}+4) +
\alpha_{2j}\cdot (\frac{N_{j}}{1+\alpha_{2j}}+2) \leq
\frac{N_{j}}{1+\alpha_{2j}}+4$. Thus, the total weight of items of
class $j$ (i.e., $W_j$) is at least $$\omega_{j} \cdot N_{j} =
\frac{\sum_{i} \alpha_{ij} \cdot r_{x,y}(i,j)}{\sum_i i\cdot
\alpha_{ij}} N_{j} \geq (n_{j}-4)(\sum_{i} \alpha_{ij} \cdot
r_{x,y}(i,j)) \geq n_{j}\cdot \sum r_{x,y}(i,j)\alpha_{ij} - 4 \ ,
$$ as $\sum_i r_{x,y}(i,j)\alpha_{ij} \leq \sum_i \alpha_{ij} =1$,
by $r_{x,y}(i,j) \leq 1$ for all $x,y,i$.
\end{proof}

\begin{lemma}
For any small class $j$, $W_j \geq  n_j\cdot \sum_{i=1}^{\gamma_j}
\alpha_{ij}\cdot r_{x,y}(i,j)-\gamma_j \cdot \zeta_j$.
\end{lemma}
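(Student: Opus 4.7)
The plan is to relate $W_j = \omega_j N_j$ to $n_j \cdot A$ where $A = \sum_i \alpha_{ij} r_{x,y}(i,j)$ and $B = \sum_i i \alpha_{ij}$, using that $\omega_j = A/B \leq 1$ (since $r_{x,y}(i,j) \leq 1$ and $B \geq \sum_i \alpha_{ij} = 1$).

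First I would bound $N_j$ from below in terms of $\sum_i i \cdot n_{ij}$. Each closed type $i$ container of class $j$ has exactly $i$ items, and by Lemma \ref{one_open_container_lem} there is at most one open container of class $j$. Since that open container is of some type $i^\star \leq \gamma_j$ and already contains at least one item, the deficit is at most $\gamma_j - 1$, giving
\[
N_j \;\geq\; \sum_i i \cdot n_{ij} - (\gamma_j - 1).
\]

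Next I would relate $\sum_i i \cdot n_{ij}$ to $n_j \cdot B$. Set $\epsilon_i = n_{ij} - \alpha_{ij} n_j$, so $\sum_i \epsilon_i = 0$. The algorithm's rule for opening a new type $i$ container gives $n_{ij} \leq \lfloor \alpha_{ij} n_j \rfloor + 1$ whenever $\alpha_{ij} > 0$, hence $\epsilon_i \leq 1$ on the support and $\epsilon_i = 0$ off the support. Splitting into $I^+ = \{i : \epsilon_i > 0\}$ and $I^- = \{i : \epsilon_i < 0\}$ and letting $S = \sum_{I^+} \epsilon_i = -\sum_{I^-} \epsilon_i$, the pointwise upper bound together with $I^- \neq \emptyset$ whenever $S > 0$ yields $S \leq |I^+| \leq \zeta_j - 1$. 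Using $1 \leq i \leq \gamma_j$ we then get
\[
\sum_i i \cdot \epsilon_i \;=\; \sum_{I^+} i \cdot \epsilon_i + \sum_{I^-} i \cdot \epsilon_i \;\geq\; S - \gamma_j \cdot S \;=\; -(\gamma_j - 1)\, S \;\geq\; -(\gamma_j - 1)(\zeta_j - 1),
\]
and therefore $\sum_i i \cdot n_{ij} \geq n_j B - (\gamma_j - 1)(\zeta_j - 1)$.

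Combining the two bounds gives $N_j \geq n_j B - (\gamma_j - 1)\zeta_j$. Multiplying by $\omega_j$ and using $\omega_j B = A$ together with $\omega_j \leq 1$,
\[
W_j \;=\; \omega_j N_j \;\geq\; \omega_j\bigl(n_j B - (\gamma_j - 1)\zeta_j\bigr) \;=\; n_j A - \omega_j (\gamma_j - 1)\zeta_j \;\geq\; n_j A - \gamma_j \zeta_j,
\]
which is the desired inequality.

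The main obstacle is the lower bound on $\sum_i i \cdot \epsilon_i$: the only pointwise bound directly available from the algorithm is the upper bound $\epsilon_i \leq 1$, and the naive pointwise lower bound $\epsilon_i \geq -(\zeta_j - 1)$ (coming from $\sum \epsilon_i = 0$) is too weak when multiplied by $i$. The key is to avoid that pointwise lower bound entirely and work with the global identity $\sum_i \epsilon_i = 0$, controlling the negative contribution through $S \leq \zeta_j - 1$ and the multiplicative factor $\gamma_j$.
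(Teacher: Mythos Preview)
Your proof is correct and follows essentially the same approach as the paper: both write $n_{ij} = \alpha_{ij}n_j + \epsilon_i$, use $\sum_i \epsilon_i = 0$ together with the pointwise upper bound $\epsilon_i \leq 1$ on the support to control the total negative mass by $\zeta_j - 1$, and then invoke $\omega_j \leq 1$ to absorb the constant loss. The one small difference is that in bounding $\sum_i i\,\epsilon_i$ you keep the positive contribution $\sum_{I^+} i\,\epsilon_i \geq S$, obtaining $-(\gamma_j-1)(\zeta_j-1)$, whereas the paper simply drops the positive terms and gets $-\gamma_j(\zeta_j-1)$; both lead to the stated bound $-\gamma_j\zeta_j$ after combining with the $(\gamma_j-1)$ deficit from the single open container.
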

\begin{proof}
For a small class $j$, we have $W_j \geq \omega_j
(\sum_{i=1}^{\gamma_j} i \cdot n_{ij} - (\gamma_j-1))$, as a class
$j$ type $i$ container has $i$ items, except for at most one
container of class $j$ and some type, which has at least one item
instead of $i \leq \gamma_j$ items of class $j$. Let
$n_{ij}=\alpha_{ij}n_j+\delta_{ij}$, for some (positive or
negative or zero) value $\delta_{ij}$. For $i$ such that
$\alpha_{ij}=0$ we have $\delta_{ij}=0$. For every $i$ such that
$\alpha_{ij}>0$, we have $\delta_{ij}\leq 1$. As
$$n_j=\sum_{i=1}^{\gamma_j} n_{ij}=\sum_{i=1}^{\gamma_j}
(\alpha_{ij}n_{j} + \delta_{ij})=n_j\sum_{i=1}^{\gamma_j}
\alpha_{ij} +
\sum_{i=1}^{\gamma_j}\delta_{ij}=n_j+\sum_{i=1}^{\gamma_j}\delta_{ij}
\ , $$ we get $\sum_{i=1}^{\gamma_j}\delta_{ij}=0$, which implies
$\sum_{i=1}^{\gamma_j} \min\{0,\delta_{ij}\}+\sum_{i=1}^{\gamma_j}
\max\{0,\delta_{ij}\}=0$. If for all $i$ we have $\delta_{ij}\geq
0$, then $\delta_{ij}=0$ for all $i$. Otherwise, if there is a
value $i$ such that $\delta_{ij}>0$, there is at least one
negative value as well. Thus, $ - \sum_{i=1}^{\gamma_j}
\min\{0,\delta_{ij}\} = \sum_{i=1}^{\gamma_j}
\max\{0,\delta_{ij}\} \leq \zeta_j-1$, and $$
\sum_{i=1}^{\gamma_j} i \cdot n_{ij} = \sum_{i=1}^{\gamma_j} i
\cdot( \alpha_{ij} n_j +\delta_{ij}) = n_j \sum_{i=1}^{\gamma_j} i
\cdot \alpha_{ij} +\sum_{i=1}^{\gamma_j} i \cdot \delta_{ij} \geq
n_j \sum_{i=1}^{\gamma_j} i \cdot \alpha_{ij}
+\sum_{i=1}^{\gamma_j} i \cdot \min\{0,\delta_{ij}\} \ . $$
Therefore, $W_j \geq \omega_j (\sum_{i=1}^{\gamma_j} i \cdot
n_{ij} - (\gamma_j-1)) \geq n_j \sum_{i=1}^{\gamma_j} \alpha_{ij}
\cdot r_{x,y}(i,j) - \gamma_j \cdot \zeta_j$, as $\omega_j \leq
1$, and $ \sum_{i=1}^{\gamma_j} i \cdot \min\{0,\delta_{ij}\} \geq
- \gamma_j(\zeta_j-1)$.
\end{proof}



\begin{corollary}\label{randw}
For any class $j$, we have ${\cal{R}}_j \leq W_j +\xi_j$, where
$\xi_j$ is a constant independent of the input such that for
$j=1$, $\xi_1=0$, for $j=M+1$, $\xi_{M+1}=3p$, for any small class
$j$, $\xi_j \leq (\gamma_j+1)\zeta_j$, and for any basic large
class $j$, $\xi_j \leq 6$.
\end{corollary}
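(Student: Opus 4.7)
The plan is to simply combine the preceding lemmas pairwise, since each one provides a bound on either $\mathcal{R}_j$ or $W_j$ that is expressed via the same quantity $n_j \cdot \sum_i \alpha_{ij} \cdot r_{x,y}(i,j)$ (or its analogue for the tiny class). Subtracting an upper bound on $\mathcal{R}_j$ from a lower bound on $W_j$ eliminates this common expression and leaves only additive constants.

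Concretely, I would treat the four cases in turn. For $j=1$, the lemma stating that the total required weight of huge containers equals the total weight of huge items yields $\mathcal{R}_1 = W_1$, so $\xi_1 = 0$. For the tiny class $j = M+1$, the upper bound $\mathcal{R}_{M+1} \leq n_{M+1} \cdot \sum_i \alpha_{i,M+1} \cdot r_{x,y}(i,M+1) + \zeta_{M+1}$ combined with the lower bound $W_{M+1} \geq n_{M+1} \cdot \sum_i \alpha_{i,M+1} \cdot r_{x,y}(i,M+1) - 2p$ yields $\mathcal{R}_{M+1} - W_{M+1} \leq \zeta_{M+1} + 2p = 3p$, using the convention $\zeta_{M+1} = p$ recorded in the text. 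For a small class $j$, the pair of lemmas gives $\mathcal{R}_j - W_j \leq \zeta_j + \gamma_j \cdot \zeta_j = (\gamma_j+1)\zeta_j$. For a basic large class $j$, the relevant lemmas give $\mathcal{R}_j - W_j \leq 2 + 4 = 6$.

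In each case the bound on $\xi_j$ depends only on the parameters of the algorithm ($p$, $\gamma_j$, $\zeta_j$) and not on the input, as required. There is no substantive obstacle: the hard work has already been done in the preceding lemmas, which carefully absorbed all the input-dependent quantities into the common expression $n_j \cdot \sum_i \alpha_{ij} \cdot r_{x,y}(i,j)$ (up to constants coming from rounding in the cardinality accounting and from the possibility of at most one under-filled ``open'' container of each class). The corollary is therefore a direct algebraic consequence, and I would present it as four short applications of the lemmas listed above.
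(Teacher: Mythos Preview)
Your proposal is correct and takes essentially the same approach as the paper: the corollary is stated without an explicit proof, and your pairwise combination of the preceding lemmas (subtracting the upper bound on $\mathcal{R}_j$ from the lower bound on $W_j$ to cancel the common term $n_j \sum_i \alpha_{ij} r_{x,y}(i,j)$) is exactly the intended derivation. The arithmetic in each of the four cases matches the constants claimed.
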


\subsubsection{The relation between $\boldsymbol{W}$ and the cost of the algorithm}

We have proved the next theorem, which follows from Lemmas
\ref{consts1},\ref{consts2} and from Corollary \ref{randw}. The
theorem shows that our weight function is valid, and it remains to
find an upper bound on the supremum total weight of any bin (of
the optimal solution). We showed that there is a constant $\Psi$
that is independent of the input (and depends on our set of
parameters) such that the following holds.
\begin{theorem}
Assume that for input $I$, the output of the algorithm belongs to
the scenario of index $(x,y]$. If $y>\frac 13$ and $u,v,w$ satisfy
$0 \leq u,v,w \leq 1$ and the constraints (\ref{cons1}),
(\ref{cons2}), (\ref{cons3}), (\ref{cons4}), then assigning
weights to the items according to our definition in section
\ref{weight_sec} satisfies that the final number of bins of the
algorithm (applied on input $I$) is at most $W+\Psi$, where $\Psi$
is a constant independent of the input. If $y \leq \frac 13$ and
$w$ satisfies $0 \leq w \leq 1$, then assigning weights to the
items according to our definition in section \ref{weight_sec}
satisfies that the final number of bins of the algorithm (applied
on input $I$) is at most $W+\Psi$, where $\Psi$ is a constant
independent of the input.

\end{theorem}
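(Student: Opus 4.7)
The plan is to split the bins of the algorithm into two groups according to whether they contain a container of the threshold class $k$, bound the size of each group separately, and add the two bounds. A preliminary observation is that every bin holds at most one container of class $k$, because otherwise the bin would contain either two positive or two negative containers, which the algorithm never produces. Consequently the number of bins in the first group is exactly $\nu(\mu)$.

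In the easier case $y \le \tfrac{1}{3}$, the threshold class is basic, so no class receives special treatment. The bin-level lemma above (every bin has total required weight at least $1$) applies to \emph{every} bin, and hence the total number of bins is at most $\sum_{j} \mathcal{R}_j$, which Corollary~\ref{randw} bounds by $\sum_{j}(W_j + \xi_j) = W + \sum_j \xi_j$. Since the algorithm's parameters fix the number of classes and each $\xi_j$ depends only on the parameters, $\Psi := \sum_j \xi_j$ is an input-independent constant that closes this case.

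In the harder case $y > \tfrac{1}{3}$, class $k$ is a large non-basic class, and I would apply Lemmas~\ref{consts1} and~\ref{consts2} (exactly one of which applies, depending on whether $\alpha_{1k} = 0$) to obtain $\phi_k \ge \nu(\mu) - c_1$ for an explicit input-independent constant $c_1$. The bin-level lemma still applies to every bin that does \emph{not} contain a class-$k$ container, so the number of such bins is at most $\sum_{j \ne k}\mathcal{R}_j - \Sigma$, where $\Sigma$ is the total required weight of those containers of classes $j \ne k$ that share a bin with a class-$k$ container (the subtraction comes from splitting each $\mathcal{R}_j$ according to whether its containers lie in the first or second group). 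By the definition of $\phi_k$ in Section~\ref{weight_sec}, $\phi_k = W_k + \Sigma$, so summing the two group bounds yields
\[
\text{(total bins)} \;\le\; \nu(\mu) + \Big(\sum_{j \ne k}\mathcal{R}_j - \Sigma\Big) \;\le\; \phi_k + c_1 + \sum_{j \ne k}\mathcal{R}_j - \Sigma \;=\; W_k + \sum_{j \ne k}\mathcal{R}_j + c_1.
\]
Applying Corollary~\ref{randw} to each $j \ne k$ then gives the bound $W + c_1 + \sum_{j \ne k}\xi_j$, and I would set $\Psi$ equal to this constant.

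The main obstacle is purely the bookkeeping in the non-basic case: one must verify that the $\Sigma$ packaged into $\phi_k$ is \emph{exactly} what gets subtracted from $\sum_{j \ne k}\mathcal{R}_j$ in the ``other bins'' count, so that the cancellation is clean. This reduces to checking (i) every container of class $j \ne k$ that shares its bin with a class-$k$ container contributes its required weight to $\phi_k$ (immediate from the definition of $\phi_k$) and (ii) no container is double-counted (automatic because the two groups of bins are disjoint). All the analytic work has already been done in Lemmas~\ref{consts1}--\ref{consts2} and Corollary~\ref{randw}; only the accounting above remains.
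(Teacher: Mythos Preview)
Your proposal is correct and is exactly the accounting the paper leaves implicit when it says the theorem ``follows from Lemmas~\ref{consts1}, \ref{consts2} and from Corollary~\ref{randw}''; the two-group decomposition, the identity $\phi_k = W_k + \Sigma$, and the cancellation of $\Sigma$ are precisely how those results combine. One small inaccuracy: your justification that a bin cannot hold two class-$k$ containers (``otherwise the bin would contain either two positive or two negative containers'') only covers the same-type cases; to rule out one type~1 (negative) and one type~2 (positive) container of class $k$ in the same bin you need the volume check $t_k + 2t_{k-1} > 1$, which holds since $t_k, t_{k-1} > \tfrac{1}{3}$.
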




\subsection{Analysis of weights of bins of optimal solutions} We
provide the remaining part of the proof, where given our sets of
weights (which are based on our set of parameters), we find upper
bounds on total weights of bins.

In Appendix \ref{allallall} we provide a table with all boundary
points and all strictly positive values of $\alpha_{ij}$ (all
other values of $\alpha_{ij}$ are equal to zero). It can be seen
that $\zeta_j \leq 3$ for all $j$. For classes 173, 174, 176, 184,
190, 191 indeed $\zeta_j=3$, and for classes 2 and 171 we have
$\zeta_j=1$. For all other classes $\zeta_j=2$ (for large classes
the case $\zeta_j=3$ is impossible as $\gamma_j=2$). For class
$M+1$ (the class of tiny items), we have $\zeta_{M+1}=2$, which is
an interesting feature of AH. There are containers where the total
size of tiny items is at most $\frac{17}{60} \approx 0.283333$
(and at least $\frac{17}{60}-\frac {1}{43}\approx 0.26$, except
for at most one container of tiny items of type 1). There is a
relatively big number of classes of large items whose sizes are in
$(\frac 13,0.35]$. The reason for this is that the volume of type
1 containers of these classes is defined by the exact size of an
item, while type 2 containers are defined by $2 \cdot t_{j-1}$,
though we still would like the volume to be close to the total
size and to $2 \cdot t_j$ (the difference $t_{j-1}-t_j$ is small,
much smaller than the size of a tiny item, and this limits the
possible bin kinds of optimal solutions).

To solve the knapsack problems, which are standard knapsack
problems, we use a branch and bound type approach similar to that
of Ramanan et al. \cite{RaBrLL89} and Seiden \cite{Seiden02J}.
Note that we could use existing solvers for knapsack, and we
actually did so (in addition to the branch and bound algorithm
described here) in order to verify the results (using rounded
values). However, as we were interested in precise results, we
represented all our parameters as big fractions with integer
numerators and denominators, with a common denominator of $q$ for
an appropriate value of $q$. Then, after representing every size
in the form $\frac{p_i}q$, we allow the total size of a multiset
of items of $\Delta$ to be at most $\frac{q-1}{q}$ in the case
where their total size should be below $1$ (and an analogous
condition is given in the case that the total size should be below
$1-a$). The branch and bound approach is standard as well, where
the branching rule is according to the size of the next item, and
the bounding rule is according to density, which is the ratio
between weight and size. That is, items are sorted by non
increasing density. Then, the algorithm iteratively generates the
possible packing patterns (multisets of items of $\Delta$) using
this sorted order. When a new item is added to the actual pattern,
an upper bound is calculated estimating the largest possible
weight of the patterns containing the given items. The bound is
based on upper bounding the weight of the remaining space in the
knapsack by assigning the weight of the current item to it. This
has the following meaning. If the maximum density of further items
that can still be packed into the bin is such that no matter what
additional items the bin will contain, its total weight is no
larger than the maximum weight of any bin calculated so far, there
is no need to compute an exact maximum (or even a supremum) of the
possible weight of a bin containing the items already inserted
into the bin, but it is sufficient to use the resulting upper
bound. If the estimated weight is lower than the current maximal
weight, no further items are added to this pattern. After all
possible patterns are generated and checked, the algorithm finds
the one with the largest weight. The pseudo-code of the algorithm
is given as
Algorithm \ref{BBB}.

\begin{algorithm}[h!]
\caption{Branch and Bound Knapsack Solver\label{BBB}} \bf Input:
$sizes[N],weights[N]$ \textmd{ sorted by} $weights[i]/sizes[i]$
\textmd{
in non increasing order} \\
Output: $worstbound,worstpat$ \\
Require: $actpat = \emptyset, worstbound = 0 $ \\
Procedure \textmd{KnapsackSolver}($i$) \\
If $i=N+1$ Then \\
\tab $totalweight \leftarrow \textmd{\textrm{The total weight of a
bin containing
the items of }} actpat$ \\
\tab \textmd{and filled with tiny items } \\
\tab If $totalweight > worstbound$ Then\\
\tab \tab $worstbound \leftarrow totalweight$ \\
\tab \tab $worstpat \leftarrow actpat$ \\
\tab End If \\
Else \\
\tab $es \leftarrow \textmd{\textrm{The empty space in a bin
containing the items of
}} actpat$ \\
\tab $actpw \leftarrow \textmd{\textrm{The total weight of the items of }} actpat$ \\
\tab $li \leftarrow \textmd{\textrm{The index of the last item of }} actpat$ \\
\tab If $actpat <> \emptyset$ Then \\
\tab\tab If $actpw + weights[li]/sizes[li] \cdot es < worstbound $ Then\\
\tab\tab\tab Return \\
\tab\tab End If \\
\tab End If \\
\tab $counter \leftarrow \textmd{\textrm{The maximal number of the
items of }}
sizes[i]$ \\
\tab \textmd{that can be packed into a bin containing the items of } $actpat$ \\
\tab For $j=counter \textrm{ to } 1 \textmd{\textrm{ step }} -1$ \\
\tab\tab Add $sizes[i] \textrm{ to } actpat$ \\
\tab\tab KnapsackSolver($i+1$) \\
\tab End For \\
\tab Remove all items of $sizes[i] \textrm{ from } actpat$ \\
\tab KnapsackSolver($i+1$) \\
End If \\
End Procedure
\end{algorithm}

Using this branch and bound procedure, for each scenario we
calculate the weight function corresponding to the values of $u$,
$v$, and $w$ (for scenarios where the threshold class is a large
class) or the value of $w$ (for the other scenarios). In Appendix
\ref{uvw_w} we report the values of $u$, $v$, and $w$ that we use,
and the resulting upper bound on the competitive ratio of the
algorithm. In this way, we prove that the competitive ratio of AH
is at most $1.57828956$.

\newpage

\appendix
\section{All parameters of the algorithm}\label{allallall}
We provide all required data for defining the algorithm and its
analysis according to our method of analysis. Recall that we use
exact values of parameters and exact calculations. In many cases
we write an approximate value in the table in order to provide
intuition, but these values were not used in our calculations.

The next table contains the values $\alpha_{ij}$ for all $j$ such
that $2 \leq j \leq 5$ and $j \geq 166$. The values $\alpha_{ij}$
are only given for $i$ such that $\alpha_{ij}\neq 0$.

For $6 \leq j \leq 165$, $\alpha_{1j}=\frac{22145926}{78181827}
\approx 0.2832618$, $\alpha_{2j}=\frac{56035901}{78181827} \approx
0.71673816$. For these values of $j$, $t_j=0.35-\frac{j-5}{9600}$
and $t_{j-1}=0.35-\frac{j-6}{9600}$. For class $166$, the right
endpoint is $0.35-\frac{160}{9600} = \frac 13$. There are many
boundary points between $\frac 13$ and $0.35$ as AH packs such
items carefully, and we would like very similar pairs of such
items to be packed together in one bin of the algorithm.

\begin{center}
\renewcommand{\arraystretch}{1.2}
\begin{longtable}{|c|c|c|c|c|}
\hline
Class & & & &\\
index: $j$ &  Left endpoint $t_j$ & Right endpoint $t_{j-1}$ & $i$
& $\alpha_{ij}$
or $\alpha_{i,j}$\\
\hline
1 & $\frac{1}{2} = 0.5$ & $1$ & 1 & \\
\hline
2 & $\frac{3}{7} \approx 0.42857$ & $\frac{1}{2} = 0.5$ & 2 & $1$\\
\hline 3 & $\frac{43}{120} \approx 0.35833$ & $\frac{3}{7} \approx
0.42857$ & 1 &
$\frac{31755722}{150095589} \approx 0.21156998824262585$\\
\hline
3 & & & 2 & $\frac{118339867}{150095589} \approx 0.7884300117573741$\\
\hline 4 & $\frac{59}{166} \approx 0.35542$ & $\frac{43}{120}
\approx 0.35833$ & 1 &
$\frac{33382666}{150909061} \approx 0.22121048119171585$\\
\hline
4 & & & 2 & $\frac{117526395}{150909061} \approx 0.7787895188082842$\\
\hline 5 & $\frac{7}{20} = 0.35$ & $\frac{59}{166} \approx
0.35542$ & 1 &
$\frac{4493270}{19023851} \approx 0.23619139994315558$\\
\hline
5 & & & 2 & $\frac{14530581}{19023851} \approx 0.7638086000568445$\\
\hline 166 & $\frac{271}{960} \approx 0.28229$ & $\frac{1}{3}
\approx 0.33333$ & 1 &
$\frac{3445801}{1433952966} \approx 0.0024030083843070765$\\
\hline
166 & & & 3 & $\frac{1430507165}{1433952966} \approx 0.9975969916156929$\\
\hline 167 & $\frac{1}{4} = 0.25$ & $\frac{271}{960} \approx
0.28229$ & 1 &
$\frac{18718929}{79588150} \approx 0.23519743831211054$\\
\hline
167 & & & 3 & $\frac{60869221}{79588150} \approx 0.7648025616878895$\\
\hline 168 & $\frac{97}{480} \approx 0.20208$ & $\frac{1}{4} =
0.25$ & 1 &
$\frac{10193524}{41199575} \approx 0.2474181833186386$\\
\hline
168 & & & 4 & $\frac{31006051}{41199575} \approx 0.7525818166813614$\\
\hline 169 & $\frac{1}{5} = 0.2$ & $\frac{97}{480} \approx
0.20208$ & 1 &
$\frac{22658284}{84102577} \approx 0.26941248185534195$\\
\hline
169 & & & 4 & $\frac{61444293}{84102577} \approx 0.730587518144658$\\
\hline
170 & $\frac{15}{88} \approx 0.17045$ & $\frac{1}{5} = 0.2$ & 5 & $1$\\
\hline 171 & $\frac{1}{6} \approx 0.16667$ & $\frac{15}{88}
\approx 0.17045$ & 1 &
$\frac{76872685}{1135239972} \approx 0.06771492098236301$\\
\hline
171 & & & 5 & $\frac{1058367287}{1135239972} \approx 0.932285079017637$\\
\hline 172 & $\frac{3}{20} = 0.15$ & $\frac{1}{6} \approx 0.16667$
& 2 &
$\frac{24797889}{191010959} \approx 0.12982443064955243$\\
\hline
172 & & & 1 & $\frac{48313566}{191010959} \approx 0.25293609462481154$\\
\hline
172 & & & 6 & $\frac{117899504}{191010959} \approx 0.617239474725636$\\
\hline 173 & $\frac{12}{83} \approx 0.14458$ & $\frac{3}{20} =
0.15$ & 2 &
$\frac{158075552}{480752651} \approx 0.3288084874231926$\\
\hline
173 & & & 1 & $\frac{20946010}{480752651} \approx 0.0435692033240603$\\
\hline
173 & & & 6 & $\frac{301731089}{480752651} \approx 0.6276223092527471$\\
\hline 174 & $\frac{1}{7} \approx 0.14286$ & $\frac{12}{83}
\approx 0.14458$ & 2 &
$\frac{5682641}{14973238} \approx 0.3795198473436407$\\
\hline
174 & & & 6 & $\frac{9290597}{14973238} \approx 0.6204801526563593$\\
\hline 175 & $\frac{11}{83} \approx 0.13253$ & $\frac{1}{7}
\approx 0.14286$ & 2 &
$\frac{11653567744}{42727973215} \approx 0.2727386034755545$\\
\hline
175 & & & 1 & $\frac{103268403}{85455946430} \approx 0.001208440223461697$\\
\hline
175 & & & 7 & $\frac{62045542539}{85455946430} \approx 0.7260529563009838$\\
\hline 176 & $\frac{1}{8} = 0.125$ & $\frac{11}{83} \approx
0.13253$ & 2 &
$\frac{4313813}{11469903} \approx 0.3760984726723495$\\
\hline
176 & & & 7 & $\frac{7156090}{11469903} \approx 0.6239015273276505$\\
\hline 177 & $\frac{1}{9} \approx 0.11111$ & $\frac{1}{8} = 0.125$
& 2 &
$\frac{35844844}{93992497} \approx 0.38135856737586193$\\
\hline
177 & & & 8 & $\frac{58147653}{93992497} \approx 0.6186414326241381$\\
\hline 178 & $\frac{1}{10} = 0.1$ & $\frac{1}{9} \approx 0.11111$
& 2 &
$\frac{145576935}{381661961} \approx 0.3814289865790424$\\
\hline
178 & & & 9 & $\frac{236085026}{381661961} \approx 0.6185710134209577$\\
\hline 179 & $\frac{1}{11} \approx 0.09091$ & $\frac{1}{10} = 0.1$
& 2 &
$\frac{8723245}{23755812} \approx 0.3672046655361644$\\
\hline
179 & & & 10 & $\frac{15032567}{23755812} \approx 0.6327953344638356$\\
\hline 180 & $\frac{1}{12} \approx 0.08333$ & $\frac{1}{11}
\approx 0.09091$ & 3 &
$\frac{145045373}{508140728} \approx 0.2854433132547486$\\
\hline
180 & & & 11 & $\frac{363095355}{508140728} \approx 0.7145566867452514$\\
\hline 181 & $\frac{1}{13} \approx 0.07692$ & $\frac{1}{12}
\approx 0.08333$ & 3 &
$\frac{16276212}{45761591} \approx 0.3556740848455203$\\
\hline
181 & & & 12 & $\frac{29485379}{45761591} \approx 0.6443259151544797$\\
\hline 182 & $\frac{1}{14} \approx 0.07143$ & $\frac{1}{13}
\approx 0.07692$ & 3 &
$\frac{72087509}{189669658} \approx 0.3800687456293088$\\
\hline
182 & & & 13 & $\frac{117582149}{189669658} \approx 0.6199312543706912$\\
\hline 183 & $\frac{1}{15} \approx 0.06667$ & $\frac{1}{14}
\approx 0.07143$ & 3 &
$\frac{36413948928}{97499546341} \approx 0.3734781370227504$\\
\hline
183 & & & 1 & $\frac{182118174}{97499546341} \approx 0.0018678873988095329$\\
\hline
183 & & & 14 & $\frac{60903479239}{97499546341} \approx 0.62465397557844$\\
\hline 184 & $\frac{1}{16} = 0.0625$ & $\frac{1}{15} \approx
0.06667$ & 4 &
$\frac{36527825}{116265557} \approx 0.3141758053074996$\\
\hline
184 & & & 15 & $\frac{79737732}{116265557} \approx 0.6858241946925003$\\
\hline 185 & $\frac{1}{17} \approx 0.05882$ & $\frac{1}{16} =
0.0625$ & 4 &
$\frac{30799804}{90208717} \approx 0.3414282457869343$\\
\hline
185 & & & 16 & $\frac{59408913}{90208717} \approx 0.6585717542130657$\\
\hline 186 & $\frac{1}{18} \approx 0.05556$ & $\frac{1}{17}
\approx 0.05882$ & 4 &
$\frac{61076393}{180923205} \approx 0.3375818651897085$\\
\hline
186 & & & 17 & $\frac{119846812}{180923205} \approx 0.6624181348102914$\\
\hline 187 & $\frac{1}{19} \approx 0.05263$ & $\frac{1}{18}
\approx 0.05556$ & 5 &
$\frac{246282282}{848959177} \approx 0.29009908682570257$\\
\hline
187 & & & 18 & $\frac{602676895}{848959177} \approx 0.7099009131742974$\\
\hline 188 & $\frac{1}{20} = 0.05$ & $\frac{1}{19} \approx
0.05263$ & 5 &
$\frac{3612237}{11491762} \approx 0.31433273679005885$\\
\hline
188 & & & 19 & $\frac{7879525}{11491762} \approx 0.6856672632099412$\\
\hline 189 & $\frac{1}{21} \approx 0.04762$ & $\frac{1}{20} =
0.05$ & 5 &
$\frac{11086689792}{34169004389} \approx 0.3244662813637359$\\
\hline
189 & & & 3 & $\frac{99039140}{34169004389} \approx 0.002898508217344594$\\
\hline
189 & & & 20 & $\frac{22983275457}{34169004389} \approx 0.6726352104189195$\\
\hline 190 & $\frac{1}{22} \approx 0.04545$ & $\frac{1}{21}
\approx 0.04762$ & 5 &
$\frac{1773973504}{5453794899} \approx 0.32527323393207785$\\
\hline
190 & & & 3 & $\frac{11750995}{4674681342} \approx 0.002513753160974282$\\
\hline
190 & & & 21 & $\frac{21996671405}{32722769394} \approx 0.6722130129069478$\\
\hline 191 & $\frac{1}{23} \approx 0.04348$ & $\frac{1}{22}
\approx 0.04545$ & 6 &
$\frac{72660709}{254170744} \approx 0.28587361336912953$\\
\hline
191 & & & 22 & $\frac{181510035}{254170744} \approx 0.7141263866308705$\\
\hline 192 & $\frac{1}{24} \approx 0.04167$ & $\frac{1}{23}
\approx 0.04348$ & 6 &
$\frac{17990899}{63629269} \approx 0.2827456496474916$\\
\hline
192 & & & 23 & $\frac{45638370}{63629269} \approx 0.7172543503525084$\\
\hline 193 & $\frac{1}{25} = 0.04$ & $\frac{1}{24} \approx
0.04167$ & 6 &
$\frac{6868668}{21928717} \approx 0.3132270802710437$\\
\hline
193 & & & 24 & $\frac{15060049}{21928717} \approx 0.6867729197289564$\\
\hline 194 & $\frac{1}{26} \approx 0.03846$ & $\frac{1}{25} =
0.04$ & 7 &
$\frac{6623739}{23559574} \approx 0.2811485046376475$\\
\hline
194 & & & 25 & $\frac{16935835}{23559574} \approx 0.7188514953623525$\\
\hline 195 & $\frac{1}{27} \approx 0.03704$ & $\frac{1}{26}
\approx 0.03846$ & 7 &
$\frac{20598370}{73772911} \approx 0.27921319249554893$\\
\hline
195 & & & 26 & $\frac{53174541}{73772911} \approx 0.720786807504451$\\
\hline 196 & $\frac{1}{28} \approx 0.03571$ & $\frac{1}{27}
\approx 0.03704$ & 7 &
$\frac{20611449}{73987996} \approx 0.2785782845098278$\\
\hline
196 & & & 27 & $\frac{53376547}{73987996} \approx 0.7214217154901722$\\
\hline 197 & $\frac{1}{29} \approx 0.03448$ & $\frac{1}{28}
\approx 0.03571$ & 7 &
$\frac{5843252}{21159655} \approx 0.27615062721958367$\\
\hline
197 & & & 28 & $\frac{15316403}{21159655} \approx 0.7238493727804163$\\
\hline 198 & $\frac{1}{30} \approx 0.03333$ & $\frac{1}{29}
\approx 0.03448$ & 8 &
$\frac{97422165}{338982541} \approx 0.2873958190076816$\\
\hline
198 & & & 29 & $\frac{241560376}{338982541} \approx 0.7126041809923184$\\
\hline 199 & $\frac{1}{31} \approx 0.03226$ & $\frac{1}{30}
\approx 0.03333$ & 8 &
$\frac{246577815}{717694643} \approx 0.34356925665390536$\\
\hline
199 & & & 30 & $\frac{471116828}{717694643} \approx 0.6564307433460946$\\
\hline 200 & $\frac{1}{32} = 0.03125$ & $\frac{1}{31} \approx
0.03226$ & 8 &
$\frac{136787965}{369923301} \approx 0.36977385482403013$\\
\hline
200 & & & 31 & $\frac{233135336}{369923301} \approx 0.6302261451759699$\\
\hline 201 & $\frac{1}{33} \approx 0.0303$ & $\frac{1}{32} =
0.03125$ & 9 &
$\frac{193885600}{743335051} \approx 0.2608320430190504$\\
\hline
201 & & & 32 & $\frac{549449451}{743335051} \approx 0.7391679569809496$\\
\hline 202 & $\frac{1}{34} \approx 0.02941$ & $\frac{1}{33}
\approx 0.0303$ & 9 &
$\frac{2009051}{7752584} \approx 0.25914598281037654$\\
\hline
202 & & & 33 & $\frac{5743533}{7752584} \approx 0.7408540171896235$\\
\hline 203 & $\frac{1}{35} \approx 0.02857$ & $\frac{1}{34}
\approx 0.02941$ & 9 &
$\frac{63841426}{248268817} \approx 0.25714637372280225$\\
\hline
203 & & & 34 & $\frac{184427391}{248268817} \approx 0.7428536262771978$\\
\hline 204 & $\frac{1}{36} \approx 0.02778$ & $\frac{1}{35}
\approx 0.02857$ & 9 &
$\frac{389848025}{1497560942} \approx 0.2603219769336105$\\
\hline
204 & & & 35 & $\frac{1107712917}{1497560942} \approx 0.7396780230663895$\\
\hline 205 & $\frac{1}{37} \approx 0.02703$ & $\frac{1}{36}
\approx 0.02778$ & 10 &
$\frac{99052686}{407082371} \approx 0.24332344767639177$\\
\hline
205 & & & 36 & $\frac{308029685}{407082371} \approx 0.7566765523236082$\\
\hline 206 & $\frac{1}{38} \approx 0.02632$ & $\frac{1}{37}
\approx 0.02703$ & 10 &
$\frac{407411107}{1639477277} \approx 0.24850061218628283$\\
\hline
206 & & & 37 & $\frac{1232066170}{1639477277} \approx 0.7514993878137172$\\
\hline 207 & $\frac{1}{39} \approx 0.02564$ & $\frac{1}{38}
\approx 0.02632$ & 10 &
$\frac{44138539}{166740862} \approx 0.26471339101029717$\\
\hline
207 & & & 38 & $\frac{122602323}{166740862} \approx 0.7352866089897029$\\
\hline 208 & $\frac{1}{40} = 0.025$ & $\frac{1}{39} \approx
0.02564$ & 11 &
$\frac{73429915}{298784748} \approx 0.2457619255719171$\\
\hline
208 & & & 39 & $\frac{225354833}{298784748} \approx 0.7542380744280829$\\
\hline 209 & $\frac{1}{41} \approx 0.02439$ & $\frac{1}{40} =
0.025$ & 11 &
$\frac{479473800}{1824013513} \approx 0.2628674604561441$\\
\hline
209 & & & 40 & $\frac{1344539713}{1824013513} \approx 0.7371325395438559$\\
\hline 210 & $\frac{1}{42} \approx 0.02381$ & $\frac{1}{41}
\approx 0.02439$ & 11 &
$\frac{478583529}{1826578078} \approx 0.26201098916287335$\\
\hline
210 & & & 41 & $\frac{1347994549}{1826578078} \approx 0.7379890108371266$\\
\hline 211 & $\frac{1}{43} \approx 0.02326$ & $\frac{1}{42}
\approx 0.02381$ & 11 &
$\frac{119627466}{457395215} \approx 0.26154070282523617$\\
\hline
211 & & & 42 & $\frac{337767749}{457395215} \approx 0.7384592971747639$\\
\hline 212 & $0$ & $\frac{1}{43} \approx 0.02326$ &
$A_{1,212}=\frac{17}{60}$ &
$\frac{13701867480}{32568497273} \approx 0.4207092321499018$\\
\hline
212 & & & $A_{2,212}=1$ & $\frac{18866629793}{32568497273} \approx 0.5792907678500983$\\
\hline
\end{longtable}
\end{center}
\section{The values $\boldsymbol{u}$, $\boldsymbol{v}$, $\boldsymbol{w}$, and the competitive ratio in
all scenarios}\label{uvw_w}

First, consider the case where the scenario is such that $k$ is
not a large class (it is small or tiny). For all scenarios whose
interval $(x,y]$ is contained in $(0,\frac 16]$, we use $w=0$, and
find $R < \frac{82081796062891}{52009705144320} \approx
1.57820153$. The scenarios contained in $(\frac 3{10},\frac 13]$
also have common features. Every scenario has the form $(\frac 3
{10}+\frac{\ell-1}{4800},\frac 3 {10}+\frac{\ell}{4800}]$ for $1
\leq \ell \leq 160$, $w=\frac{413913}{524288}\approx 0.7894764$,
and $$R < \frac{10060574276093395247}{6374352691333693440} \approx
1.57828956 \ . $$ The remaining cases are shown in the following
table. The values in the right column (RBB) are the bounds we
obtained on the total weight of a bin using the branch and bound
procedure.



\begin{center}
\renewcommand{\arraystretch}{1.3}
{\scriptsize
\begin{longtable}{|c|c|c|c|}
\hline
Threshold class & Scenario Interval & $w$ & RBB \\
\hline
$\left(\frac{1}{6},\frac{15}{88}\right]$ &$\left(\frac{1}{6},\frac{15}{88}\right]$ &$\frac{40165}{4194304} \approx 0.009576082229614258$ &$\frac{134279683919467}{85106790236160} \approx 1.5777787359487858$ \\
\hline
$\left(\frac{15}{88},\frac{1}{5}\right]$ &$\left(\frac{15}{88},\frac{23}{120}\right]$ &$\frac{40165}{4194304} \approx 0.009576082229614258$ &$\frac{134279683919467}{85106790236160} \approx 1.5777787359487858$ \\
\hline
$\left(\frac{15}{88},\frac{1}{5}\right]$ &$\left(\frac{23}{120},\frac{1}{5}\right]$ &$\frac{40165}{4194304} \approx 0.009576082229614258$ &$\frac{134279683919467}{85106790236160} \approx 1.5777787359487858$ \\
\hline
$\left(\frac{1}{5},\frac{97}{480}\right]$ &$\left(\frac{1}{5},\frac{97}{480}\right]$ &$\frac{2754177}{536870912} \approx 0.0051300544291734695$ &$\frac{134279683919467}{85106790236160} \approx 1.5777787359487858$ \\
\hline
$\left(\frac{97}{480},\frac{1}{4}\right]$ &$\left(\frac{97}{480},\frac{3}{14}\right]$ &$\frac{2754177}{536870912} \approx 0.0051300544291734695$ &$\frac{134279683919467}{85106790236160} \approx 1.5777787359487858$ \\
\hline
$\left(\frac{97}{480},\frac{1}{4}\right]$ &$\left(\frac{3}{14},\frac{2}{9}\right]$ &$\frac{2754177}{536870912} \approx 0.0051300544291734695$ &$\frac{134279683919467}{85106790236160} \approx 1.5777787359487858$ \\
\hline
$\left(\frac{97}{480},\frac{1}{4}\right]$ &$\left(\frac{2}{9},\frac{3}{13}\right]$ &$\frac{9224745}{1073741824} \approx 0.008591213263571262$ &$\frac{176162272658562716766643}{111689991334728680079360} \approx 1.5772431401719262$ \\
\hline
$\left(\frac{97}{480},\frac{1}{4}\right]$ &$\left(\frac{3}{13},\frac{4}{17}\right]$ &$\frac{9224745}{1073741824} \approx 0.008591213263571262$ &$\frac{176162272658562716766643}{111689991334728680079360} \approx 1.5772431401719262$ \\
\hline
$\left(\frac{97}{480},\frac{1}{4}\right]$ &$\left(\frac{4}{17},\frac{5}{21}\right]$ &$\frac{9224745}{1073741824} \approx 0.008591213263571262$ &$\frac{176162272658562716766643}{111689991334728680079360} \approx 1.5772431401719262$ \\
\hline
$\left(\frac{97}{480},\frac{1}{4}\right]$ &$\left(\frac{5}{21},\frac{1}{4}\right]$ &$\frac{9224745}{1073741824} \approx 0.008591213263571262$ &$\frac{176162272658562716766643}{111689991334728680079360} \approx 1.5772431401719262$ \\
\hline
$\left(\frac{1}{4},\frac{271}{960}\right]$ &$\left(\frac{1}{4},\frac{9}{35}\right]$ &$\frac{2179203}{16777216} \approx 0.12989062070846558$ &$\frac{10460110890923925809177}{6663323346674154209280} \approx 1.569803887146622$ \\
\hline
$\left(\frac{1}{4},\frac{271}{960}\right]$ &$\left(\frac{9}{35},\frac{8}{31}\right]$ &$\frac{2179203}{16777216} \approx 0.12989062070846558$ &$\frac{10460110890923925809177}{6663323346674154209280} \approx 1.569803887146622$ \\
\hline
$\left(\frac{1}{4},\frac{271}{960}\right]$ &$\left(\frac{8}{31},\frac{7}{27}\right]$ &$\frac{2179203}{16777216} \approx 0.12989062070846558$ &$\frac{10460110890923925809177}{6663323346674154209280} \approx 1.569803887146622$ \\
\hline
$\left(\frac{1}{4},\frac{271}{960}\right]$ &$\left(\frac{7}{27},\frac{6}{23}\right]$ &$\frac{2179203}{16777216} \approx 0.12989062070846558$ &$\frac{10460110890923925809177}{6663323346674154209280} \approx 1.569803887146622$ \\
\hline
$\left(\frac{1}{4},\frac{271}{960}\right]$ &$\left(\frac{6}{23},\frac{11}{42}\right]$ &$\frac{2179203}{16777216} \approx 0.12989062070846558$ &$\frac{10460110890923925809177}{6663323346674154209280} \approx 1.569803887146622$ \\
\hline
$\left(\frac{1}{4},\frac{271}{960}\right]$ &$\left(\frac{11}{42},\frac{5}{19}\right]$ &$\frac{2179203}{16777216} \approx 0.12989062070846558$ &$\frac{10460110890923925809177}{6663323346674154209280} \approx 1.569803887146622$ \\
\hline
$\left(\frac{1}{4},\frac{271}{960}\right]$ &$\left(\frac{5}{19},\frac{9}{34}\right]$ &$\frac{2179203}{16777216} \approx 0.12989062070846558$ &$\frac{10460110890923925809177}{6663323346674154209280} \approx 1.569803887146622$ \\
\hline
$\left(\frac{1}{4},\frac{271}{960}\right]$ &$\left(\frac{9}{34},\frac{22}{83}\right]$ &$\frac{2179203}{16777216} \approx 0.12989062070846558$ &$\frac{10460110890923925809177}{6663323346674154209280} \approx 1.569803887146622$ \\
\hline
$\left(\frac{1}{4},\frac{271}{960}\right]$ &$\left(\frac{22}{83},\frac{4}{15}\right]$ &$\frac{2750781}{16777216} \approx 0.1639593243598938$ &$\frac{10446431817507488684327}{6663323346674154209280} \approx 1.5677509966136622$ \\
\hline
$\left(\frac{1}{4},\frac{271}{960}\right]$ &$\left(\frac{4}{15},\frac{11}{41}\right]$ &$\frac{2750781}{16777216} \approx 0.1639593243598938$ &$\frac{10446431817507488684327}{6663323346674154209280} \approx 1.5677509966136622$ \\
\hline
$\left(\frac{1}{4},\frac{271}{960}\right]$ &$\left(\frac{11}{41},\frac{7}{26}\right]$ &$\frac{2750781}{16777216} \approx 0.1639593243598938$ &$\frac{10446431817507488684327}{6663323346674154209280} \approx 1.5677509966136622$ \\
\hline
$\left(\frac{1}{4},\frac{271}{960}\right]$ &$\left(\frac{7}{26},\frac{10}{37}\right]$ &$\frac{2750781}{16777216} \approx 0.1639593243598938$ &$\frac{10446431817507488684327}{6663323346674154209280} \approx 1.5677509966136622$ \\
\hline
$\left(\frac{1}{4},\frac{271}{960}\right]$ &$\left(\frac{10}{37},\frac{3}{11}\right]$ &$\frac{2750781}{16777216} \approx 0.1639593243598938$ &$\frac{10446431817507488684327}{6663323346674154209280} \approx 1.5677509966136622$ \\
\hline
$\left(\frac{1}{4},\frac{271}{960}\right]$ &$\left(\frac{3}{11},\frac{11}{40}\right]$ &$\frac{2750781}{16777216} \approx 0.1639593243598938$ &$\frac{10446431817507488684327}{6663323346674154209280} \approx 1.5677509966136622$ \\
\hline
$\left(\frac{1}{4},\frac{271}{960}\right]$ &$\left(\frac{11}{40},\frac{8}{29}\right]$ &$\frac{2750781}{16777216} \approx 0.1639593243598938$ &$\frac{10446431817507488684327}{6663323346674154209280} \approx 1.5677509966136622$ \\
\hline
$\left(\frac{1}{4},\frac{271}{960}\right]$ &$\left(\frac{8}{29},\frac{5}{18}\right]$ &$\frac{2750781}{16777216} \approx 0.1639593243598938$ &$\frac{10446431817507488684327}{6663323346674154209280} \approx 1.5677509966136622$ \\
\hline
$\left(\frac{1}{4},\frac{271}{960}\right]$ &$\left(\frac{5}{18},\frac{7}{25}\right]$ &$\frac{2750781}{16777216} \approx 0.1639593243598938$ &$\frac{10446431817507488684327}{6663323346674154209280} \approx 1.5677509966136622$ \\
\hline
$\left(\frac{1}{4},\frac{271}{960}\right]$ &$\left(\frac{7}{25},\frac{9}{32}\right]$ &$\frac{2750781}{16777216} \approx 0.1639593243598938$ &$\frac{10446431817507488684327}{6663323346674154209280} \approx 1.5677509966136622$ \\
\hline
$\left(\frac{1}{4},\frac{271}{960}\right]$ &$\left(\frac{9}{32},\frac{11}{39}\right]$ &$\frac{2750781}{16777216} \approx 0.1639593243598938$ &$\frac{10446431817507488684327}{6663323346674154209280} \approx 1.5677509966136622$ \\
\hline
$\left(\frac{1}{4},\frac{271}{960}\right]$ &$\left(\frac{11}{39},\frac{271}{960}\right]$ &$\frac{2750781}{16777216} \approx 0.1639593243598938$ &$\frac{10446431817507488684327}{6663323346674154209280} \approx 1.5677509966136622$ \\
\hline
$\left(\frac{271}{960},\frac{1}{3}\right]$ &$\left(\frac{271}{960},\frac{17}{60}\right]$ &$\frac{10600561}{134217728} \approx 0.07898033410310745$ &$\frac{1382826099045786640337}{888443112889887227904} \approx 1.5564599229632081$ \\
\hline
$\left(\frac{271}{960},\frac{1}{3}\right]$ &$\left(\frac{17}{60},\frac{2}{7}\right]$ &$\frac{13203731}{16777216} \approx 0.7870036959648132$ &$\frac{906785414291053674997}{576099831014536249344} \approx 1.574007429743845$ \\
\hline
$\left(\frac{271}{960},\frac{1}{3}\right]$ &$\left(\frac{2}{7},\frac{24}{83}\right]$ &$\frac{13203731}{16777216} \approx 0.7870036959648132$ &$\frac{906785414291053674997}{576099831014536249344} \approx 1.574007429743845$ \\
\hline
$\left(\frac{271}{960},\frac{1}{3}\right]$ &$\left(\frac{24}{83},\frac{3}{10}\right]$ &$\frac{6614407}{8388608} \approx 0.7884987592697144$ &$\frac{2517076902114799893609}{1596120743936377487360} \approx 1.5769965472080427$ \\
\hline
\end{longtable}
}
\end{center}

The case where $k$ is a large class is given in the next table.
Here, for each scenario, we present the values of $u$, $v$, and
$w$, and the resulting upper bound on the resulting upper bound on
the total weight of a bin as computed by our branch and bound
procedure.

\renewcommand{\arraystretch}{1.3}

\begin{center}
\begin{longtable}{|c|c|c|c|c|}
\hline
Scenario Interval & $u$ & $v$ & $w$ & RBB \\
\hline
$\left(\frac{1}{3},\frac{1067}{3200}\right]$ &$\frac{4849071}{8388608}$ &$\frac{1228277}{2097152}$ &$\frac{13637073}{16777216}$ &$\frac{976466097504059936537}{618737167905457176576} \approx 1.578159755311909$ \\
\hline
$\left(\frac{1067}{3200},\frac{1601}{4800}\right]$ &$\frac{4848391}{8388608}$ &$\frac{9827191}{16777216}$ &$\frac{13636371}{16777216}$ &$\frac{65094396453287453231117}{41249144527030478438400} \approx 1.578078701986928$ \\
\hline
$\left(\frac{1601}{4800},\frac{3203}{9600}\right]$ &$\frac{4847711}{8388608}$ &$\frac{4914083}{8388608}$ &$\frac{3408917}{4194304}$ &$\frac{2219013169161278425799}{1406220836148766310400} \approx 1.5779976459732428$ \\
\hline
$\left(\frac{3203}{9600},\frac{267}{800}\right]$ &$\frac{4847031}{8388608}$ &$\frac{2457285}{4194304}$ &$\frac{13634965}{16777216}$ &$\frac{195263112314549308140883}{123747433581091435315200} \approx 1.5779164598722266$ \\
\hline
$\left(\frac{267}{800},\frac{641}{1920}\right]$ &$\frac{9692701}{16777216}$ &$\frac{9830115}{16777216}$ &$\frac{6817131}{8388608}$ &$\frac{8135544964525835898911}{5156143065878809804800} \approx 1.5778353821024589$ \\
\hline
$\left(\frac{641}{1920},\frac{1603}{4800}\right]$ &$\frac{9691341}{16777216}$ &$\frac{9831091}{16777216}$ &$\frac{13633559}{16777216}$ &$\frac{3549873807061904759441}{2249953337838026096640} \approx 1.5777544126639393$ \\
\hline
$\left(\frac{1603}{4800},\frac{1069}{3200}\right]$ &$\frac{9689981}{16777216}$ &$\frac{4916033}{8388608}$ &$\frac{1704107}{2097152}$ &$\frac{1743151995805501132663}{1104887799831173529600} \approx 1.577673313138089$ \\
\hline
$\left(\frac{1069}{3200},\frac{401}{1200}\right]$ &$\frac{2422155}{4194304}$ &$\frac{9833041}{16777216}$ &$\frac{13632153}{16777216}$ &$\frac{21691442925099755308363}{13749714842343492812800} \approx 1.577592202734197$ \\
\hline
$\left(\frac{401}{1200},\frac{3209}{9600}\right]$ &$\frac{9687259}{16777216}$ &$\frac{307313}{524288}$ &$\frac{13631449}{16777216}$ &$\frac{938523792217307183471}{594939584524478054400} \approx 1.5775110895797064$ \\
\hline
$\left(\frac{3209}{9600},\frac{107}{320}\right]$ &$\frac{9685899}{16777216}$ &$\frac{9834991}{16777216}$ &$\frac{6815373}{8388608}$ &$\frac{97601454441832379372417}{61873716790545717657600} \approx 1.577429957411995$ \\
\hline
$\left(\frac{107}{320},\frac{3211}{9600}\right]$ &$\frac{4842269}{8388608}$ &$\frac{9835967}{16777216}$ &$\frac{13630043}{16777216}$ &$\frac{106662777263302497317}{67621548404967997440} \approx 1.577348933575532$ \\
\hline
$\left(\frac{3211}{9600},\frac{803}{2400}\right]$ &$\frac{9683177}{16777216}$ &$\frac{4918471}{8388608}$ &$\frac{3407335}{4194304}$ &$\frac{48795709950512261582249}{30936858395272858828800} \approx 1.5772677796517383$ \\
\hline
$\left(\frac{803}{2400},\frac{1071}{3200}\right]$ &$\frac{1210227}{2097152}$ &$\frac{9837917}{16777216}$ &$\frac{13628637}{16777216}$ &$\frac{799888507648052575393}{507161613037259980800} \approx 1.5771866148499032$ \\
\hline
$\left(\frac{1071}{3200},\frac{1607}{4800}\right]$ &$\frac{9680455}{16777216}$ &$\frac{9838893}{16777216}$ &$\frac{13627933}{16777216}$ &$\frac{844860460332803406563}{535703175675720499200} \approx 1.5771055664680742$ \\
\hline
$\left(\frac{1607}{4800},\frac{643}{1920}\right]$ &$\frac{4839547}{8388608}$ &$\frac{2459967}{4194304}$ &$\frac{6813615}{8388608}$ &$\frac{48788179926934140662201}{30936858395272858828800} \approx 1.5770243799024195$ \\
\hline
$\left(\frac{643}{1920},\frac{67}{200}\right]$ &$\frac{9677733}{16777216}$ &$\frac{2460211}{4194304}$ &$\frac{13626527}{16777216}$ &$\frac{39028537296861897028793}{24749486716218287063040} \approx 1.576943301668013$ \\
\hline
$\left(\frac{67}{200},\frac{3217}{9600}\right]$ &$\frac{2419093}{4194304}$ &$\frac{9841819}{16777216}$ &$\frac{13625823}{16777216}$ &$\frac{2710175530013886781293}{1718714355292936601600} \approx 1.5768621014118234$ \\
\hline
$\left(\frac{3217}{9600},\frac{1609}{4800}\right]$ &$\frac{9675011}{16777216}$ &$\frac{9842795}{16777216}$ &$\frac{425785}{524288}$ &$\frac{6097581320136123887941}{3867107299409107353600} \approx 1.5767810014135972$ \\
\hline
$\left(\frac{1609}{4800},\frac{1073}{3200}\right]$ &$\frac{9673649}{16777216}$ &$\frac{9843771}{16777216}$ &$\frac{425763}{524288}$ &$\frac{234510295646344141999}{148734896131119513600} \approx 1.5766998985874037$ \\
\hline
$\left(\frac{1073}{3200},\frac{161}{480}\right]$ &$\frac{302259}{524288}$ &$\frac{9844747}{16777216}$ &$\frac{13623713}{16777216}$ &$\frac{65034175789299211990013}{41249144527030478438400} \approx 1.5766187768253583$ \\
\hline
$\left(\frac{161}{480},\frac{3221}{9600}\right]$ &$\frac{4835463}{8388608}$ &$\frac{9845723}{16777216}$ &$\frac{13623009}{16777216}$ &$\frac{1393517774329773967033}{883910239864938823680} \approx 1.5765376522198715$ \\
\hline
$\left(\frac{3221}{9600},\frac{537}{1600}\right]$ &$\frac{9669565}{16777216}$ &$\frac{4923349}{8388608}$ &$\frac{6811153}{8388608}$ &$\frac{8867383288693055361043}{5624883344595065241600} \approx 1.576456389484717$ \\
\hline
$\left(\frac{537}{1600},\frac{3223}{9600}\right]$ &$\frac{9668203}{16777216}$ &$\frac{4923837}{8388608}$ &$\frac{6810801}{8388608}$ &$\frac{1477821141624138598811}{937480557432510873600} \approx 1.5763752430999367$ \\
\hline
$\left(\frac{3223}{9600},\frac{403}{1200}\right]$ &$\frac{9666841}{16777216}$ &$\frac{4924325}{8388608}$ &$\frac{6810449}{8388608}$ &$\frac{8866470349521856454129}{5624883344595065241600} \approx 1.5762940858216417$ \\
\hline
$\left(\frac{403}{1200},\frac{43}{128}\right]$ &$\frac{1208185}{2097152}$ &$\frac{4924813}{8388608}$ &$\frac{6810097}{8388608}$ &$\frac{34632866359469257597}{21972200564824473600} \approx 1.5762129176498314$ \\
\hline
$\left(\frac{43}{128},\frac{1613}{4800}\right]$ &$\frac{4832059}{8388608}$ &$\frac{4925301}{8388608}$ &$\frac{13619491}{16777216}$ &$\frac{866854473982499251647}{549988593693739712512} \approx 1.5761317305885907$ \\
\hline
$\left(\frac{1613}{4800},\frac{3227}{9600}\right]$ &$\frac{2415689}{4194304}$ &$\frac{9851579}{16777216}$ &$\frac{13618787}{16777216}$ &$\frac{7501239398070167193691}{4759516676195824435200} \approx 1.576050659846777$ \\
\hline
$\left(\frac{3227}{9600},\frac{269}{800}\right]$ &$\frac{4830697}{8388608}$ &$\frac{9852555}{16777216}$ &$\frac{13618083}{16777216}$ &$\frac{27860310850528313052091}{17678204797298776473600} \approx 1.5759694590021585$ \\
\hline
$\left(\frac{269}{800},\frac{3229}{9600}\right]$ &$\frac{75469}{131072}$ &$\frac{9853531}{16777216}$ &$\frac{13617379}{16777216}$ &$\frac{16251010517460625461749}{10312286131757619609600} \approx 1.575888247264025$ \\
\hline
$\left(\frac{3229}{9600},\frac{323}{960}\right]$ &$\frac{4829335}{8388608}$ &$\frac{9854507}{16777216}$ &$\frac{13616675}{16777216}$ &$\frac{195002075117312289565019}{123747433581091435315200} \approx 1.575807024632376$ \\
\hline
$\left(\frac{323}{960},\frac{1077}{3200}\right]$ &$\frac{9657307}{16777216}$ &$\frac{2463871}{4194304}$ &$\frac{13615971}{16777216}$ &$\frac{19499203742889618437113}{12374743358109143531520} \approx 1.575725910316502$ \\
\hline
$\left(\frac{1077}{3200},\frac{101}{300}\right]$ &$\frac{9654005}{16777216}$ &$\frac{4928925}{8388608}$ &$\frac{13611571}{16777216}$ &$\frac{7457779830355679220749}{4733508388347759820800} \approx 1.575529019598681$ \\
\hline
$\left(\frac{101}{300},\frac{3233}{9600}\right]$ &$\frac{2413161}{4194304}$ &$\frac{4929413}{8388608}$ &$\frac{13610869}{16777216}$ &$\frac{31015987293220839318841}{19687091706082728345600} \approx 1.5754479003944406$ \\
\hline
$\left(\frac{3233}{9600},\frac{539}{1600}\right]$ &$\frac{4825641}{8388608}$ &$\frac{4929901}{8388608}$ &$\frac{13610167}{16777216}$ &$\frac{1364633163438840555319321}{866232035067640047206400} \approx 1.5753667703276324$ \\
\hline
$\left(\frac{539}{1600},\frac{647}{1920}\right]$ &$\frac{9649921}{16777216}$ &$\frac{4930389}{8388608}$ &$\frac{13609465}{16777216}$ &$\frac{227427146094410028878003}{144372005844606674534400} \approx 1.575285629398257$ \\
\hline
$\left(\frac{647}{1920},\frac{809}{2400}\right]$ &$\frac{9648559}{16777216}$ &$\frac{4930877}{8388608}$ &$\frac{13608763}{16777216}$ &$\frac{272898516056915156775019}{173246407013528009441280} \approx 1.5752044776063134$ \\
\hline
$\left(\frac{809}{2400},\frac{1079}{3200}\right]$ &$\frac{9647197}{16777216}$ &$\frac{4931365}{8388608}$ &$\frac{13608061}{16777216}$ &$\frac{15504798574922583944779}{9843545853041364172800} \approx 1.5751233149518027$ \\
\hline
$\left(\frac{1079}{3200},\frac{1619}{4800}\right]$ &$\frac{2411459}{4194304}$ &$\frac{4931853}{8388608}$ &$\frac{13607359}{16777216}$ &$\frac{151594662165810553673821}{96248003896404449689600} \approx 1.5750421414347242$ \\
\hline
$\left(\frac{1619}{4800},\frac{3239}{9600}\right]$ &$\frac{4822237}{8388608}$ &$\frac{4932341}{8388608}$ &$\frac{53151}{65536}$ &$\frac{2664612582117044275993}{1691859443491484467200} \approx 1.5749609652076608$ \\
\hline
$\left(\frac{3239}{9600},\frac{27}{80}\right]$ &$\frac{1205389}{2097152}$ &$\frac{4932829}{8388608}$ &$\frac{6802977}{8388608}$ &$\frac{682105654058662162308929}{433116017533820023603200} \approx 1.5748797699577104$ \\
\hline
$\left(\frac{27}{80},\frac{3241}{9600}\right]$ &$\frac{4820875}{8388608}$ &$\frac{4933317}{8388608}$ &$\frac{3401313}{4194304}$ &$\frac{5683920686588408219107}{3609300146115166863360} \approx 1.5747985638451925$ \\
\hline
$\left(\frac{3241}{9600},\frac{1621}{4800}\right]$ &$\frac{2410097}{4194304}$ &$\frac{9867611}{16777216}$ &$\frac{6802275}{8388608}$ &$\frac{682035357649256549876341}{433116017533820023603200} \approx 1.5747174660793963$ \\
\hline
$\left(\frac{1621}{4800},\frac{1081}{3200}\right]$ &$\frac{4819513}{8388608}$ &$\frac{9868587}{16777216}$ &$\frac{13603847}{16777216}$ &$\frac{682000180089309563487227}{433116017533820023603200} \approx 1.5746362463633783$ \\
\hline
$\left(\frac{1081}{3200},\frac{811}{2400}\right]$ &$\frac{9637663}{16777216}$ &$\frac{9869563}{16777216}$ &$\frac{13603145}{16777216}$ &$\frac{41331211775978752332917}{26249455608110304460800} \approx 1.574555007655421$ \\
\hline
$\left(\frac{811}{2400},\frac{649}{1920}\right]$ &$\frac{9636301}{16777216}$ &$\frac{2467635}{4194304}$ &$\frac{6801221}{8388608}$ &$\frac{332973563943163219793}{211482430436435558400} \approx 1.5744738854003466$ \\
\hline
$\left(\frac{649}{1920},\frac{541}{1600}\right]$ &$\frac{9634939}{16777216}$ &$\frac{2467879}{4194304}$ &$\frac{3400435}{4194304}$ &$\frac{68189466375708315383477}{43311601753382002360320} \approx 1.5743926249595173$ \\
\hline
$\left(\frac{541}{1600},\frac{3247}{9600}\right]$ &$\frac{1204197}{2097152}$ &$\frac{9872493}{16777216}$ &$\frac{6800519}{8388608}$ &$\frac{37881084193626042951837}{24062000974101112422400} \approx 1.5743114728654097$ \\
\hline
$\left(\frac{3247}{9600},\frac{203}{600}\right]$ &$\frac{4816107}{8388608}$ &$\frac{9873469}{16777216}$ &$\frac{13600335}{16777216}$ &$\frac{1363648628756210055261719}{866232035067640047206400} \approx 1.5742301987823957$ \\
\hline
$\left(\frac{203}{600},\frac{1083}{3200}\right]$ &$\frac{9630851}{16777216}$ &$\frac{4937223}{8388608}$ &$\frac{849977}{1048576}$ &$\frac{21305911255966877993659}{13534875547931875737600} \approx 1.5741490330306298$ \\
\hline
$\left(\frac{1083}{3200},\frac{65}{192}\right]$ &$\frac{601843}{1048576}$ &$\frac{9875423}{16777216}$ &$\frac{6799465}{8388608}$ &$\frac{227251332599389259383933}{144372005844606674534400} \approx 1.574067848333346$ \\
\hline
$\left(\frac{65}{192},\frac{3251}{9600}\right]$ &$\frac{4814063}{8388608}$ &$\frac{9876399}{16777216}$ &$\frac{13598227}{16777216}$ &$\frac{2478977391127248024761}{1574967336486618267648} \approx 1.573986541623945$ \\
\hline
$\left(\frac{3251}{9600},\frac{271}{800}\right]$ &$\frac{9626763}{16777216}$ &$\frac{77167}{131072}$ &$\frac{3399381}{4194304}$ &$\frac{340841807120908724567593}{216558008766910011801600} \approx 1.573905343245792$ \\
\hline
$\left(\frac{271}{800},\frac{3253}{9600}\right]$ &$\frac{1203175}{2097152}$ &$\frac{9878353}{16777216}$ &$\frac{6798411}{8388608}$ &$\frac{28402018238670299772763}{18046500730575834316800} \approx 1.5738241259453316$ \\
\hline
$\left(\frac{3253}{9600},\frac{1627}{4800}\right]$ &$\frac{9624037}{16777216}$ &$\frac{4939665}{8388608}$ &$\frac{13596119}{16777216}$ &$\frac{1363226519980709081267741}{866232035067640047206400} \approx 1.573742905818833$ \\
\hline
$\left(\frac{1627}{4800},\frac{217}{640}\right]$ &$\frac{4811337}{8388608}$ &$\frac{9880307}{16777216}$ &$\frac{1699427}{2097152}$ &$\frac{7745205426604181330903}{4921772926520682086400} \approx 1.5736616748142933$ \\
\hline
$\left(\frac{217}{640},\frac{407}{1200}\right]$ &$\frac{9621311}{16777216}$ &$\frac{2470321}{4194304}$ &$\frac{13594713}{16777216}$ &$\frac{30290795128023441955693}{19249600779280889937920} \approx 1.573580432931712$ \\
\hline
$\left(\frac{407}{1200},\frac{3257}{9600}\right]$ &$\frac{2404987}{4194304}$ &$\frac{9882261}{16777216}$ &$\frac{6797005}{8388608}$ &$\frac{170376924627108250321189}{108279004383455005900800} \approx 1.5734991801710894$ \\
\hline
$\left(\frac{3257}{9600},\frac{543}{1600}\right]$ &$\frac{1202323}{2097152}$ &$\frac{4941619}{8388608}$ &$\frac{3398327}{4194304}$ &$\frac{340736249228769438239761}{216558008766910011801600} \approx 1.5734179085268438$ \\
\hline
$\left(\frac{543}{1600},\frac{3259}{9600}\right]$ &$\frac{2404215}{4194304}$ &$\frac{4942237}{8388608}$ &$\frac{6795959}{8388608}$ &$\frac{1428453243143512141799}{907925684877891993600} \approx 1.5733151588674645$ \\
\hline
$\left(\frac{3259}{9600},\frac{163}{480}\right]$ &$\frac{9615497}{16777216}$ &$\frac{9885451}{16777216}$ &$\frac{13591215}{16777216}$ &$\frac{638961729657449450353963}{406145423035377018470400} \approx 1.5732338552090321$ \\
\hline
$\left(\frac{163}{480},\frac{1087}{3200}\right]$ &$\frac{9614133}{16777216}$ &$\frac{2471607}{4194304}$ &$\frac{849407}{1048576}$ &$\frac{23959826404898791776701}{15230453363826638192640} \approx 1.5731525406725584$ \\
\hline
$\left(\frac{1087}{3200},\frac{1631}{4800}\right]$ &$\frac{9612769}{16777216}$ &$\frac{4943703}{8388608}$ &$\frac{13589809}{16777216}$ &$\frac{1916687167806180270563231}{1218436269106131055411200} \approx 1.5730713344673333$ \\
\hline
$\left(\frac{1631}{4800},\frac{3263}{9600}\right]$ &$\frac{4805703}{8388608}$ &$\frac{9888383}{16777216}$ &$\frac{13589105}{16777216}$ &$\frac{91266098781049354432547}{58020774719339574067200} \approx 1.5729900061232447$ \\
\hline
$\left(\frac{3263}{9600},\frac{17}{50}\right]$ &$\frac{4805021}{8388608}$ &$\frac{9889361}{16777216}$ &$\frac{6794201}{8388608}$ &$\frac{958244551649243177648759}{609218134553065527705600} \approx 1.5729087781541997$ \\
\hline
$\left(\frac{17}{50},\frac{653}{1920}\right]$ &$\frac{4804339}{8388608}$ &$\frac{4945169}{8388608}$ &$\frac{13587699}{16777216}$ &$\frac{1916389973691814198771709}{1218436269106131055411200} \approx 1.572827420097824$ \\
\hline
$\left(\frac{653}{1920},\frac{1633}{4800}\right]$ &$\frac{4803657}{8388608}$ &$\frac{2472829}{4194304}$ &$\frac{13586995}{16777216}$ &$\frac{42584244127472991954513}{27076361535691801231360} \approx 1.5727461782979537$ \\
\hline
$\left(\frac{1633}{4800},\frac{1089}{3200}\right]$ &$\frac{4802975}{8388608}$ &$\frac{9892293}{16777216}$ &$\frac{3396573}{4194304}$ &$\frac{43549814309358760573049}{27691733388775705804800} \approx 1.5726647984777584$ \\
\hline
$\left(\frac{1089}{3200},\frac{817}{2400}\right]$ &$\frac{4802293}{8388608}$ &$\frac{9893271}{16777216}$ &$\frac{3396397}{4194304}$ &$\frac{43547563979990343602953}{27691733388775705804800} \approx 1.5725835348985948$ \\
\hline
$\left(\frac{817}{2400},\frac{3269}{9600}\right]$ &$\frac{4801611}{8388608}$ &$\frac{9894249}{16777216}$ &$\frac{13584885}{16777216}$ &$\frac{638664592575390310874761}{406145423035377018470400} \approx 1.5725022525238697$ \\
\hline
$\left(\frac{3269}{9600},\frac{109}{320}\right]$ &$\frac{4800929}{8388608}$ &$\frac{9895227}{16777216}$ &$\frac{13584181}{16777216}$ &$\frac{24881749827237400000967}{15823847650728974745600} \approx 1.5724209671654128$ \\
\hline
$\left(\frac{109}{320},\frac{3271}{9600}\right]$ &$\frac{9600493}{16777216}$ &$\frac{9896205}{16777216}$ &$\frac{6791739}{8388608}$ &$\frac{191579567278604859795569}{121843626910613105541120} \approx 1.5723396630268682$ \\
\hline
$\left(\frac{3271}{9600},\frac{409}{1200}\right]$ &$\frac{9599129}{16777216}$ &$\frac{4948591}{8388608}$ &$\frac{6791387}{8388608}$ &$\frac{319282743328592446082311}{203072711517688509235200} \approx 1.5722582366798283$ \\
\hline
$\left(\frac{409}{1200},\frac{1091}{3200}\right]$ &$\frac{2399441}{4194304}$ &$\frac{618635}{1048576}$ &$\frac{6791035}{8388608}$ &$\frac{957798689566464461366261}{609218134553065527705600} \approx 1.5721769186485635$ \\
\hline
$\left(\frac{1091}{3200},\frac{1637}{4800}\right]$ &$\frac{599775}{1048576}$ &$\frac{4949569}{8388608}$ &$\frac{13581367}{16777216}$ &$\frac{1915498275440243753880953}{1218436269106131055411200} \approx 1.572095581860421$ \\
\hline
$\left(\frac{1637}{4800},\frac{131}{384}\right]$ &$\frac{9595035}{16777216}$ &$\frac{9900117}{16777216}$ &$\frac{13580663}{16777216}$ &$\frac{212822145923762135255601}{135381807678459006156800} \approx 1.5720143612591524$ \\
\hline
$\left(\frac{131}{384},\frac{273}{800}\right]$ &$\frac{9593671}{16777216}$ &$\frac{9901095}{16777216}$ &$\frac{13579959}{16777216}$ &$\frac{383060038533099768465509}{243687253821226211082240} \approx 1.5719330105550788$ \\
\hline
$\left(\frac{273}{800},\frac{3277}{9600}\right]$ &$\frac{4796153}{8388608}$ &$\frac{9902073}{16777216}$ &$\frac{1697407}{2097152}$ &$\frac{34200018735559994763271}{21757790519752340275200} \approx 1.5718516411173389$ \\
\hline
$\left(\frac{3277}{9600},\frac{1639}{4800}\right]$ &$\frac{9590941}{16777216}$ &$\frac{9903051}{16777216}$ &$\frac{1697319}{2097152}$ &$\frac{79795912583596606079363}{50768177879422127308800} \approx 1.571770268633972$ \\
\hline
$\left(\frac{1639}{4800},\frac{1093}{3200}\right]$ &$\frac{1198697}{2097152}$ &$\frac{9904029}{16777216}$ &$\frac{1697231}{2097152}$ &$\frac{239375342693527943715041}{152304533638266381926400} \approx 1.5716888852570905$ \\
\hline
$\left(\frac{1093}{3200},\frac{41}{120}\right]$ &$\frac{9588211}{16777216}$ &$\frac{619063}{1048576}$ &$\frac{1697143}{2097152}$ &$\frac{239362964133249590154553}{152304533638266381926400} \approx 1.5716076101959833$ \\
\hline
$\left(\frac{41}{120},\frac{3281}{9600}\right]$ &$\frac{9574953}{16777216}$ &$\frac{4957255}{8388608}$ &$\frac{13553857}{16777216}$ &$\frac{1495266333842777226977}{951903335239164887040} \approx 1.5708174123237975$ \\
\hline
$\left(\frac{3281}{9600},\frac{547}{1600}\right]$ &$\frac{9573585}{16777216}$ &$\frac{9915491}{16777216}$ &$\frac{3388287}{4194304}$ &$\frac{3737971784757563262733}{2379758338097912217600} \approx 1.5707358704939935$ \\
\hline
$\left(\frac{547}{1600},\frac{3283}{9600}\right]$ &$\frac{1196527}{2097152}$ &$\frac{1239559}{2097152}$ &$\frac{1694055}{2097152}$ &$\frac{155740737142497672059}{99156597420746342400} \approx 1.5706543103899644$ \\
\hline
$\left(\frac{3283}{9600},\frac{821}{2400}\right]$ &$\frac{299089}{524288}$ &$\frac{9917453}{16777216}$ &$\frac{13551731}{16777216}$ &$\frac{14950334357491487905151}{9519033352391648870400} \approx 1.57057274662613$ \\
\hline
$\left(\frac{821}{2400},\frac{219}{640}\right]$ &$\frac{9569479}{16777216}$ &$\frac{4959217}{8388608}$ &$\frac{6775511}{8388608}$ &$\frac{3737389461217747706737}{2379758338097912217600} \approx 1.5704911718914114$ \\
\hline
$\left(\frac{219}{640},\frac{1643}{4800}\right]$ &$\frac{4784055}{8388608}$ &$\frac{9919415}{16777216}$ &$\frac{13550313}{16777216}$ &$\frac{332195138395961895179}{211534074497592197120} \approx 1.5704095861858092$ \\
\hline
$\left(\frac{1643}{4800},\frac{3287}{9600}\right]$ &$\frac{9566741}{16777216}$ &$\frac{2480099}{4194304}$ &$\frac{3387401}{4194304}$ &$\frac{8341520371837818791}{5311960576111411200} \approx 1.5703279895093232$ \\
\hline
$\left(\frac{3287}{9600},\frac{137}{400}\right]$ &$\frac{2391343}{4194304}$ &$\frac{9921377}{16777216}$ &$\frac{13548895}{16777216}$ &$\frac{14947227680416248798199}{9519033352391648870400} \approx 1.5702463818619536$ \\
\hline
$\left(\frac{137}{400},\frac{3289}{9600}\right]$ &$\frac{9564003}{16777216}$ &$\frac{4961179}{8388608}$ &$\frac{6774093}{8388608}$ &$\frac{1245537562505576488103}{793252779365970739200} \approx 1.5701647632437$ \\
\hline
$\left(\frac{3289}{9600},\frac{329}{960}\right]$ &$\frac{4781317}{8388608}$ &$\frac{9923339}{16777216}$ &$\frac{13547477}{16777216}$ &$\frac{1358697610480488950053}{865366668399240806400} \approx 1.5700831336545629$ \\
\hline
$\left(\frac{329}{960},\frac{1097}{3200}\right]$ &$\frac{9561265}{16777216}$ &$\frac{9924321}{16777216}$ &$\frac{846673}{1048576}$ &$\frac{46702805346340105397}{29746979226223902720} \approx 1.5700016123038312$ \\
\hline
$\left(\frac{1097}{3200},\frac{823}{2400}\right]$ &$\frac{1194987}{2097152}$ &$\frac{4962651}{8388608}$ &$\frac{13546059}{16777216}$ &$\frac{4981373489060959129669}{3173011117463882956800} \approx 1.5699199607729266$ \\
\hline
$\left(\frac{823}{2400},\frac{3293}{9600}\right]$ &$\frac{9558527}{16777216}$ &$\frac{2481571}{4194304}$ &$\frac{6772675}{8388608}$ &$\frac{339621460315043539411}{216341667099810201600} \approx 1.5698384174804276$ \\
\hline
$\left(\frac{3293}{9600},\frac{549}{1600}\right]$ &$\frac{9557157}{16777216}$ &$\frac{9927265}{16777216}$ &$\frac{13544641}{16777216}$ &$\frac{2134652400193077494953}{1359861907484521267200} \approx 1.5697567440077553$ \\
\hline
$\left(\frac{549}{1600},\frac{659}{1920}\right]$ &$\frac{2388947}{4194304}$ &$\frac{9928247}{16777216}$ &$\frac{13543931}{16777216}$ &$\frac{18865897030054472569}{12018981505545011200} \approx 1.5696751859839877$ \\
\hline
$\left(\frac{659}{1920},\frac{103}{300}\right]$ &$\frac{4777209}{8388608}$ &$\frac{2482307}{4194304}$ &$\frac{6771611}{8388608}$ &$\frac{298820255727094090723}{190380667047832977408} \approx 1.5695934905618107$ \\
\hline
$\left(\frac{103}{300},\frac{1099}{3200}\right]$ &$\frac{9553049}{16777216}$ &$\frac{4965105}{8388608}$ &$\frac{13542513}{16777216}$ &$\frac{7470118077615630740381}{4759516676195824435200} \approx 1.56951190337804$ \\
\hline
$\left(\frac{1099}{3200},\frac{1649}{4800}\right]$ &$\frac{9551679}{16777216}$ &$\frac{9931191}{16777216}$ &$\frac{3385451}{4194304}$ &$\frac{1244954857076533900907}{793252779365970739200} \approx 1.5694301860140958$ \\
\hline
$\left(\frac{1649}{4800},\frac{3299}{9600}\right]$ &$\frac{4775155}{8388608}$ &$\frac{9932173}{16777216}$ &$\frac{6770547}{8388608}$ &$\frac{1867335189159116883697}{1189879169048956108800} \approx 1.5693485840681085$ \\
\hline
$\left(\frac{3299}{9600},\frac{11}{32}\right]$ &$\frac{2387235}{4194304}$ &$\frac{9933155}{16777216}$ &$\frac{13540385}{16777216}$ &$\frac{14937904568779219079413}{9519033352391648870400} \approx 1.5692669639639494$ \\
\hline
$\left(\frac{11}{32},\frac{3301}{9600}\right]$ &$\frac{2386613}{4194304}$ &$\frac{4967469}{8388608}$ &$\frac{13537557}{16777216}$ &$\frac{3808805900044886000665}{2427353504859870461952} \approx 1.5691187511086342$ \\
\hline
$\left(\frac{3301}{9600},\frac{1651}{4800}\right]$ &$\frac{9545083}{16777216}$ &$\frac{9935919}{16777216}$ &$\frac{13536849}{16777216}$ &$\frac{761721555357534853652867}{485470700971974092390400} \approx 1.569037130011083$ \\
\hline
$\left(\frac{1651}{4800},\frac{1101}{3200}\right]$ &$\frac{4771857}{8388608}$ &$\frac{2484225}{4194304}$ &$\frac{13536141}{16777216}$ &$\frac{380840962693780962134227}{242735350485987046195200} \approx 1.5689554979581215$ \\
\hline
$\left(\frac{1101}{3200},\frac{413}{1200}\right]$ &$\frac{9542345}{16777216}$ &$\frac{9937881}{16777216}$ &$\frac{6767717}{8388608}$ &$\frac{54403020465723070833259}{34676478640855292313600} \approx 1.5688738475776567$ \\
\hline
$\left(\frac{413}{1200},\frac{661}{1920}\right]$ &$\frac{596311}{1048576}$ &$\frac{9938863}{16777216}$ &$\frac{6767363}{8388608}$ &$\frac{8654576179425707744141}{5516712511045160140800} \approx 1.5687923128309016$ \\
\hline
$\left(\frac{661}{1920},\frac{551}{1600}\right]$ &$\frac{9539607}{16777216}$ &$\frac{2484961}{4194304}$ &$\frac{6767009}{8388608}$ &$\frac{76156305786765362781011}{48547070097197409239040} \approx 1.568710647919447$ \\
\hline
$\left(\frac{551}{1600},\frac{3307}{9600}\right]$ &$\frac{9538237}{16777216}$ &$\frac{4970413}{8388608}$ &$\frac{13533311}{16777216}$ &$\frac{380761730466132202387531}{242735350485987046195200} \approx 1.568629083912989$ \\
\hline
$\left(\frac{3307}{9600},\frac{827}{2400}\right]$ &$\frac{2384217}{4194304}$ &$\frac{9941807}{16777216}$ &$\frac{13532603}{16777216}$ &$\frac{761483804377150609715159}{485470700971974092390400} \approx 1.5685473970984514$ \\
\hline
$\left(\frac{827}{2400},\frac{1103}{3200}\right]$ &$\frac{4767749}{8388608}$ &$\frac{9942789}{16777216}$ &$\frac{13531895}{16777216}$ &$\frac{11897565630876931130309}{7585479702687095193600} \approx 1.5684658185377933$ \\
\hline
$\left(\frac{1103}{3200},\frac{331}{960}\right]$ &$\frac{9534129}{16777216}$ &$\frac{9943771}{16777216}$ &$\frac{13531187}{16777216}$ &$\frac{761404591056565992060113}{485470700971974092390400} \approx 1.5683842290217251$ \\
\hline
$\left(\frac{331}{960},\frac{3311}{9600}\right]$ &$\frac{9532759}{16777216}$ &$\frac{621547}{1048576}$ &$\frac{13530479}{16777216}$ &$\frac{10876641693512292118391}{6935295728171058462720} \approx 1.568302509340957$ \\
\hline
$\left(\frac{3311}{9600},\frac{69}{200}\right]$ &$\frac{9531389}{16777216}$ &$\frac{4972867}{8388608}$ &$\frac{13529771}{16777216}$ &$\frac{761325298589241682199147}{485470700971974092390400} \approx 1.568220897914069$ \\
\hline
$\left(\frac{69}{200},\frac{3313}{9600}\right]$ &$\frac{9530019}{16777216}$ &$\frac{2486679}{4194304}$ &$\frac{13529063}{16777216}$ &$\frac{190321418328523083301661}{121367675242993523097600} \approx 1.5681392755317705$ \\
\hline
$\left(\frac{3313}{9600},\frac{1657}{4800}\right]$ &$\frac{9528649}{16777216}$ &$\frac{4973849}{8388608}$ &$\frac{13528355}{16777216}$ &$\frac{1134494847571404472891}{723503280137070182400} \approx 1.5680576421940624$ \\
\hline
$\left(\frac{1657}{4800},\frac{221}{640}\right]$ &$\frac{595455}{1048576}$ &$\frac{1243585}{2097152}$ &$\frac{13527647}{16777216}$ &$\frac{380603203404100943254399}{242735350485987046195200} \approx 1.567975997900944$ \\
\hline
$\left(\frac{221}{640},\frac{829}{2400}\right]$ &$\frac{9525909}{16777216}$ &$\frac{4974831}{8388608}$ &$\frac{13526939}{16777216}$ &$\frac{13839395737772014341881}{8826740017672256225280} \approx 1.5678943426524157$ \\
\hline
$\left(\frac{829}{2400},\frac{3317}{9600}\right]$ &$\frac{9524539}{16777216}$ &$\frac{2487661}{4194304}$ &$\frac{13526231}{16777216}$ &$\frac{95140889878523638524229}{60683837621496761548800} \approx 1.5678126764484774$ \\
\hline
$\left(\frac{3317}{9600},\frac{553}{1600}\right]$ &$\frac{9523169}{16777216}$ &$\frac{4975813}{8388608}$ &$\frac{13525523}{16777216}$ &$\frac{108726781022912406382847}{69352957281710584627200} \approx 1.567730999289129$ \\
\hline
$\left(\frac{553}{1600},\frac{3319}{9600}\right]$ &$\frac{9521799}{16777216}$ &$\frac{311019}{524288}$ &$\frac{6762407}{8388608}$ &$\frac{190261953374162263057889}{121367675242993523097600} \approx 1.567649318430411$ \\
\hline
$\left(\frac{3319}{9600},\frac{83}{240}\right]$ &$\frac{9520429}{16777216}$ &$\frac{4976795}{8388608}$ &$\frac{6762053}{8388608}$ &$\frac{380504075494014799243069}{242735350485987046195200} \approx 1.5675676193525057$ \\
\hline
$\left(\frac{83}{240},\frac{1107}{3200}\right]$ &$\frac{4759529}{8388608}$ &$\frac{2488643}{4194304}$ &$\frac{6761699}{8388608}$ &$\frac{19024212079021989045911}{12136767524299352309760} \approx 1.5674859093191904$ \\
\hline
$\left(\frac{1107}{3200},\frac{1661}{4800}\right]$ &$\frac{1189711}{2097152}$ &$\frac{9955555}{16777216}$ &$\frac{13522689}{16777216}$ &$\frac{760928871399143283572207}{485470700971974092390400} \approx 1.5674043147725845$ \\
\hline
$\left(\frac{1661}{4800},\frac{3323}{9600}\right]$ &$\frac{9516317}{16777216}$ &$\frac{9956537}{16777216}$ &$\frac{13521981}{16777216}$ &$\frac{380444596465588028345497}{242735350485987046195200} \approx 1.5673225828207122$ \\
\hline
$\left(\frac{3323}{9600},\frac{277}{800}\right]$ &$\frac{4757473}{8388608}$ &$\frac{9957519}{16777216}$ &$\frac{13521273}{16777216}$ &$\frac{760849509144678246905501}{485470700971974092390400} \approx 1.5672408399134297$ \\
\hline
$\left(\frac{277}{800},\frac{133}{384}\right]$ &$\frac{1189197}{2097152}$ &$\frac{4979251}{8388608}$ &$\frac{3380141}{4194304}$ &$\frac{1698236342438227951421}{1083639957526727884800} \approx 1.5671592124696463$ \\
\hline
$\left(\frac{133}{384},\frac{1663}{4800}\right]$ &$\frac{9512205}{16777216}$ &$\frac{2489871}{4194304}$ &$\frac{844991}{1048576}$ &$\frac{1901925467462527805549}{1213676752429935230976} \approx 1.5670774476438072$ \\
\hline
$\left(\frac{1663}{4800},\frac{1109}{3200}\right]$ &$\frac{4755417}{8388608}$ &$\frac{9960467}{16777216}$ &$\frac{13519147}{16777216}$ &$\frac{34578661300196807203331}{22066850044180640563200} \approx 1.566995798265993$ \\
\hline
$\left(\frac{1109}{3200},\frac{26}{75}\right]$ &$\frac{9509463}{16777216}$ &$\frac{9961449}{16777216}$ &$\frac{13518439}{16777216}$ &$\frac{760690843536197568224759}{485470700971974092390400} \approx 1.566914011521597$ \\
\hline
$\left(\frac{26}{75},\frac{3329}{9600}\right]$ &$\frac{2377023}{4194304}$ &$\frac{155663}{262144}$ &$\frac{6758865}{8388608}$ &$\frac{95081399313398370675127}{60683837621496761548800} \approx 1.5668323402097522$ \\
\hline
$\left(\frac{3329}{9600},\frac{111}{320}\right]$ &$\frac{9506721}{16777216}$ &$\frac{9963415}{16777216}$ &$\frac{13517021}{16777216}$ &$\frac{760611540152133730827083}{485470700971974092390400} \approx 1.5667506579270236$ \\
\hline
$\left(\frac{111}{320},\frac{3331}{9600}\right]$ &$\frac{4752675}{8388608}$ &$\frac{9964397}{16777216}$ &$\frac{13516313}{16777216}$ &$\frac{76057181912089774301981}{48547070097197409239040} \approx 1.5666688383009235$ \\
\hline
$\left(\frac{3331}{9600},\frac{833}{2400}\right]$ &$\frac{9503979}{16777216}$ &$\frac{2491345}{4194304}$ &$\frac{3378901}{4194304}$ &$\frac{27161862647054116763159}{17338239320427646156800} \approx 1.5665871340841644$ \\
\hline
$\left(\frac{833}{2400},\frac{1111}{3200}\right]$ &$\frac{9502607}{16777216}$ &$\frac{9966363}{16777216}$ &$\frac{13514895}{16777216}$ &$\frac{95061560473511270495863}{60683837621496761548800} \approx 1.5665054188965213$ \\
\hline
$\left(\frac{1111}{3200},\frac{1667}{4800}\right]$ &$\frac{2375309}{4194304}$ &$\frac{4983673}{8388608}$ &$\frac{6757093}{8388608}$ &$\frac{380226404066311213407583}{242735350485987046195200} \approx 1.5664236927379946$ \\
\hline
$\left(\frac{1667}{4800},\frac{667}{1920}\right]$ &$\frac{9493465}{16777216}$ &$\frac{2493229}{4194304}$ &$\frac{13501373}{16777216}$ &$\frac{141611096872472700200291}{90430816847720664268800} \approx 1.5659606073329642$ \\
\hline
$\left(\frac{667}{1920},\frac{139}{400}\right]$ &$\frac{2373023}{4194304}$ &$\frac{2493475}{4194304}$ &$\frac{13500661}{16777216}$ &$\frac{18880492143682657223507}{12057442246362755235840} \approx 1.5658787127409333$ \\
\hline
$\left(\frac{139}{400},\frac{3337}{9600}\right]$ &$\frac{4745359}{8388608}$ &$\frac{2493721}{4194304}$ &$\frac{6749975}{8388608}$ &$\frac{35399070916863617830051}{22607704211930166067200} \approx 1.5657968002864882$ \\
\hline
$\left(\frac{3337}{9600},\frac{1669}{4800}\right]$ &$\frac{9489345}{16777216}$ &$\frac{9975869}{16777216}$ &$\frac{6749619}{8388608}$ &$\frac{12871716969171545606503}{8220983349792787660800} \approx 1.5657150028768738$ \\
\hline
$\left(\frac{1669}{4800},\frac{1113}{3200}\right]$ &$\frac{9487971}{16777216}$ &$\frac{9976853}{16777216}$ &$\frac{13498527}{16777216}$ &$\frac{257889758215323713103}{164719156371075891200} \approx 1.5656330684110293$ \\
\hline
$\left(\frac{1113}{3200},\frac{167}{480}\right]$ &$\frac{4743299}{8388608}$ &$\frac{4988919}{8388608}$ &$\frac{13497815}{16777216}$ &$\frac{283148156523474240613187}{180861633695441328537600} \approx 1.5655512489745418$ \\
\hline
$\left(\frac{167}{480},\frac{3341}{9600}\right]$ &$\frac{1185653}{2097152}$ &$\frac{9978823}{16777216}$ &$\frac{13497103}{16777216}$ &$\frac{12639881988121217147}{8074180075689345024} \approx 1.5654694185207492$ \\
\hline
$\left(\frac{3341}{9600},\frac{557}{1600}\right]$ &$\frac{4741925}{8388608}$ &$\frac{9979807}{16777216}$ &$\frac{13496391}{16777216}$ &$\frac{8579349484586716593901}{5480655566528525107200} \approx 1.565387457840362$ \\
\hline
$\left(\frac{557}{1600},\frac{3343}{9600}\right]$ &$\frac{2370619}{4194304}$ &$\frac{1247599}{2097152}$ &$\frac{105435}{131072}$ &$\frac{1105873936667833345961}{706490756622817689600} \approx 1.5653055985532716$ \\
\hline
$\left(\frac{3343}{9600},\frac{209}{600}\right]$ &$\frac{4740551}{8388608}$ &$\frac{9981777}{16777216}$ &$\frac{1686871}{2097152}$ &$\frac{35386115227622790352823}{22607704211930166067200} \approx 1.5652237350553009$ \\
\hline
$\left(\frac{209}{600},\frac{223}{640}\right]$ &$\frac{592483}{1048576}$ &$\frac{4991381}{8388608}$ &$\frac{843391}{1048576}$ &$\frac{1474344343033289213191}{941987675497090252800} \approx 1.565141860540025$ \\
\hline
$\left(\frac{223}{640},\frac{1673}{4800}\right]$ &$\frac{4739177}{8388608}$ &$\frac{4991873}{8388608}$ &$\frac{1686693}{2097152}$ &$\frac{7076482058770152376727}{4521540842386033213440} \approx 1.5650598557981548$ \\
\hline
$\left(\frac{1673}{4800},\frac{3347}{9600}\right]$ &$\frac{2369245}{4194304}$ &$\frac{9984731}{16777216}$ &$\frac{421651}{524288}$ &$\frac{4422569850109370951347}{2825963026491270758400} \approx 1.564977959248269$ \\
\hline
$\left(\frac{3347}{9600},\frac{279}{800}\right]$ &$\frac{4737803}{8388608}$ &$\frac{2496429}{4194304}$ &$\frac{1686515}{2097152}$ &$\frac{561566778711479378089}{358852447808415334400} \approx 1.5648960516810781$ \\
\hline
$\left(\frac{279}{800},\frac{3349}{9600}\right]$ &$\frac{4736809}{8388608}$ &$\frac{9987141}{16777216}$ &$\frac{6745125}{8388608}$ &$\frac{582566669044251570404083}{372299971115762266931200} \approx 1.5647776369639024$ \\
\hline
$\left(\frac{3349}{9600},\frac{67}{192}\right]$ &$\frac{9472245}{16777216}$ &$\frac{9988125}{16777216}$ &$\frac{3372385}{4194304}$ &$\frac{2621412859662547736738383}{1675349870020930201190400} \approx 1.5646957728476132$ \\
\hline
$\left(\frac{67}{192},\frac{1117}{3200}\right]$ &$\frac{1183859}{2097152}$ &$\frac{9989109}{16777216}$ &$\frac{6744415}{8388608}$ &$\frac{149787182298284044767131}{95734278286910297210880} \approx 1.5646138977449666$ \\
\hline
$\left(\frac{1117}{3200},\frac{419}{1200}\right]$ &$\frac{4734749}{8388608}$ &$\frac{4995047}{8388608}$ &$\frac{13488121}{16777216}$ &$\frac{3494851587101814851987261}{2233799826694573601587200} \approx 1.5645321238444452$ \\
\hline
$\left(\frac{419}{1200},\frac{3353}{9600}\right]$ &$\frac{9468125}{16777216}$ &$\frac{4995539}{8388608}$ &$\frac{13487411}{16777216}$ &$\frac{10484005936339043623382941}{6701399480083720804761600} \approx 1.5644502267768203$ \\
\hline
$\left(\frac{3353}{9600},\frac{559}{1600}\right]$ &$\frac{9466751}{16777216}$ &$\frac{4996031}{8388608}$ &$\frac{13486701}{16777216}$ &$\frac{10483457037748672197869699}{6701399480083720804761600} \approx 1.5643683187228383$ \\
\hline
$\left(\frac{559}{1600},\frac{671}{1920}\right]$ &$\frac{4732689}{8388608}$ &$\frac{9993047}{16777216}$ &$\frac{13485991}{16777216}$ &$\frac{1164767651600377921797473}{744599942231524533862400} \approx 1.5642865188917878$ \\
\hline
$\left(\frac{671}{1920},\frac{839}{2400}\right]$ &$\frac{2366001}{4194304}$ &$\frac{9994031}{16777216}$ &$\frac{6742641}{8388608}$ &$\frac{1048235977172324906266867}{670139948008372080476160} \approx 1.5642045818752313$ \\
\hline
$\left(\frac{839}{2400},\frac{1119}{3200}\right]$ &$\frac{9462631}{16777216}$ &$\frac{1249377}{2097152}$ &$\frac{3371143}{4194304}$ &$\frac{2620452862795471260412937}{1675349870020930201190400} \approx 1.564122760079203$ \\
\hline
$\left(\frac{1119}{3200},\frac{1679}{4800}\right]$ &$\frac{9461257}{16777216}$ &$\frac{9996001}{16777216}$ &$\frac{6741931}{8388608}$ &$\frac{249553882309917869707153}{159557130478183828684800} \approx 1.5640409272968172$ \\
\hline
$\left(\frac{1679}{4800},\frac{3359}{9600}\right]$ &$\frac{9459883}{16777216}$ &$\frac{9996985}{16777216}$ &$\frac{842697}{1048576}$ &$\frac{655044611897385906503719}{418837467505232550297600} \approx 1.5639589643187841$ \\
\hline
$\left(\frac{3359}{9600},\frac{7}{20}\right]$ &$\frac{9458509}{16777216}$ &$\frac{4998985}{8388608}$ &$\frac{6741221}{8388608}$ &$\frac{5240082624472449502119691}{3350699740041860402380800} \approx 1.5638771095636832$ \\
\hline
$\left(\frac{7}{20},\frac{59}{166}\right]$ &$\frac{9075501}{16777216}$ &$\frac{2461325}{4194304}$ &$\frac{13365787}{16777216}$ &$\frac{690817828427621169838459}{446759965338914720317440} \approx 1.5462840944209553$ \\
\hline
$\left(\frac{59}{166},\frac{43}{120}\right]$ &$\frac{4456807}{8388608}$ &$\frac{9835417}{16777216}$ &$\frac{13254243}{16777216}$ &$\frac{58585852828075806845873}{38187188798677831385088} \approx 1.5341755879685033$ \\
\hline
$\left(\frac{43}{120},\frac{3}{7}\right]$ &$\frac{1045721}{2097152}$ &$\frac{636339}{1048576}$ &$\frac{6779541}{8388608}$ &$\frac{447130049349370498541651}{284301796124763912929280} \approx 1.5727303008425262$ \\
\hline
$\left(\frac{3}{7},\frac{1}{2}\right]$ &$\frac{8388625}{16777216}$ &$\frac{8388625}{16777216}$ &$1$ &$\frac{470739775199623}{298261319516160} \approx 1.5782796641658323$ \\
\hline
\end{longtable}
\end{center}

Using this table, we completed the proof of  the following
theorem.
\begin{theorem}
The competitive ratio of AH is at most $1.57828956$.
\end{theorem}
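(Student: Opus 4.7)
The plan is to assemble the theorem by combining the weight-function machinery already set up with the scenario-by-scenario knapsack bounds reported in Appendix \ref{uvw_w}. Specifically, by the previous theorem, for the output of AH on any input $I$ we have $A(I) \le W(I)+\Psi$, where $\Psi$ is a constant depending only on the parameters of AH (and not on $I$), and where $W(I)$ is the total weight of items of $I$ under the weight function associated with the scenario $(x,y]$ determined by $I$. Hence it suffices, for each scenario, to exhibit values $u,v,w \in [0,1]$ satisfying the feasibility constraints (\ref{cons1})--(\ref{cons4}) (only $w$ is needed when the threshold class is basic), and to upper bound by $R$ the maximum total weight of items that can fit into a single bin, where $R=1.57828956$.

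First I would argue that, given such an $R$ valid across all scenarios, we obtain the theorem: for any bin $B$ of any feasible packing, the total weight of items in $B$ is at most $R$, so $W(I) \le R \cdot OPT(I)$ for every input $I$ (summing over the bins of an optimal packing). Combined with $A(I) \le W(I)+\Psi$, this yields $A(I) \le R\cdot OPT(I)+\Psi$, from which the asymptotic competitive ratio bound follows by the definition recalled in the introduction.

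Next I would verify that the values of $u$, $v$, $w$ listed in Appendix \ref{uvw_w} for each scenario lie in $[0,1]$ and satisfy constraints (\ref{cons1})--(\ref{cons4}); this is a direct arithmetic check using the $\alpha_{ij}$ values tabulated in Appendix \ref{allallall}. Then I would apply the reduction of Section \ref{weight_sec} that replaces each bin of an optimal solution by a multiset of canonical sizes from $\Delta$, completed by tiny items of density $\rho$, thereby reducing the per-bin weight bound to two knapsack instances (one for the case of a huge item of size at least $a$, with capacity slightly below $y$ in terms of non-huge mass, and one for the other case, with capacity slightly below $1$). For each scenario I would invoke the branch-and-bound solver of Algorithm \ref{BBB} on these two knapsack instances; the RBB column of the table reports the resulting optimal values, each bounded by $\frac{10060574276093395247}{6374352691333693440} < 1.57828956$, with the worst case attained by the scenarios in $(\tfrac{3}{10},\tfrac{1}{3}]$ (threshold class basic).

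The main obstacle is the reliability of the scenario-wise knapsack computation: there are hundreds of scenarios, and each requires an exact rational optimization of weight per bin over a high-dimensional integer lattice of item multiplicities. I would handle this by representing every parameter as a rational number with a common denominator $q$, converting the "$<1$" (respectively "$<1-a$") constraints into the exact integer constraint of total size at most $\tfrac{q-1}{q}$ (respectively the analogous bound), and running Algorithm \ref{BBB} with rational arithmetic; independent verification by a standard floating-point knapsack solver on rounded inputs serves as a sanity check. Once every scenario's RBB entry is certified to be at most $1.57828956$, the theorem is established.
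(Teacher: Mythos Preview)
Your proposal is correct and follows essentially the same approach as the paper: combine the validity of the weight function (the previous theorem) with the per-scenario knapsack bounds computed via the branch-and-bound solver on the tabulated $u,v,w$ values, observing that the maximum RBB value over all scenarios is at most $1.57828956$. Your write-up is in fact more explicit than the paper's, which simply states ``Using this table, we completed the proof of the following theorem'' after the Appendix~\ref{uvw_w} tables.
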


\section{Several weight functions and examples}
The weight functions are calculated easily using our definitions
of weights, which are based on the parameters in the previous two
sections and on our formulas for weights. Recall that weights of
items in all scenarios are based directly on values of the form
$\alpha_{ij}$, $u$, $v$, and $w$, and are defined precisely in
Section \ref{amor}. Here, we provide three examples of such weight
functions, while all 412 weight functions can be downloaded from
http://math.haifa.ac.il/lea/WeightFunctionsForBP.pdf.

We note that since the large classes in the interval $(\frac
13,\frac 7{20}]$ use common values of $\alpha_{1j}$ (and
$\alpha_{2j}$), for all large classes $j$, the number of distinct
weights in this interval is at most four. Thus, a table reporting
the weight function of a given scenario is sufficiently short.

\subsection{Weight function for the scenario interval $\boldsymbol{(17/50,653/1920]}$}

In this section we discuss the threshold class
${(17/50=0.34,653/1920\approx 0.340104]}$.

The large threshold classes that are close to $\frac 13$ are
significant ones in the analysis. In the threshold class discussed
here we have $a \approx 0.66$, that is, there can be positive bins
whose single huge item has size below $\frac 23$, and there may be
negative bins whose single item has size slightly above $0.34$. If
one chooses parameters in such a way that $\alpha_{1k}$ is too
big, there may be a big fraction of bins that are relatively
empty.

The weights table contains the weights assigned to classes in the
case of a given threshold class, based on the $\alpha_{ij}$ values
and $u$, $v$, $w$ (if $u$ and $v$ are undefined, obviously weights
are not based on them). The set $\Delta$ as it is defined above is
based on these weights. Note that in the case where $u$ and $v$
are defined, the threshold class is split if necessary into
$(t_k,1-a]$ and $(1-a,t_{k-1}]$.  While the value of $a$ is
unknown, as explained above, we overcome this difficulty by
considering bins with an item of size at least $a$ and other items
of sizes no larger than $1-a$, and bins without such an item (in
which case, the value $1-a$ can be only slightly larger than $x$,
but it can also be $y$).

For solutions of the knapsack problem, recall that in the set
$\Delta$, the size of every item is of the form $t_{\ell}$ or $x$,
where its weight in $\Delta$ is the weight of an item that is just
slightly larger (an item of the following interval in the table).

Consider solutions of the knapsack problems on $\Delta$. We
provide several examples in order to allow the reader to see how
total weights are computed and to provide some intuition to the
knapsack problems.

Recall that tiny items are added to each solution such that the
total size becomes $1$. The first set of items from $\Delta$ is as
follows:
$$\frac {17}{50}, \  \frac {17}{50}, \ \frac 14,  \ \frac 1{15},
\mbox{ \ (and tiny items of total size \ } \frac 1{300}) \ .
$$ The total weight of these items is $2\cdot
\frac{4945169}{8388608}+\frac{1080920410736029}{3377699720527872}+\frac{569257302249594179}{8070450532247928832}+
\frac{1936246260875168533}{1983131948414926848}\cdot \frac{1}{300}
\approx 1.572827420097824 $.

The second set of items from $\Delta$ contains $\frac {17}{50}$,
an item of size $a$, and tiny items of total size $\frac 1{9600}$.
The total weight of these items is $1+\frac{4804339}{8388608} +
\frac{1936246260875168533}{1983131948414926848}\cdot
\frac{1}{9600} \approx 1.572823543$.

The third set of items from $\Delta$ is as follows:
$$\frac {1}{2}, \  \frac {17}{50}, \ \frac 3{20},
\mbox{ \ (and tiny items of total size \ } \frac 1{100}) \ .
$$ The total weight of these items is $\frac{13587699}{16777216}+
\frac{4945169}{8388608}+\frac{2211215673518099}{13510798882111488}+
\frac{1936246260875168533}{1983131948414926848}\cdot \frac{1}{100}
\approx 1.57282647$.

It can be seen that those are all knapsack solutions with total
weight very close to the maximum.


\renewcommand{\arraystretch}{1.2}
\begin{center}
\begin{longtable}{|  c   l   c |}
\hline
Interval & & Weight \\
\hline
$\left(0,\frac{1}{43}\right]$ & & $\rho=\frac{1936246260875168533}{1983131948414926848} \approx 0.9763577569423795$\\
\hline
$\left(\frac{1}{43},\frac{1}{42}\right]$ & &$\frac{3024178159640353}{130041439240323072} \approx 0.023255495919662354$\\
\hline
$\left(\frac{1}{42},\frac{1}{41}\right]$ & &$\frac{24142046017061077}{1015561715972046848} \approx 0.023772111174901342$\\
\hline
$\left(\frac{1}{41},\frac{1}{40}\right]$ & &$\frac{3010865377716701}{123848989752688640} \approx 0.024310778664638547$\\
\hline
$\left(\frac{1}{40},\frac{1}{39}\right]$ & &$\frac{4015032672085983}{161003686678495232} \approx 0.02493752009606783$\\
\hline
$\left(\frac{1}{39},\frac{1}{38}\right]$ & &$\frac{2197706275568431}{85568392920039424} \approx 0.02568362219472917$\\
\hline
$\left(\frac{1}{38},\frac{1}{37}\right]$ & &$\frac{21970084912148039}{833165931063541760} \approx 0.026369399051282716$\\
\hline
$\left(\frac{1}{37},\frac{1}{36}\right]$ & &$\frac{2741905392774811}{101330991615836160} \approx 0.027058902208022033$\\
\hline
$\left(\frac{1}{36},\frac{1}{35}\right]$ & &$\frac{19827765777652997}{709316941310853120} \approx 0.027953323292984226$\\
\hline
$\left(\frac{1}{35},\frac{1}{34}\right]$ & &$\frac{1648900744327349}{57420895248973824} \approx 0.028716040339980885$\\
\hline
$\left(\frac{1}{34},\frac{1}{33}\right]$ & &$\frac{102768396062495}{3483252836794368} \approx 0.02950357062138291$\\
\hline
$\left(\frac{1}{33},\frac{1}{32}\right]$ & &$\frac{307394798055113}{10133099161583616} \approx 0.03033571399562549$\\
\hline
$\left(\frac{1}{32},\frac{1}{31}\right]$ & &$\frac{4347649429067481}{139611588448485376} \approx 0.03114103547838151$\\
\hline
$\left(\frac{1}{31},\frac{1}{30}\right]$ & &$\frac{8690492917356203}{270215977642229760} \approx 0.03216128444063568$\\
\hline
$\left(\frac{1}{30},\frac{1}{29}\right]$ & &$\frac{4363440256637521}{130604389193744384} \approx 0.03340959889307088$\\
\hline
$\left(\frac{1}{29},\frac{1}{28}\right]$ & &$\frac{68900938265833}{1970324836974592} \approx 0.03496932940845911$\\
\hline
$\left(\frac{1}{28},\frac{1}{27}\right]$ & &$\frac{961250425333285}{26599385299156992} \approx 0.03613806915168636$\\
\hline
$\left(\frac{1}{27},\frac{1}{26}\right]$ & &$\frac{478909805672573}{12807111440334848} \approx 0.03739405313240967$\\
\hline
$\left(\frac{1}{26},\frac{1}{25}\right]$ & &$\frac{305262690079423}{7881299347898368} \approx 0.03873253338116443$\\
\hline
$\left(\frac{1}{25},\frac{1}{24}\right]$ & &$\frac{68643357099235}{1688849860263936} \approx 0.04064503228753994$\\
\hline
$\left(\frac{1}{24},\frac{1}{23}\right]$ & &$\frac{823067069583703}{19421773393035264} \approx 0.04237857444464153$\\
\hline
$\left(\frac{1}{23},\frac{1}{22}\right]$ & &$\frac{3276985629950113}{74309393851613184} \approx 0.044099210881653195$\\
\hline
$\left(\frac{1}{22},\frac{1}{21}\right]$ & &$\frac{134417993107429931}{2882303761517117440} \approx 0.04663560964742947$\\
\hline
$\left(\frac{1}{21},\frac{1}{20}\right]$ & &$\frac{105318112328898389}{2161727821137838080} \approx 0.04871941384066732$\\
\hline
$\left(\frac{1}{20},\frac{1}{19}\right]$ & &$\frac{143717784221929}{2814749767106560} \approx 0.0510588137892146$\\
\hline
$\left(\frac{1}{19},\frac{1}{18}\right]$ & &$\frac{5448380985431057}{101330991615836160} \approx 0.05376816015071519$\\
\hline
$\left(\frac{1}{18},\frac{1}{17}\right]$ & &$\frac{2205500045607573}{38280596832649216} \approx 0.057614045445773185$\\
\hline
$\left(\frac{1}{17},\frac{1}{16}\right]$ & &$\frac{273738166045219}{4503599627370496} \approx 0.06078208293241327$\\
\hline
$\left(\frac{1}{16},\frac{1}{15}\right]$ & &$\frac{1454283271924637}{22517998136852480} \approx 0.06458315091271757$\\
\hline
$\left(\frac{1}{15},\frac{1}{14}\right]$ & &$\frac{569257302249594179}{8070450532247928832} \approx 0.07053600043454257$\\
\hline
$\left(\frac{1}{14},\frac{1}{13}\right]$ & &$\frac{2202625446960337}{29273397577908224} \approx 0.07524324571817362$\\
\hline
$\left(\frac{1}{13},\frac{1}{12}\right]$ & &$\frac{136648956798617}{1688849860263936} \approx 0.08091243633537755$\\
\hline
$\left(\frac{1}{12},\frac{1}{11}\right]$ & &$\frac{6554353882386521}{74309393851613184} \approx 0.08820357080929456$\\
\hline
$\left(\frac{1}{11},\frac{1}{10}\right]$ & &$\frac{280027561816137}{2814749767106560} \approx 0.09948577493053427$\\
\hline
$\left(\frac{1}{10},\frac{1}{9}\right]$ & &$\frac{4425169584558011}{40532396646334464} \approx 0.10917611468104983$\\
\hline
$\left(\frac{1}{9},\frac{1}{8}\right]$ & &$\frac{272470868393599}{2251799813685248} \approx 0.12100137265207378$\\
\hline
$\left(\frac{1}{8},\frac{11}{83}\right]$ & &$\frac{133818247543761}{985162418487296} \approx 0.1358336910062375$\\
\hline
$\left(\frac{11}{83},\frac{1}{7}\right]$ & &$\frac{1115619350201202071}{8070450532247928832} \approx 0.13823507693200116$\\
\hline
$\left(\frac{1}{7},\frac{12}{83}\right]$ & &$\frac{173995232712349}{1125899906842624} \approx 0.15453881082580967$\\
\hline
$\left(\frac{12}{83},\frac{3}{20}\right]$ & &$\frac{2816599984716889}{18014398509481984} \approx 0.1563527077095777$\\
\hline
$\left(\frac{3}{20},\frac{1}{6}\right]$ & &$\frac{2211215673518099}{13510798882111488} \approx 0.16366283687678776$\\
\hline
$\left(\frac{1}{6},\frac{15}{88}\right]$ & &$\frac{18001643316976137}{90071992547409920} \approx 0.1998583889159648$\\
\hline
$\left(\frac{15}{88},\frac{1}{5}\right]$ & &$\frac{1}{5} = 0.2$\\
\hline
$\left(\frac{1}{5},\frac{97}{480}\right]$ & &$\frac{275783289409279}{1125899906842624} \approx 0.2449447661672357$\\
\hline
$\left(\frac{97}{480},\frac{1}{4}\right]$ & &$\frac{138176908255481}{562949953421312} \approx 0.2454514960268046$\\
\hline
$\left(\frac{1}{4},\frac{271}{960}\right]$ & &$\frac{1080920410736029}{3377699720527872} \approx 0.32001672740971215$\\
\hline
$\left(\frac{271}{960},\frac{1}{3}\right]$ & &$\frac{24010918137620757}{72057594037927936} \approx 0.33321842698470433$\\
\hline
$\left(\frac{1}{3},\frac{17}{50}\right]$ & &$\frac{505380611144679}{1125899906842624} \approx 0.4488681525535645$\\
\hline
$\left(\frac{17}{50},1-a\right]$ & & $u=\frac{4804339}{8388608} \approx 0.5727218389511108$\\
\hline
$\left(1-a,\frac{653}{1920}\right]$ & &$v=\frac{4945169}{8388608} \approx 0.5895100831985474$\\
\hline
$\left(\frac{653}{1920},\frac{7}{20}\right]$ & &$\frac{78181827}{134217728} \approx 0.5825000032782555$\\
\hline
$\left(\frac{7}{20},\frac{59}{166}\right]$ & &$\frac{19023851}{33554432} \approx 0.5669549405574799$\\
\hline
$\left(\frac{59}{166},\frac{43}{120}\right]$ & &$\frac{150909061}{268435456} \approx 0.5621800608932972$\\
\hline
$\left(\frac{43}{120},\frac{3}{7}\right]$ & &$\frac{150095589}{268435456} \approx 0.5591496415436268$\\
\hline
$\left(\frac{3}{7},\frac{1}{2}\right]$ & &$\frac{1}{2} = 0.5$\\
\hline
$\left(\frac{1}{2},a\right)$ & & $w= \frac{13587699}{16777216} \approx 0.8098899722099304$\\
\hline
$\left[a,1\right]$ & & $1$\\
\hline
\end{longtable}
\end{center}

\subsection{Weight function for the scenario interval $\boldsymbol{(3/7,1/2]}$}

In this section we discuss the threshold class ${(3/7\approx
0.42857,1/2 =0.5]}$.

A set of items from $\Delta$ resulting in a large weight is as
follows:
$$\frac{1}{2}, \  \frac {1}{3}, \ \frac 1{11},  \ \frac 1{37}, \ \frac
1{40},\ \frac 1{43}, \mbox{ \ (and tiny items of total size \
}\frac{997}{2100120}) \ .$$

The total weight of these items is $1+\frac{56035901}{134217728}+
\frac{15032567}{167772160}+\frac{61605937}{2415919104}+\frac{20486803}{872415232}+\frac{30706159}{1409286144}
+\frac{1209038869}{1409286144}\cdot\frac {997}{2100120} \approx
1.578279665$.

Other sets of items give smaller bounds. For example, one could
expect that at least one of $\{\frac 12$, $\frac 37$, $\frac
1{15}\}$, and $\{\frac 12$, $\frac 13$, $\frac {12}{83}\}$ should
be a subset of $\Delta$ resulting in a large weight. However, the
resulting knapsack solutions are $$1+\frac{8388625}{16777216}
+\frac{60903479239}{962072674304}+\frac{1209038869}{1409286144}\cdot
\frac{1}{210}\approx 1.567391$$ and $1+\frac{56035901}{134217728}
+\frac{301731089}{2147483648}+\frac{1209038869}{1409286144}\cdot
\frac{11}{498}\approx 1.576955$, respectively.

One can apply arguments of linear programming to observe that as
in this case $u=v$ and $w=1$, there is a single set giving the
maximum in this scenario, unlike the previous case we have
considered.



\renewcommand{\arraystretch}{1.2}
\begin{center}
\renewcommand{\arraystretch}{1.2}
\begin{longtable}{|  c   l   c |}
\hline
Interval & & Weight \\
\hline
$\left(0,\frac{1}{43}\right]$ & & $\rho=\frac{1209038869}{1409286144} \approx 0.8579087179331553$\\
\hline
$\left(\frac{1}{43},\frac{1}{42}\right]$ & &$\frac{30706159}{1409286144} \approx 0.02178844880490076$\\
\hline
$\left(\frac{1}{42},\frac{1}{41}\right]$ & &$\frac{122544959}{5502926848} \approx 0.02226905106771283$\\
\hline
$\left(\frac{1}{41},\frac{1}{40}\right]$ & &$\frac{122230883}{5368709120} \approx 0.022767276130616666$\\
\hline
$\left(\frac{1}{40},\frac{1}{39}\right]$ & &$\frac{20486803}{872415232} \approx 0.023482857988430902$\\
\hline
$\left(\frac{1}{39},\frac{1}{38}\right]$ & &$\frac{122602323}{5100273664} \approx 0.024038381286357578$\\
\hline
$\left(\frac{1}{38},\frac{1}{37}\right]$ & &$\frac{123206617}{4966055936} \approx 0.024809752163049335$\\
\hline
$\left(\frac{1}{37},\frac{1}{36}\right]$ & &$\frac{61605937}{2415919104} \approx 0.025499999937083986$\\
\hline
$\left(\frac{1}{36},\frac{1}{35}\right]$ & &$\frac{123079213}{4697620480} \approx 0.02620033132178443$\\
\hline
$\left(\frac{1}{35},\frac{1}{34}\right]$ & &$\frac{61475797}{2281701376} \approx 0.026942963547566357$\\
\hline
$\left(\frac{1}{34},\frac{1}{33}\right]$ & &$\frac{1914511}{69206016} \approx 0.02766393892692797$\\
\hline
$\left(\frac{1}{33},\frac{1}{32}\right]$ & &$\frac{61049939}{2147483648} \approx 0.028428593184798956$\\
\hline
$\left(\frac{1}{32},\frac{1}{31}\right]$ & &$\frac{29141917}{1040187392} \approx 0.02801602598159544$\\
\hline
$\left(\frac{1}{31},\frac{1}{30}\right]$ & &$\frac{117779207}{4026531840} \approx 0.029250782479842505$\\
\hline
$\left(\frac{1}{30},\frac{1}{29}\right]$ & &$\frac{30195047}{973078528} \approx 0.0310304319036418$\\
\hline
$\left(\frac{1}{29},\frac{1}{28}\right]$ & &$\frac{15316403}{469762048} \approx 0.032604598573275974$\\
\hline
$\left(\frac{1}{28},\frac{1}{27}\right]$ & &$\frac{7625221}{226492416} \approx 0.03366656215102584$\\
\hline
$\left(\frac{1}{27},\frac{1}{26}\right]$ & &$\frac{7596363}{218103808} \approx 0.03482911678460928$\\
\hline
$\left(\frac{1}{26},\frac{1}{25}\right]$ & &$\frac{2419405}{67108864} \approx 0.03605194389820099$\\
\hline
$\left(\frac{1}{25},\frac{1}{24}\right]$ & &$\frac{15060049}{402653184} \approx 0.03740203628937403$\\
\hline
$\left(\frac{1}{24},\frac{1}{23}\right]$ & &$\frac{7606395}{192937984} \approx 0.039424041043157165$\\
\hline
$\left(\frac{1}{23},\frac{1}{22}\right]$ & &$\frac{60503345}{1476395008} \approx 0.04098045893690803$\\
\hline
$\left(\frac{1}{22},\frac{1}{21}\right]$ & &$\frac{4399334281}{103079215104} \approx 0.04267915967890682$\\
\hline
$\left(\frac{1}{21},\frac{1}{20}\right]$ & &$\frac{7661091819}{171798691840} \approx 0.04459342348272912$\\
\hline
$\left(\frac{1}{20},\frac{1}{19}\right]$ & &$\frac{1575905}{33554432} \approx 0.04696562886238098$\\
\hline
$\left(\frac{1}{19},\frac{1}{18}\right]$ & &$\frac{120535379}{2415919104} \approx 0.049892142001125545$\\
\hline
$\left(\frac{1}{18},\frac{1}{17}\right]$ & &$\frac{29961703}{570425344} \approx 0.05252519600531634$\\
\hline
$\left(\frac{1}{17},\frac{1}{16}\right]$ & &$\frac{59408913}{1073741824} \approx 0.05532886181026697$\\
\hline
$\left(\frac{1}{16},\frac{1}{15}\right]$ & &$\frac{19934433}{335544320} \approx 0.059409239888191225$\\
\hline
$\left(\frac{1}{15},\frac{1}{14}\right]$ & &$\frac{60903479239}{962072674304} \approx 0.06330444764275203$\\
\hline
$\left(\frac{1}{14},\frac{1}{13}\right]$ & &$\frac{117582149}{1744830464} \approx 0.06738886752953896$\\
\hline
$\left(\frac{1}{13},\frac{1}{12}\right]$ & &$\frac{29485379}{402653184} \approx 0.0732277308901151$\\
\hline
$\left(\frac{1}{12},\frac{1}{11}\right]$ & &$\frac{121031785}{1476395008} \approx 0.08197791535745967$\\
\hline
$\left(\frac{1}{11},\frac{1}{10}\right]$ & &$\frac{15032567}{167772160} \approx 0.08960108160972595$\\
\hline
$\left(\frac{1}{10},\frac{1}{9}\right]$ & &$\frac{118042513}{1207959552} \approx 0.09772058410776986$\\
\hline
$\left(\frac{1}{9},\frac{1}{8}\right]$ & &$\frac{58147653}{536870912} \approx 0.1083084437996149$\\
\hline
$\left(\frac{1}{8},\frac{11}{83}\right]$ & &$\frac{3578045}{29360128} \approx 0.1218674864087786$\\
\hline
$\left(\frac{11}{83},\frac{1}{7}\right]$ & &$\frac{62045542539}{481036337152} \approx 0.12898306790365105$\\
\hline
$\left(\frac{1}{7},\frac{12}{83}\right]$ & &$\frac{9290597}{67108864} \approx 0.13844068348407745$\\
\hline
$\left(\frac{12}{83},\frac{3}{20}\right]$ & &$\frac{301731089}{2147483648} \approx 0.14050448732450604$\\
\hline
$\left(\frac{3}{20},\frac{1}{6}\right]$ & &$\frac{7368719}{50331648} \approx 0.14640329281489053$\\
\hline
$\left(\frac{1}{6},\frac{15}{88}\right]$ & &$\frac{1058367287}{5368709120} \approx 0.19713626932352782$\\
\hline
$\left(\frac{15}{88},\frac{1}{5}\right]$ & &$\frac{1}{5} = 0.2$\\
\hline
$\left(\frac{1}{5},\frac{97}{480}\right]$ & &$\frac{61444293}{268435456} \approx 0.2288978286087513$\\
\hline
$\left(\frac{97}{480},\frac{1}{4}\right]$ & &$\frac{31006051}{134217728} \approx 0.23101308196783066$\\
\hline
$\left(\frac{1}{4},\frac{271}{960}\right]$ & &$\frac{60869221}{201326592} \approx 0.30234069128831226$\\
\hline
$\left(\frac{271}{960},\frac{1}{3}\right]$ & &$\frac{1430507165}{4294967296} \approx 0.3330659039784223$\\
\hline
$\left(\frac{1}{3},\frac{7}{20}\right]$ & &$\frac{56035901}{134217728} \approx 0.41749999672174454$\\
\hline
$\left(\frac{7}{20},\frac{59}{166}\right]$ & &$\frac{14530581}{33554432} \approx 0.43304505944252014$\\
\hline
$\left(\frac{59}{166},\frac{43}{120}\right]$ & &$\frac{117526395}{268435456} \approx 0.4378199391067028$\\
\hline
$\left(\frac{43}{120},\frac{3}{7}\right]$ & &$\frac{118339867}{268435456} \approx 0.4408503584563732$\\
\hline
$\left(\frac{3}{7},\frac{1}{2}\right]$ & & $u=v=\frac{8388625}{16777216} \approx 0.5000010132789612$\\
\hline
$\left(\frac{1}{2},1\right]$ & & $1$\\
\hline
\end{longtable}
\end{center}

\subsection{Weight function for the scenario interval $\boldsymbol{(2/9,3/13]}$}

In this section we discuss the threshold class ${(2/9\approx
0.2222,3/13\approx 0.23076923]}$.

Consider the two multisets in $\Delta$: $\{\frac 14, \frac 14,
\frac 14, \frac 3{20}, \frac 1{11}\}$ and $\{\frac 14, \frac 19,
\frac 19, \frac 19, \frac 19, \frac 1{10}, \frac 1{10}, \frac
1{10}\}$. The resulting knapsack solutions are $$3 \cdot
\frac{39794075}{100663296}+\frac{102325387597129273}{432345564227567616}
+\frac{25427139196783563}{180143985094819840}
+\frac{1400445382739}{945631002624} \cdot \frac{1}{110}\approx
1.5772432$$ and $$\frac{39794075}{100663296}+ 4
\cdot\frac{93992497}{536870912}+ 3 \cdot
\frac{408463500052300289}{2594073385365405696}
+\frac{1400445382739}{945631002624} \cdot \frac{1}{180}\approx
1.576226 \ , $$ respectively. Note that these two multisets giving
the worst case for this scenario have relatively small items.



\begin{center}
\renewcommand{\arraystretch}{1.2}
\begin{longtable}{|  c   l   c |}
\hline
Interval & & Weight \\
\hline
$\left(0,\frac{1}{43}\right]$ & & $\rho=\frac{1400445382739}{945631002624} \approx 1.4809639054271178$\\
\hline
$\left(\frac{1}{43},\frac{1}{42}\right]$ & &$\frac{457395215}{15502147584} \approx 0.02950528064073422$\\
\hline
$\left(\frac{1}{42},\frac{1}{41}\right]$ & &$\frac{913289039}{30266097664} \approx 0.0301753152698741$\\
\hline
$\left(\frac{1}{41},\frac{1}{40}\right]$ & &$\frac{1824013513}{59055800320} \approx 0.030886272019283338$\\
\hline
$\left(\frac{1}{40},\frac{1}{39}\right]$ & &$\frac{74696187}{2399141888} \approx 0.031134543302175882$\\
\hline
$\left(\frac{1}{39},\frac{1}{38}\right]$ & &$\frac{83370431}{2550136832} \approx 0.03269253239819878$\\
\hline
$\left(\frac{1}{38},\frac{1}{37}\right]$ & &$\frac{1639477277}{49660559360} \approx 0.03301366915976679$\\
\hline
$\left(\frac{1}{37},\frac{1}{36}\right]$ & &$\frac{407082371}{12079595520} \approx 0.033700000163581635$\\
\hline
$\left(\frac{1}{36},\frac{1}{35}\right]$ & &$\frac{748780471}{21139292160} \approx 0.03542126507040054$\\
\hline
$\left(\frac{1}{35},\frac{1}{34}\right]$ & &$\frac{248268817}{6845104128} \approx 0.03626954570120456$\\
\hline
$\left(\frac{1}{34},\frac{1}{33}\right]$ & &$\frac{969073}{25952256} \approx 0.03734060730596985$\\
\hline
$\left(\frac{1}{33},\frac{1}{32}\right]$ & &$\frac{743335051}{19327352832} \approx 0.038460261861069336$\\
\hline
$\left(\frac{1}{32},\frac{1}{31}\right]$ & &$\frac{369923301}{8321499136} \approx 0.04445392530291311$\\
\hline
$\left(\frac{1}{31},\frac{1}{30}\right]$ & &$\frac{717694643}{16106127360} \approx 0.04456034818043311$\\
\hline
$\left(\frac{1}{30},\frac{1}{29}\right]$ & &$\frac{338982541}{7784628224} \approx 0.043545116252940275$\\
\hline
$\left(\frac{1}{29},\frac{1}{28}\right]$ & &$\frac{21159655}{469762048} \approx 0.045043347137314935$\\
\hline
$\left(\frac{1}{28},\frac{1}{27}\right]$ & &$\frac{18496999}{396361728} \approx 0.04666696528278331$\\
\hline
$\left(\frac{1}{27},\frac{1}{26}\right]$ & &$\frac{73772911}{1526726656} \approx 0.048320968727489096$\\
\hline
$\left(\frac{1}{26},\frac{1}{25}\right]$ & &$\frac{11779787}{234881024} \approx 0.05015214426176889$\\
\hline
$\left(\frac{1}{25},\frac{1}{24}\right]$ & &$\frac{21928717}{402653184} \approx 0.054460557798544564$\\
\hline
$\left(\frac{1}{24},\frac{1}{23}\right]$ & &$\frac{63629269}{1157627904} \approx 0.05496521704438804$\\
\hline
$\left(\frac{1}{23},\frac{1}{22}\right]$ & &$\frac{31771343}{553648128} \approx 0.057385442834911926$\\
\hline
$\left(\frac{1}{22},\frac{1}{21}\right]$ & &$\frac{11711682431972778577}{184467440737095516160} \approx 0.063489157681026$\\
\hline
$\left(\frac{1}{21},\frac{1}{20}\right]$ & &$\frac{9171943871524336559}{138350580552821637120} \approx 0.06629494314281197$\\
\hline
$\left(\frac{1}{20},\frac{1}{19}\right]$ & &$\frac{5745881}{83886080} \approx 0.06849623918533325$\\
\hline
$\left(\frac{1}{19},\frac{1}{18}\right]$ & &$\frac{848959177}{12079595520} \approx 0.07028043079707358$\\
\hline
$\left(\frac{1}{18},\frac{1}{17}\right]$ & &$\frac{180923205}{2281701376} \approx 0.07929311298272189$\\
\hline
$\left(\frac{1}{17},\frac{1}{16}\right]$ & &$\frac{90208717}{1073741824} \approx 0.08401341456919909$\\
\hline
$\left(\frac{1}{16},\frac{1}{15}\right]$ & &$\frac{116265557}{1342177280} \approx 0.08662459030747413$\\
\hline
$\left(\frac{1}{15},\frac{1}{14}\right]$ & &$\frac{52175875670169513497}{516508834063867445248} \approx 0.101016424558806$\\
\hline
$\left(\frac{1}{14},\frac{1}{13}\right]$ & &$\frac{94834829}{872415232} \approx 0.10870377490153679$\\
\hline
$\left(\frac{1}{13},\frac{1}{12}\right]$ & &$\frac{45761591}{402653184} \approx 0.11365014066298802$\\
\hline
$\left(\frac{1}{12},\frac{1}{11}\right]$ & &$\frac{63517591}{553648128} \approx 0.1147255590467742$\\
\hline
$\left(\frac{1}{11},\frac{1}{10}\right]$ & &$\frac{25427139196783563}{180143985094819840} \approx 0.14114897693308961$\\
\hline
$\left(\frac{1}{10},\frac{1}{9}\right]$ & &$\frac{408463500052300289}{2594073385365405696} \approx 0.15746027169341759$\\
\hline
$\left(\frac{1}{9},\frac{1}{8}\right]$ & &$\frac{93992497}{536870912} \approx 0.17507466860115528$\\
\hline
$\left(\frac{1}{8},\frac{11}{83}\right]$ & &$\frac{11469903}{58720256} \approx 0.1953312839780535$\\
\hline
$\left(\frac{11}{83},\frac{1}{7}\right]$ & &$\frac{91756671166710256085}{516508834063867445248} \approx 0.17764782539104518$\\
\hline
$\left(\frac{1}{7},\frac{12}{83}\right]$ & &$\frac{7486619}{33554432} \approx 0.2231186330318451$\\
\hline
$\left(\frac{12}{83},\frac{3}{20}\right]$ & &$\frac{258005503388278987}{1152921504606846976} \approx 0.2237841018294306$\\
\hline
$\left(\frac{3}{20},\frac{1}{6}\right]$ & &$\frac{102325387597129273}{432345564227567616} \approx 0.2366750027375549$\\
\hline
$\left(\frac{1}{6},\frac{15}{88}\right]$ & &$\frac{1218245507296398603}{5764607523034234880} \approx 0.21133190810103372$\\
\hline
$\left(\frac{15}{88},\frac{1}{5}\right]$ & &$\frac{1}{5} = 0.2$\\
\hline
$\left(\frac{1}{5},\frac{97}{480}\right]$ & &$\frac{22523859384760717}{72057594037927936} \approx 0.3125813411547595$\\
\hline
$\left(\frac{97}{480},\frac{1}{4}\right]$ & &$\frac{41199575}{134217728} \approx 0.306960754096508$\\
\hline
$\left(\frac{1}{4},\frac{271}{960}\right]$ & &$\frac{39794075}{100663296} \approx 0.39531861742337543$\\
\hline
$\left(\frac{271}{960},\frac{1}{3}\right]$ & &$\frac{716976483}{2147483648} \approx 0.33386819204315543$\\
\hline
$\left(\frac{1}{3},\frac{7}{20}\right]$ & &$\frac{24295923874979269}{144115188075855872} \approx 0.16858683806588773$\\
\hline
$\left(\frac{7}{20},\frac{59}{166}\right]$ & &$\frac{4958652829951325}{36028797018963968} \approx 0.1376302635733663$\\
\hline
$\left(\frac{59}{166},\frac{43}{120}\right]$ & &$\frac{36928515705467059}{288230376151711744} \approx 0.12812152625450385$\\
\hline
$\left(\frac{43}{120},\frac{3}{7}\right]$ & &$\frac{150095589}{268435456} \approx 0.5591496415436268$\\
\hline
$\left(\frac{3}{7},\frac{1}{2}\right]$ & &$\frac{1}{2} = 0.5$\\
\hline
$\left(\frac{1}{2},a\right)$ & &$w=\frac{9224745}{1073741824} \approx 0.008591213263571262$\\
\hline
$\left[a,1\right]$ & & $1$\\
\hline
\end{longtable}
\end{center}

\section{A short discussion of the results of \cite{HvSbparxiv}}
\label{rvssh} We had some difficulties in verifying the result of
\cite{HvSbparxiv}. Here, we only address some issues in this last
manuscript. Both in the conference proceedings version
\cite{HvS16} and in the arxiv version published in September 2016
\cite{HvSbparxiv}, the authors use a linear program and its dual
(see page 27 of \cite{HvSbparxiv}). The dual variables $y_1$,
$y_2$ should be non negative (as the corresponding primal
constraints are inequalities). However, instead of proceeding to
solving the dual (or finding a feasible solution for it) the
authors just fix those variables to values that can be negative in
many of the cases. Going back to their primal linear program,
these negative values mean that the authors assume (without
providing a proof for it) that a pair of patterns (specified in
advance) are critical in all scenarios. There is no clear reason
as for why this should hold for their setting and we note that in
our setting (which is related to their setting, as we also try to
pack large items together with huge items as much as possible, and
their critical patterns are those where an optimal solution packs
such a pair of items together, leaving just a little gap for tiny
items according to their definition of tiny). The corresponding
patterns of our algorithm and its analysis are in fact not
critical in many of the scenarios.

We do not see a simple way to fix this flaw. Setting these dual
variables to zeroes instead of the resulting negative values would
be incorrect either as the resulting solution possibly becomes
infeasible for some scenarios. We believe that due to this, not
all required calculations were done, a solution for the dual
linear program was not found for all scenarios, and it is possible
that the true competitive ratio is higher (maybe even higher than
1.583333, the barrier they claim to break). Moreover, there should
be additional linear programs for other cases which are not
presented in the manuscripts (and do not appear in the accessible
additional data that the first author provides on her web page),
and we cannot tell if any problems of this kind occurs there as
well.


\bibliographystyle{abbrv}

\end{document}